\documentclass[11pt,letterpaper]{article}
\usepackage[T1]{fontenc}
\usepackage[utf8]{inputenc}
\usepackage[margin=1in]{geometry}
\usepackage{amsmath,amssymb,amsthm}
\usepackage{newtxtext,newtxmath}

\usepackage{mathtools,bm}
\usepackage{graphicx,array,booktabs}
\usepackage{flafter}
\usepackage[font=small,labelfont=bf]{caption}
\usepackage[authoryear,round]{natbib}
\bibpunct[, ]{(}{)}{,}{a}{}{,}
\usepackage[hidelinks]{hyperref}
\hypersetup{
  pdftitle={Stability and Optimal Design of Adaptive Linear Bus Holding on Open Chains and Loops},
  pdfauthor={Li Zhen, Weihua Gu, Minyu Shen, Wenbo Fan, Feng Xiao},
  pdfsubject={Bus holding control and transit service reliability},
  pdfkeywords={bus bunching, optimal holding control, schedule and headway reliability, finite fleet, boundary conditions, Fourier analysis}
}
\theoremstyle{plain}
\newtheorem{proposition}{Proposition}
\newtheorem{corollary}{Corollary}
\theoremstyle{remark}
\newtheorem{remark}{Remark}
\DeclareMathOperator{\Var}{Var}
\DeclareMathOperator{\tr}{tr}
\DeclareMathOperator{\diag}{diag}

\title{Stability and Optimal Design of Adaptive Linear Bus Holding on Open Chains and Loops}
\author{%
\normalsize Li Zhen\textsuperscript{1}\quad
Weihua Gu\textsuperscript{1}\thanks{Corresponding author: Weihua Gu. Email: \href{mailto:weihua.gu@polyu.edu.hk}{weihua.gu@polyu.edu.hk}.}\quad
Minyu Shen\textsuperscript{2}\quad
Wenbo Fan\textsuperscript{1}\quad
Feng Xiao\textsuperscript{3}\\[0.8em]
\parbox{\dimexpr\textwidth-2\tabcolsep\relax}{\centering\small
\textsuperscript{1}Department of Electrical and Electronic Engineering,\\
The Hong Kong Polytechnic University, Hong Kong, China\\[0.4em]
\textsuperscript{2}School of Management Science and Engineering,\\
Southwestern University of Finance and Economics, Chengdu, China\\[0.4em]
\textsuperscript{3}Business School, Sichuan University, Chengdu, China}}
\date{}

\begin{document}
\maketitle
\begin{abstract}
Bus holding can suppress bunching, but its stability and performance depend on the feedback structure and on how the bus sequence is bounded. This paper develops a common analytical framework for adaptive linear holding on finite open chains and loops and relates both to the bi-infinite system studied in the literature. Matrix and Fourier methods yield necessary and sufficient stability conditions for current- and leading-bus kernels on open chains and general cyclic kernels on loops, together with computable covariance bounds on open chains and exact variance formulas on loops. The analysis shows when periodic closure changes stability, when coefficient-sum stability tests in the literature are conservative, and why a stable open chain can still accumulate substantial variance before boundary effects arrive. For loops with same-lap feedback, modal analysis links control coefficients to fleet-wide schedule drift and to relative headway patterns, explaining the behavior of existing rules and yielding closed-form Pareto-optimal designs for schedule--holding and headway--holding tradeoffs. Cross-lap output corrections relate these designs to physical headway and holding performance. Numerical experiments show that the analytical design and a three-tap approximation, both obtained from closed-form coefficients and a one-dimensional search, outperform four established controls at the tested moderate and loose headway-variance budgets, with holding variances within 0.35\% and 3.5\% of the full cyclic numerical frontier across these budgets and other parameter variations. The framework thus provides a tractable design procedure and an explanation of how feedback structure and boundary conditions shape service reliability and holding effort.
\end{abstract}

\noindent\textbf{Keywords:} bus bunching; optimal holding control; schedule and headway reliability; finite fleet; boundary conditions; Fourier analysis

\section{Introduction}\label{sec:introduction}

Bus bunching is common in high-frequency bus services. A bus that falls behind picks up more passengers, dwells longer, and falls further behind. Its follower picks up fewer passengers and catches up. This process increases passenger wait, produces uneven loads, and blocks following buses' access to berths \citep{newell1964,shen2023berth}. Bus holding is a widely studied way to control bunching. At a control point, a bus can be held when it is early, too close to the bus ahead, or too far from the bus behind. Holding can improve schedule and headway reliability, but it also delays onboard passengers and often requires planned slack in its schedule \citep{barnett1974,daganzo2009,xuan2011}. A holding policy must therefore balance service reliability against holding cost.

Bus operations form a dynamical system in which disturbances \citep[e.g.,][]{gu2014signals} propagate between neighboring buses and onward through the fleet. Adaptive holding adds feedback by adjusting holding times to current operating conditions. System stability and performance depend jointly on the holding rule and whether the bus sequence ends or wraps around. Connecting the last bus back to the first can give disturbances an additional path to travel among buses, potentially changing whether a holding rule suppresses or amplifies them.

Two common operating settings, open chains and loops, illustrate this distinction. On an open corridor, each trip runs from one terminal to another, with terminal layover time commonly used to absorb running-time variation \citep{tflguidelines}. Figure~\ref{fig:three-systems}(a) represents this as a finite open-chain corridor of terminal-dispatched trips, with a first bus, a last bus, and no wrap-around. On a loop corridor, the same physical fleet repeatedly circulates through a closed route, as in urban circulator and shuttle services \citep{ftacirculator}. Figure~\ref{fig:three-systems}(b) represents this as a finite periodic fleet whose last and first buses are adjacent, so disturbances can pass from the last bus back to the first. In addition to these two operating settings, the literature considers a theoretical idealization: the bi-infinite system \citep{xuan2011}, whose bus sequence extends indefinitely in both directions, as illustrated in Figure~\ref{fig:three-systems}(c).

\begin{figure}[htbp]
\centering
\includegraphics[width=\textwidth]{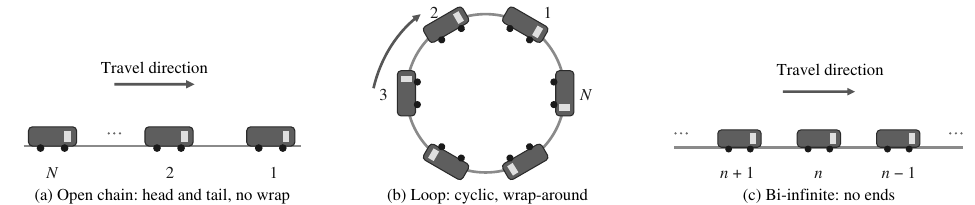}
\caption{Three boundary settings for bus sequences.\label{fig:three-systems}}

\end{figure}

These settings impose different boundary conditions on interactions between buses: interactions stop at the ends of an open chain, connect the last bus to the first on a loop, and extend indefinitely in the bi-infinite idealization. Under linear holding rules which adjust holding times using weighted sums of schedule or headway deviations \citep{xuan2011}, the same weights (i.e., control coefficients) link buses differently across these settings. As shown below, coefficients that stabilize an open chain can make a loop unstable, or vice versa. Control design and performance assessment must therefore account for how the bus sequence is bounded.

The bus-control literature uses all three settings, but the boundary condition is usually treated as part of the chosen model rather than as a factor to be compared. \citet{daganzo2009} studies a one-sided infinite bus sequence, \citet{xuan2011} analyze a bi-infinite system, and other studies examine finite periodic fleets on loops \citep{daganzo2011,bartholdi2012,liang2016,zhang2018}. \citet{daganzo2011} suggest that open-route results can approximate those of a sufficiently long loop when disturbances dissipate over much shorter distances. For the general linear holding family, however, existing studies do not provide a common framework relating boundary conditions and control coefficients to stability and performance. Analytical optimization of the reliability--holding tradeoff also remains largely confined to special cases.

This paper studies homogeneous bus operations under the three boundary settings using this linear holding family, which includes widely studied schedule- and headway-based rules. Here, homogeneous means the scheduled headway, demand intensity, disturbance variance, and control coefficients are fixed across buses and control points. The linear structure of holding rules allows control coefficients to be related directly to stability, performance, and optimal design.

Four questions remain. \textbf{(1)} How do the boundary conditions of finite bus sequences and feedback coefficients jointly determine stability? \textbf{(2)} Can schedule and headway reliability and holding effort be quantified, and their tradeoffs optimized analytically? \textbf{(3)} What explains differences in the performance of holding rules, and how does the bi-infinite idealization differ from and relate to finite-fleet systems (open chain or loop)? \textbf{(4)} To limit the information required for each holding decision, can a rule using only the current bus and its immediate neighbors approach the optimal headway--holding tradeoff?

Our major contributions are:

\begin{enumerate}
\item \textbf{Exact stability conditions.} Matrix analysis and the Discrete Fourier Transform (DFT) yield necessary and sufficient stability conditions for general leader-only kernels (using deviations from the current bus and buses ahead) on finite open chains and general cyclic kernels on finite loops. To our knowledge, these are the first such characterizations of each full kernel family. On leader-only open chains, stability depends only on the current-bus coefficient, the sole kernel coefficient retained by Simple Control (SC). On loops, the full kernel and fleet size enter the stability condition. The same kernel can therefore be stable under one boundary and unstable under another.

\item \textbf{Analytical performance and optimal design.} On loops, Fourier decomposition yields exact performance formulas and closed-form Pareto designs over the full cyclic coefficient family (i.e., for linear holding rules that can use deviations from all buses), covering both schedule--holding and headway--holding tradeoffs. For open chains, we derive steady-state covariance bounds for general leader-only kernels.

\item \textbf{Mechanisms behind control performance.} Modal analysis links control coefficients to fleet-wide schedule drift and alternating short--long headway patterns. It explains why headway-based rules can maintain regular spacing while schedule deviations grow. For a fixed kernel, the finite-loop stability criterion approaches the bi-infinite criterion as fleet size grows. The same analysis shows how cancellation between coefficients can make the sufficient stability condition in \citet{xuan2011} conservative.

\item \textbf{Near-optimal local control.} Numerical comparisons of headway and holding variances show that a three-tap approximation to the analytical kernel, using the current bus and its immediate neighbors, closely approaches the numerically optimized full cyclic frontier and outperforms the named control laws at matched headway reliability in the tested moderate and loose budget ranges. {A separate same-lap sensitivity analysis accounts for prescribed independent additive errors in future bus information. Retuning a single design parameter preserves savings over SC in the moderate-error cases studied, whereas larger errors combined with tight headway-variance budgets can eliminate that advantage.}
\end{enumerate}

Section~\ref{sec:literature} reviews and classifies relevant bus-control studies by information structure and boundary conditions of the bus sequence. Section~\ref{sec:modelclasses} presents the common local bus-motion model and the three boundary conditions. Sections~\ref{sec:openchain} and~\ref{sec:loop} analyze the finite open-chain and loop systems, respectively. Section~\ref{sec:finite-boundary-discussion} discusses finite-horizon boundary effects. Section~\ref{sec:numerics} gives numerical case studies, and Section~\ref{sec:conclusions} concludes.

\section{Literature Review}\label{sec:literature}

Bus-bunching controls include holding, speed regulation \citep{bian2023}, conditional signal priority \citep{anderson2020}, stop-skipping \citep{gkiotsalitis2021}, boarding limits \citep{delgado2012}, and vehicle substitution \citep{petit2019}. In practice, holding can also be realized by reducing the bus's cruising speed. Holding has been studied using optimization-based, simulation-based, empirical, and analytical approaches. Optimization-based studies formulate holding decisions under passenger demand, vehicle capacity, and network constraints \citep{eberlein2001,berrebi2015,hernandez2015,dai2019}, while field studies evaluate control performance and implementability under realistic operating conditions \citep{cats2014,berrebi2018b}.

We focus on the analytical literature because our central questions---disturbance propagation, stability, and covariances---depend directly on the structure of the controlled dynamics \citep[e.g.,][]{daganzo2009,xuan2011,bartholdi2012}. Consistent with our analysis, we organize the review around holding control laws, boundary conditions of the bus sequence, and stability and performance. In addition to SC, we use the following abbreviations: forward-headway control (FHC), backward-headway control (BHC), and two-way-looking headway control (TWHC). Table~\ref{tab:literature} summarizes representative analytical studies.

\begin{table}[t]
\caption{Structural classification of representative analytical bus-control studies.\label{tab:literature}}
\centering
\footnotesize\setlength{\tabcolsep}{2pt}

\begin{tabular}{
>{\raggedright\arraybackslash}p{0.140\linewidth}
>{\raggedright\arraybackslash}p{0.130\linewidth}
>{\raggedright\arraybackslash}p{0.160\linewidth}
>{\raggedright\arraybackslash}p{0.200\linewidth}
>{\raggedright\arraybackslash}p{0.320\linewidth}}
\toprule
Study
& Control law
& Bus sequence
& Boundary treatment
& Stability/performance \\
\midrule

\citet{daganzo2009}
& FHC
& One-sided infinite open chain
& Fixed head boundary
& Uniform headway-variance boundedness \\
\addlinespace[4pt]

\citet{daganzo2011}
& Two-way-looking speed control
& Finite periodic loop
& Loop boundary; fixed cyclic bus ordering
& Spacing-deviation boundedness and variance \\
\addlinespace[4pt]

\citet{xuan2011}
& General linear control; SC
& Bi-infinite system
& No head or tail boundary
& Variance bounds; SC performance and design \\
\addlinespace[4pt]

\citet{bartholdi2012}
& BHC
& Finite periodic loop
& Loop boundary; cyclic re-indexing at a control point
& Deterministic common-headway convergence \\
\addlinespace[4pt]

\citet{argote2015}
& Multiline SC
& One-sided sequences on interacting lines
& No explicit head-boundary condition
& Schedule-deviation variance boundedness \\
\addlinespace[4pt]

\citet{liang2016}
& TWHC
& Finite periodic loop
& Loop boundary; cyclic re-indexing at a control point
& Numerical headway convergence and bunching boundaries \\
\addlinespace[4pt]

\citet{andres2017}
& TWHC with prediction
& Finite open chain
& Fixed head and tail boundaries
& Finite-route worst-case headway-deviation bounds \\
\addlinespace[4pt]

\citet{zhang2018}
& TWHC
& Finite periodic loop
& Loop boundary; cyclic re-indexing at a control point
& Deterministic common-headway convergence; stochastic headway-variance convergence \\
\bottomrule
\end{tabular}

\end{table}

\subsection{Holding Control Laws}\label{subsec:control-laws}

Analytical holding laws differ in both feedback coefficients and information requirements. FHC uses the observed headway from the leading bus and directly counteracts the demand-driven amplification of headway deviations \citep{daganzo2009}. BHC and TWHC incorporate following-bus information, as in self-equalizing holding schemes on loops \citep{bartholdi2012,liang2016,zhang2018}, where feedback brings headways toward a common value without a prescribed target headway. \citet{daganzo2011} study a related two-way-looking speed-control formulation. Note that when holding is decided, the forward headway has been observed, whereas the following bus has not yet reached the same control point. Rules using the backward headway (e.g., BHC and TWHC) therefore require an estimate or prediction.

\citet{xuan2011} develop the general linear holding framework adopted here, encompassing linear forms of FHC, BHC, and TWHC. Each bus's updated schedule deviation is a weighted sum of its own and other buses' deviations; we call the ordered weights the \emph{control kernel}. They also introduce SC, derive its performance formulas and optimal design formulation, and show numerically that it is nearly optimal in schedule-deviation--slack tradeoffs and outperforms headway-based benchmarks in headway--slack tradeoffs.

A control kernel specifies which relative bus positions enter the holding decision and with what weights, but it does not specify how the bus sequence is bounded. For example, a kernel may use each bus's own deviation and that of the bus one position earlier in the sequence. For the first bus, that position lies outside an open-chain fleet but refers to the last bus in the loop model. The same coefficients thus define different feedback connections under the boundary conditions discussed next.

\subsection{Boundaries of the Bus Sequence}\label{subsec:boundaries}

Boundary conditions determine how interactions along the ordered bus sequence are terminated at the ends, connected into a loop, or extended indefinitely, rather than simply whether the physical route is open or circular. Open-chain models describe successive dispatched trips, loop models retain the repeated circulation of a finite fleet, and the bi-infinite idealization describes limiting behavior away from sequence boundaries.

For open chains, \citet{daganzo2009} analyzes FHC on a one-sided infinite bus sequence with a fixed head boundary and on-time dispatching. \citet{argote2015} extend SC to one-sided bus-trip sequences on interacting lines, preserving decoupled dynamics in the linear regime. In contrast, \citet{andres2017} study a finite bus sequence with fixed headways at both ends.

For finite periodic loops, \citet{daganzo2011} note that a sufficiently long loop can approximate an open system if disturbances dissipate over much shorter distances. The models of \citet{bartholdi2012}, \citet{liang2016}, and \citet{zhang2018} cyclically relabel buses at a control point, assigning label~1 to each newly arriving bus. The loop boundary connects the sequence's end to its beginning, removing distinct head and tail boundaries.

\citet{xuan2011} represent bus dynamics through convolution over a bi-infinite sequence, taking limits away from the initial bus and station boundaries. By suppressing finite-boundary effects, this idealization makes analytical stability and variance results tractable for general linear kernels.

Existing results remain tied to particular combinations of control laws, boundary settings, and operating assumptions; the resulting gaps in stability and performance are examined next.

\subsection{Stability and Performance}\label{subsec:stability-performance}

Bus-holding studies assess passenger impacts, service reliability, and control effort. Passenger impacts include waiting time, in-vehicle delay, and denied boarding \citep{gkiotsalitis2021}. Service reliability is commonly measured by schedule and headway deviations and their variability \citep{xuan2011}. Improved headway regularity is often associated with lower passenger waiting. Control effort measures include holding time, schedule slack, and commercial-speed loss \citep{zhao2006,berrebi2018a}. Stability analyses focus on convergence or boundedness of deviations, including bounded variances under disturbances. Characterizing reliability--holding tradeoffs further requires quantifying the reliability achieved and the associated holding effort.

In open-chain settings, analytical results include uniform headway-variance bounds for FHC \citep{daganzo2009}, schedule-deviation variance bounds for SC \citep{argote2015}, and finite-route worst-case headway-deviation bounds \citep{andres2017}. For self-equalizing loop controls, \citet{bartholdi2012} prove deterministic convergence to equal headways under common-speed dynamics. \citet{liang2016} primarily examine convergence rates and bunching boundaries numerically. \citet{zhang2018} analyze deterministic and stochastic headway regulation under their two-way-looking self-equalizing rule, with headway-dependent dwell-time effects examined numerically. Among these studies, the bi-infinite analysis of \citet{xuan2011} accommodates the broadest class of linear holding laws. They derive a sufficient condition for bounded schedule-deviation variance under independent, identically distributed disturbances and prove that this variance diverges for every linear headway-based control using deviations from only finitely many buses, including FHC, BHC, and TWHC. As they emphasize, bounded headway variance does not imply bounded schedule-deviation variance: regular spacing can coexist with growing schedule deviations. Despite these advances, existing studies lack a unified framework providing necessary and sufficient stability conditions for general linear control under different boundary settings.

Beyond stability, control design must account for the holding effort required to achieve a given level of reliability. \citet{xuan2011} formulate minimum-slack problems under schedule- or headway-reliability constraints and solve them numerically for kernels with few nonzero coefficients, with closed-form results for SC. \citet{berrebi2018a}, \citet{gu2025}, and \citet{shen2026} compare headway--holding tradeoffs by simulation. The Pareto frontier for general linear holding laws still lacks an analytical characterization.

\section{Common Model and Three Boundary Conditions}\label{sec:modelclasses}

\subsection{Local Bus-Motion Model and Adaptive Linear Holding Law}\label{subsec:local-model}

Consider a homogeneous bus system controlled at a sequence of control points, without additional delays from buses queuing for access to stops \citep{gu2015queues,shen2019capacity}. For bus $n$ at control point $s$, let $t_{n,s}$ and $a_{n,s}$ denote scheduled and actual arrival times, with schedule deviation $\varepsilon_{n,s}=a_{n,s}-t_{n,s}$. Throughout the paper, \(t_{n,s}\) is a \emph{virtual schedule}: the internal reference trajectory for control and stability analysis, not necessarily the passenger-facing timetable. If a published timetable is needed at every stop, it can be obtained by subtracting a deterministic buffer from the virtual schedule to minimize the risk that passengers arriving on time miss an early bus. Unless stated otherwise, all schedule deviations are relative to the virtual schedule.

Let $H$ be the scheduled headway, $d_s$ the slack built into the virtual schedule at control point $s$, $D_{n,s}$ the holding time, $C_s$ the average cruising time from control point $s$ to $s+1$, and $v_{n,s+1}$ the zero-mean cruising-time disturbance from traffic interactions. The homogeneous dimensionless demand rate, $\beta$, is the product of passenger arrival rate and mean boarding time per passenger.

The scheduled and actual motion equations, and the schedule deviation propagation are
\begin{equation}
\begin{aligned}
t_{n,s+1}&=t_{n,s}+\beta H+d_s+C_s,\\
a_{n,s+1}&=a_{n,s}+\beta h_{n,s}+D_{n,s}+C_s+v_{n,s+1},\\
\varepsilon_{n,s+1}&=\varepsilon_{n,s}+\beta(\varepsilon_{n,s}-\varepsilon_{n-1,s})+(D_{n,s}-d_s)+v_{n,s+1},
\end{aligned}
\label{eq:deviation-propagation}
\end{equation}
where $h_{n,s}=a_{n,s}-a_{n-1,s}=H+\varepsilon_{n,s}-\varepsilon_{n-1,s}$ is the headway to the preceding bus. We consider the following adaptive linear holding law:
\begin{equation}
D_{n,s}=d_s-(1+\beta)\varepsilon_{n,s}+\beta\varepsilon_{n-1,s}+\sum_{p=0}^{m}f_p\varepsilon_{n-p,s}.
\label{eq:general-law}
\end{equation}
The coefficients $f_p$ form the control kernel. We initially use $p=0,\ldots,m$, corresponding to current- and leading-bus inputs on an open chain; negative offsets allow following-bus inputs.

Substituting \eqref{eq:general-law} into \eqref{eq:deviation-propagation} yields the local dynamics under this holding law
\begin{equation}
\varepsilon_{n,s+1}=\sum_{p=0}^{m}f_p\varepsilon_{n-p,s}+v_{n,s+1}.
\label{eq:local-closed}
\end{equation}
For a given kernel, the systems differ only in how the bus index $n-p$ is interpreted at the boundary. Unless stated otherwise, any required but unobserved deviation is assumed to be predicted exactly at the holding decision. Perfect-information benchmarks have also been used in analytical bus-control studies \citep{xuan2011,berrebi2015}.

We use the named rules defined in \citet{xuan2011} as benchmarks: the triples $(f_{-1},f_0,f_1)$ are $(0,f_0,0)$ for SC, $(0,1-\alpha,\alpha)$ for FHC, $(\alpha,1+\beta-\alpha,-\beta)$ for BHC, and $(\alpha,1-2\alpha,\alpha)$ for TWHC; all other coefficients are zero. Here, $\alpha$ sets the strength of headway feedback. Consistent with the literature \citep{xuan2011,bartholdi2012}, we use $0<\alpha<1$ for FHC and BHC and $0<\alpha<1/2$ for TWHC.

\subsection{Three System Classes}\label{subsec:three-systems}

The local recursion \eqref{eq:local-closed} yields three systems with different boundary conditions on the bus sequence.

\textit{Finite open-chain linear corridor.} In this abstraction, $N$ counts scheduled bus dispatches from the origin terminal. Trips depart at the prescribed headway $H$, so terminal dispatch acts as a schedule reset. This represents terminal-dispatched service on noncyclic routes, as studied in, e.g., \citet{hickman2001} and \citet{andres2017}. The chain has a head and a tail, with no wrap-around from bus $N$ to bus $1$. Since the first bus has no leader within the modeled fleet, headway performance is averaged over buses $2,\ldots,N$. Section~\ref{sec:openchain} shows that, with current- and leading-bus feedback, stability depends only on the current-bus coefficient $f_0$, through $|f_0|<1$.

\textit{Finite periodic fleet on a loop corridor.} Here $N$ is the physical fleet size, with buses ordered cyclically \citep[e.g.,][]{bartholdi2012,zhang2018}. Section~\ref{sec:loop} shows that the loop boundary brings the full control kernel into the stability condition.

\textit{Bi-infinite system.} Buses are indexed by all integers, with no head, tail, or finite cyclic closure. Under the homogeneous holding law, the translation-invariant control dynamics are represented by convolution along the bus sequence. This is the idealized setting used in the stability analysis of \citet{xuan2011}.

Figure~\ref{fig:three-systems} contrasts the three settings, and Table~\ref{tab:systems} summarizes their main stability characteristics.

\begin{table}[htbp]
\caption{Overview of boundary settings, feedback classes, and stability results.\label{tab:systems}}
\centering
\footnotesize\setlength{\tabcolsep}{3pt}
\begin{tabular}{
>{\raggedright\arraybackslash}p{0.33\textwidth}
>{\raggedright\arraybackslash}p{0.19\textwidth}
>{\raggedright\arraybackslash}p{\dimexpr0.48\textwidth-6\tabcolsep\relax}}
\toprule
System and feedback class & Boundary setting & Stability summary \\
\midrule
Open chain: current- and leading-bus feedback & finite $N$, no wrap-around & Depends only on the current-bus coefficient: $|f_0|<1$. \\
Open chain: two-sided three-tap feedback & finite $N$, no wrap-around & Depends on the three-tap kernel and $N$. \\
Finite periodic fleet / loop corridor & finite $N$, cyclic & Depends on the full kernel and $N$. \\
Bi-infinite system & infinite, no ends & Limit of the loop criterion as $N\to\infty$, with the kernel fixed. \\
\bottomrule
\end{tabular}

\end{table}

We next analyze stability and performance, first on finite open chains and then on loops, relating the results to the bi-infinite system.

\section{Open-Chain Stability and Performance on a Finite-Bus Corridor}\label{sec:openchain}

Sections~\ref{subsec:open-matrix}--\ref{subsec:simple-open-analysis} study current- and leading-bus feedback ($p\ge0$) on a finite open chain. The lower-shift representation yields its stability boundary, transient behavior, and computable covariance bounds; SC then supplies exact performance--cost formulas and a constrained design rule. Section~\ref{subsec:open-two-sided} admits following-bus feedback and derives an explicit stability condition for the special three-tap kernel, delineating the scope of the preceding triangular results.

\subsection{Matrix Representation, Stability, and Transient Effects for Leader-Only Kernels}\label{subsec:open-matrix}

For the finite open-chain system, define the schedule-deviation vector $\boldsymbol{\varepsilon}_{s}=(\varepsilon_{1,s},\ldots,\varepsilon_{N,s})^{\top}$ and the disturbance vector $\mathbf{v}_{s}$ analogously. As specified, we start with $p\ge0$. The control dynamics from \eqref{eq:local-closed} are
\begingroup

\begin{equation}
\boldsymbol{\varepsilon}_{s+1}=\mathbf{A}_{\mathrm{o}}\boldsymbol{\varepsilon}_{s}+\mathbf{v}_{s+1},\qquad
\mathbf{A}_{\mathrm{o}}=\sum_{p=0}^{m}f_p\mathbf{L}^{p},\qquad
\mathbf{L}=\begin{bmatrix}
0&0&0&\cdots&0\\
1&0&0&\cdots&0\\
0&1&0&\cdots&0\\
\vdots&&\ddots&\ddots&\vdots\\
0&0&\cdots&1&0
\end{bmatrix}_{N\times N}.
\label{eq:open-state}
\end{equation}
\endgroup
The lower-shift matrix $\mathbf{L}$ inserts a zero at the head and shifts each leading-bus deviation toward the tail. Since $\mathbf{L}$ is lower triangular, so is $\mathbf{A}_{\mathrm{o}}$. Applying $\mathbf{L}$ $N$ times gives $\mathbf{L}^{N}=\mathbf{0}$, i.e., $\mathbf{L}$ is nilpotent \citep{horn2013}.

Since every diagonal entry of $\mathbf{L},\mathbf{L}^2,\ldots,\mathbf{L}^m$ is zero, the diagonal of the lower-triangular $\mathbf{A}_{\mathrm{o}}$ comes entirely from $f_0\mathbf{I}$. The following result is immediate.

\begin{proposition}\label{prop:open-spectrum}
For the leader-only finite-bus open-chain system \eqref{eq:open-state}, every eigenvalue of $\mathbf{A}_{\mathrm{o}}$ equals $f_0$, so its spectral radius is $\rho(\mathbf{A}_{\mathrm{o}})=|f_0|$; therefore, its zero-input dynamics are asymptotically stable if and only if $|f_0|<1$.
\end{proposition}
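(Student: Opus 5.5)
The plan is to exploit the structure already noted just before the statement: $\mathbf{A}_{\mathrm{o}}$ is lower triangular with constant diagonal $f_0$. First I will verify this entrywise. For $p\ge1$, $\mathbf{L}^{p}$ has ones exactly at positions $(i,i-p)$ and zeros elsewhere, so it is strictly lower triangular with zero diagonal. Hence
\[
\mathbf{A}_{\mathrm{o}}=f_0\mathbf{I}+\mathbf{N},\qquad \mathbf{N}=\sum_{p=1}^{m}f_p\mathbf{L}^{p},
\]
where $\mathbf{N}$ is strictly lower triangular. The eigenvalues of a triangular matrix are its diagonal entries, since $\det(\lambda\mathbf{I}-\mathbf{A}_{\mathrm{o}})=(\lambda-f_0)^{N}$. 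Every eigenvalue is therefore $f_0$, with algebraic multiplicity $N$, and $\rho(\mathbf{A}_{\mathrm{o}})=|f_0|$.

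Next I will connect this to the zero-input dynamics $\boldsymbol{\varepsilon}_{s}=\mathbf{A}_{\mathrm{o}}^{s}\boldsymbol{\varepsilon}_{0}$. I will use the standard fact that $\mathbf{A}^{s}\to\mathbf{0}$ for every initial state if and only if $\rho(\mathbf{A})<1$ \citep{horn2013}. This fact gives the equivalence directly, but I would also make it explicit for this matrix.

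For sufficiency, $f_0\mathbf{I}$ commutes with $\mathbf{N}$, and $\mathbf{N}^{N}=\mathbf{0}$ because $\mathbf{N}$ is a polynomial in $\mathbf{L}$ with no constant term and $\mathbf{L}^{N}=\mathbf{0}$. The binomial theorem then gives
\[
\mathbf{A}_{\mathrm{o}}^{s}=\sum_{k=0}^{N-1}\binom{s}{k}f_0^{\,s-k}\mathbf{N}^{k}.
\]
Each term is a polynomial in $s$ of degree at most $N-1$ multiplied by $|f_0|^{s}$. Each term therefore tends to zero when $|f_0|<1$.

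For necessity, take $\boldsymbol{\varepsilon}_0=\mathbf{e}_N$, the last unit vector. $\mathbf{N}$ maps $\mathbf{e}_N$ to zero because the last column of $\mathbf{L}$ is zero, so $\mathbf{A}_{\mathrm{o}}^{s}\mathbf{e}_N=f_0^{s}\mathbf{e}_N$. This does not decay when $|f_0|\ge1$. In bus terms, a perturbation to the tail bus has no followers within the chain to propagate to, so only the own-bus coefficient acts on it.

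The only point needing care is that the eigenvalue is highly defective. The Jordan structure can produce transient growth of polynomial order $s^{N-1}$. This growth does not affect asymptotic stability, but the explicit binomial bound above is the step that rules out any subtlety.
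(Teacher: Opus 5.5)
Your proof is correct and follows the paper's route: the lower-triangular $\mathbf{A}_{\mathrm{o}}$ has constant diagonal $f_0$ because every $\mathbf{L}^p$ with $p\ge1$ has zero diagonal, and the paper treats the result as immediate from the standard criterion $\rho(\mathbf{A}_{\mathrm{o}})<1$. Your binomial expansion (the same as the paper's \eqref{eq:open-binomial}) and the tail-bus eigenvector $\mathbf{e}_N$ just spell out that criterion for this matrix; only avoid naming the nilpotent part $\mathbf{N}$, which clashes with the fleet size $N$.
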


Within this one-sided open-chain class, higher-order coefficients ($p>0$) can reshape transients, boundary behavior, covariance distribution, and finite-horizon performance, but do not move the long-run stability threshold. Both introducing opposite-direction feedback on the open chain and connecting the last bus back to the first can make interbus coupling enter the stability condition. The $f_0$-only criterion therefore describes only the triangular control class. Two-sided feedback requires the broader stability analysis as exemplified in Section~\ref{subsec:open-two-sided}.

SC retains precisely the coefficient governing asymptotic stability in this class, with transition matrix $\mathbf A_{\mathrm o}=f_0\mathbf I$.

The power $\mathbf{A}_{\mathrm{o}}^{s}$ maps initial deviations to their values after $s$ control-point transitions without further disturbances. To see how the remaining coefficients enter finite-horizon behavior, write $\mathbf{A}_{\mathrm{o}}=f_0\mathbf{I}+\mathbf{R}$ with $\mathbf{R}:=\sum_{p=1}^{m}f_p\mathbf{L}^{p}$, a polynomial in $\mathbf{L}$ with no identity term and hence also nilpotent. Let $\nu$ be the smallest positive integer with $\mathbf{R}^{\nu}=\mathbf{0}$, so $\nu\le N$. Since $\mathbf{I}$ and $\mathbf{R}$ commute,
\begin{equation}
\mathbf{A}_{\mathrm{o}}^{s}=\sum_{k=0}^{\min\{s,\nu-1\}}\binom{s}{k}f_0^{s-k}\mathbf{R}^{k}.
\label{eq:open-binomial}
\end{equation}
Equation~\eqref{eq:open-binomial} shows how higher-order coefficients affect finite-horizon behavior without moving the stability boundary. The exponential growth or decay rate is governed by $f_0^s$; the remaining coefficients enter through nilpotent powers $\mathbf{R}^k$ and binomial factors, creating finite-order polynomial corrections. Their primary impact is on finite-horizon and boundary behavior before the asymptotic rate dominates. At $f_0=1$, zero-input deviations persist and may grow polynomially.

A caution accompanies Proposition~\ref{prop:open-spectrum}: $|f_0|<1$ guarantees asymptotic stability, not good finite-horizon performance, because the polynomial factors in \eqref{eq:open-binomial} may still generate large transients. Proposition~\ref{prop:transient-envelope} in Section~\ref{sec:finite-boundary-discussion} quantifies these transient effects and their implications for finite-horizon open-chain performance.

\subsection{Computable Steady-State Variance Bounds}\label{subsec:open-variance-bounds}

Assume zero-mean disturbances that are uncorrelated across control points, have fixed finite covariance $\mathbf{Q}:=\Var(\mathbf{v}_{s})$, and are uncorrelated with the initial state, whose covariance is finite. With $\boldsymbol{\Sigma}_{s}:=\Var(\boldsymbol{\varepsilon}_{s})$, the covariance recursion is $\boldsymbol{\Sigma}_{s+1}=\mathbf{A}_{\mathrm{o}}\boldsymbol{\Sigma}_{s}\mathbf{A}_{\mathrm{o}}^{\top}+\mathbf{Q}$. When $|f_0|<1$, the covariances remain bounded and converge to the limit in Proposition~\ref{prop:open-cov} (proof in Appendix~\ref{app:open-cov}).

\begin{proposition}\label{prop:open-cov}
If $|f_0|<1$, the covariance sequence converges to the unique positive semidefinite solution of the discrete Lyapunov equation
\begin{equation}
\boldsymbol{\Sigma}_{\infty}
=\mathbf{A}_{\mathrm{o}}\boldsymbol{\Sigma}_{\infty}\mathbf{A}_{\mathrm{o}}^{\top}+\mathbf{Q}
=\sum_{j=0}^{\infty}\mathbf{A}_{\mathrm{o}}^{j}\mathbf{Q}(\mathbf{A}_{\mathrm{o}}^{\top})^{j}.
\label{eq:open-sigma-infty}
\end{equation}
\end{proposition}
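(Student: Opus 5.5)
The plan is to unroll the covariance recursion and control the resulting series using the spectral information from Proposition~\ref{prop:open-spectrum}. Iterating $\boldsymbol{\Sigma}_{s+1}=\mathbf{A}_{\mathrm{o}}\boldsymbol{\Sigma}_{s}\mathbf{A}_{\mathrm{o}}^{\top}+\mathbf{Q}$ from $s=0$ gives
\begin{equation*}
\boldsymbol{\Sigma}_{s}=\mathbf{A}_{\mathrm{o}}^{s}\boldsymbol{\Sigma}_{0}(\mathbf{A}_{\mathrm{o}}^{\top})^{s}+\sum_{j=0}^{s-1}\mathbf{A}_{\mathrm{o}}^{j}\mathbf{Q}(\mathbf{A}_{\mathrm{o}}^{\top})^{j}.
\end{equation*}
This decomposition rests on the assumption that the disturbances are uncorrelated across control points and with the initial state. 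That assumption makes all cross terms vanish, so the recursion itself is justified.

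The key estimate is a norm bound on $\mathbf{A}_{\mathrm{o}}^{s}$. I will take it directly from the binomial expansion \eqref{eq:open-binomial} rather than appeal to Gelfand's formula. Since $\nu\le N$ is fixed, that expansion gives
\begin{equation*}
\|\mathbf{A}_{\mathrm{o}}^{s}\|\le\sum_{k=0}^{\nu-1}\binom{s}{k}|f_0|^{s-k}\|\mathbf{R}\|^{k}\le C\,(1+s)^{\nu-1}|f_0|^{s}
\end{equation*}
for some constant $C$ (when $f_0=0$, $\mathbf{A}_{\mathrm{o}}^{s}=\mathbf{0}$ for $s\ge\nu$ and everything is trivial). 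When $|f_0|<1$, the polynomial factor is dominated by the geometric decay. Two consequences follow. First, the transient term $\mathbf{A}_{\mathrm{o}}^{s}\boldsymbol{\Sigma}_0(\mathbf{A}_{\mathrm{o}}^{\top})^{s}$ has norm at most $C^2(1+s)^{2\nu-2}|f_0|^{2s}\|\boldsymbol{\Sigma}_0\|\to0$. Second, the series $\sum_j\mathbf{A}_{\mathrm{o}}^{j}\mathbf{Q}(\mathbf{A}_{\mathrm{o}}^{\top})^{j}$ converges absolutely. Together these show that $\boldsymbol{\Sigma}_s$ converges to the series limit, which I call $\boldsymbol{\Sigma}_\infty$. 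I expect this step to be the main obstacle. Because $\mathbf{A}_{\mathrm{o}}$ is non-normal, $\|\mathbf{A}_{\mathrm{o}}\|$ may exceed one, so a naive contraction argument fails, and the polynomial-times-geometric bound is what resolves it.

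Next, I will verify that $\boldsymbol{\Sigma}_\infty$ solves the Lyapunov equation. Conjugating the series by $\mathbf{A}_{\mathrm{o}}$ shifts the index, so $\mathbf{A}_{\mathrm{o}}\boldsymbol{\Sigma}_\infty\mathbf{A}_{\mathrm{o}}^{\top}+\mathbf{Q}=\boldsymbol{\Sigma}_\infty$. Alternatively, one can pass to the limit in the recursion, since matrix multiplication is continuous. Positive semidefiniteness holds because each summand $\mathbf{A}_{\mathrm{o}}^{j}\mathbf{Q}(\mathbf{A}_{\mathrm{o}}^{j})^{\top}$ is PSD, given that $\mathbf{Q}$ is a covariance, and the PSD cone is closed.

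For uniqueness, suppose $\mathbf{X}$ and $\mathbf{Y}$ both solve the equation. Their difference $\mathbf{\Delta}=\mathbf{X}-\mathbf{Y}$ satisfies $\mathbf{\Delta}=\mathbf{A}_{\mathrm{o}}\mathbf{\Delta}\mathbf{A}_{\mathrm{o}}^{\top}$, and iterating gives $\mathbf{\Delta}=\mathbf{A}_{\mathrm{o}}^{s}\mathbf{\Delta}(\mathbf{A}_{\mathrm{o}}^{\top})^{s}\to\mathbf{0}$ by the same norm bound. The solution is therefore unique among all symmetric matrices, and in particular among PSD ones. As an equivalent check, the eigenvalues of the linear map $\mathbf{X}\mapsto\mathbf{A}_{\mathrm{o}}\mathbf{X}\mathbf{A}_{\mathrm{o}}^{\top}$ are products $\lambda_i\lambda_j=f_0^2\neq1$, so $I-\mathbf{A}_{\mathrm{o}}\otimes\mathbf{A}_{\mathrm{o}}$ is invertible.
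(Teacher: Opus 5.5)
Your proposal is correct and follows the paper's proof: unroll the covariance recursion, show that the transient term vanishes and the series converges absolutely, then identify the limit as the solution of the Lyapunov equation. The only differences are elaborations. You derive the decay explicitly as a polynomial-times-geometric bound from the binomial expansion \eqref{eq:open-binomial}, where the paper simply cites Schur stability. You also prove uniqueness directly by iterating the homogeneous equation, where the paper invokes the standard discrete Lyapunov theorem.
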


Although \eqref{eq:open-sigma-infty} contains an infinite series, it is computable in finite dimension by solving the discrete Lyapunov equation directly, or by truncating the series once the terms $\mathbf{A}_{\mathrm{o}}^{j}\mathbf{Q}(\mathbf{A}_{\mathrm{o}}^{\top})^{j}$ become negligible.

Schedule- and headway-deviation variances measure how well the control law suppresses service irregularity. We use holding-time variance as a proxy for holding cost, motivated by the three-sigma slack rule \citep{daganzo2009}.

The following corollary translates $\boldsymbol{\Sigma}_{\infty}$ into bounds on schedule-deviation variance, headway-deviation variance, and holding-time standard deviation (proof in Appendix~\ref{app:open-output-bounds}). The headway-deviation vector is $\boldsymbol{\delta}_{s}=(\delta_{2,s},\ldots,\delta_{N,s})^{\top}=\mathbf{G}_{\mathrm{o}}\boldsymbol{\varepsilon}_{s}$, where $\mathbf{G}_{\mathrm{o}}=\left[\begin{smallmatrix}-1&1&0&\cdots&0\\0&-1&1&\ddots&\vdots\\\vdots&\ddots&\ddots&\ddots&0\\0&\cdots&0&-1&1\end{smallmatrix}\right]_{(N-1)\times N}$. The holding-time vector is likewise a linear output of the state, $\mathbf{D}_{s}=d_s\mathbf{1}+\mathbf{M}_{\mathrm{o}}\boldsymbol{\varepsilon}_{s}$, where $\mathbf{1}$ is the $N$-dimensional all-ones vector (different from $\mathbf{I}$) and $\mathbf{M}_{\mathrm{o}}:=-(1+\beta)\mathbf{I}+\beta\mathbf{L}+\mathbf{A}_{\mathrm{o}}$. Since $d_s$ is deterministic, the variance of $\mathbf{D}_{s}$ is determined entirely by $\mathbf{M}_{\mathrm{o}}\boldsymbol{\varepsilon}_{s}$.

\begin{corollary}[Steady-state performance and holding-cost bounds]\label{cor:open-output-bounds}
Suppose $|f_0|<1$ and the open-chain system starts from zero initial deviation. Let $\boldsymbol{\Sigma}_{\infty}$ be defined by Proposition~\ref{prop:open-cov}, let $\mathbf{e}_n\in\mathbb R^N$ be the $n$-th Euclidean basis vector, and define $\mathbf{g}_n:=\mathbf{e}_n-\mathbf{e}_{n-1}$ for $n\ge2$ and $\mathbf{m}_n:=\mathbf{M}_{\mathrm{o}}^{\top}\mathbf{e}_n$. The corresponding steady-state bounds are
\begingroup

\interdisplaylinepenalty=10000
\begin{alignat}{1}
\lim_{s\to\infty}\Var(\varepsilon_{n,s})
&=\mathbf{e}_n^{\top}\boldsymbol{\Sigma}_{\infty}\mathbf{e}_n
\le \mathcal{B}_{\mathrm{o}}, \label{eq:open-schedule-var}\\
\lim_{s\to\infty}\Var(\delta_{n,s})
&=\mathbf{g}_n^{\top}\boldsymbol{\Sigma}_{\infty}\mathbf{g}_n
\le 2\mathcal{B}_{\mathrm{o}}, \label{eq:open-headway-var}\\
\lim_{s\to\infty}\sqrt{\Var(D_{n,s})}
&=\sqrt{\mathbf{m}_n^{\top}\boldsymbol{\Sigma}_{\infty}\mathbf{m}_n}
\le \lVert\mathbf{m}_n\rVert_2\sqrt{\mathcal{B}_{\mathrm{o}}}, \label{eq:open-holding-sd}\\
\text{where}\quad \mathcal{B}_{\mathrm{o}}
&:=
\lVert\mathbf{Q}\rVert_2
\sum_{j=0}^{\infty}
\left(
\sum_{k=0}^{\min\{j,\nu-1\}}
\binom{j}{k}|f_0|^{j-k}\lVert\mathbf{R}\rVert_2^k
\right)^2,\notag
\end{alignat}
\endgroup
and $\nu$ is the nilpotency index in \eqref{eq:open-binomial}, given explicitly in Appendix~\ref{app:supp-nilpotency-index}.
Under the common independent homoskedastic white-noise setting, $\mathbf{Q}=\sigma^2\mathbf{I}$ and hence $\lVert\mathbf{Q}\rVert_2=\sigma^2$. For independent disturbances with heterogeneous bus-specific variances, $\mathbf{Q}=\diag(\sigma_1^2,\ldots,\sigma_N^2)$ and $\lVert\mathbf{Q}\rVert_2=\max_n\sigma_n^2$. Under the same zero-initial-condition white-noise setting, finite-step variances and holding-time standard deviations do not exceed these steady-state limits, so the bounds also apply over finite horizons.
\end{corollary}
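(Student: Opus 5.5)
The plan is to pass every output to a quadratic form in $\boldsymbol{\Sigma}_{\infty}$ and then control $\lVert\boldsymbol{\Sigma}_{\infty}\rVert_2$ with the series representation from Proposition~\ref{prop:open-cov}.

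\textbf{Step 1 (outputs as quadratic forms).} The outputs are linear in the state: $\varepsilon_{n,s}=\mathbf{e}_n^{\top}\boldsymbol{\varepsilon}_s$, $\delta_{n,s}=\mathbf{g}_n^{\top}\boldsymbol{\varepsilon}_s$, and $D_{n,s}-d_s=\mathbf{m}_n^{\top}\boldsymbol{\varepsilon}_s$, where $d_s$ is deterministic. Hence each variance equals $\mathbf{w}^{\top}\boldsymbol{\Sigma}_s\mathbf{w}$ for the corresponding vector $\mathbf{w}$. By Proposition~\ref{prop:open-cov}, $\boldsymbol{\Sigma}_s\to\boldsymbol{\Sigma}_\infty$, and continuity of quadratic forms gives the three equalities. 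For the standard-deviation statement we take square roots of a convergent nonnegative sequence.

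\textbf{Step 2 (bound on $\lVert\boldsymbol{\Sigma}_\infty\rVert_2$).} Start from the series in \eqref{eq:open-sigma-infty}. By the triangle inequality and submultiplicativity,
\[
\lVert\boldsymbol{\Sigma}_\infty\rVert_2\le\lVert\mathbf{Q}\rVert_2\sum_{j\ge0}\lVert\mathbf{A}_{\mathrm{o}}^{j}\rVert_2^2 .
\]
Applying the binomial expansion \eqref{eq:open-binomial} gives
\[
\lVert\mathbf{A}_{\mathrm{o}}^{j}\rVert_2\le\sum_{k=0}^{\min\{j,\nu-1\}}\binom{j}{k}|f_0|^{j-k}\lVert\mathbf{R}\rVert_2^k .
\]
Together these yield $\lVert\boldsymbol{\Sigma}_\infty\rVert_2\le\mathcal{B}_{\mathrm{o}}$.

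We also need $\mathcal{B}_{\mathrm{o}}<\infty$. Each summand is a polynomial in $j$ of degree at most $\nu-1$ times $|f_0|^{j-\nu+1}$, so its square decays geometrically because $|f_0|<1$. One caveat: when $f_0=0$, the terms with $k=j$ use the convention $0^0=1$, and they vanish once $j\ge\nu$, so the series is still finite.

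With this bound, Rayleigh's inequality gives $\mathbf{w}^{\top}\boldsymbol{\Sigma}_\infty\mathbf{w}\le\lVert\mathbf{w}\rVert_2^2\mathcal{B}_{\mathrm{o}}$. Since $\lVert\mathbf{e}_n\rVert_2^2=1$ and $\lVert\mathbf{g}_n\rVert_2^2=2$, this yields the schedule and headway bounds. For holding time we take square roots, which gives $\lVert\mathbf{m}_n\rVert_2\sqrt{\mathcal{B}_{\mathrm{o}}}$. The stated values of $\lVert\mathbf{Q}\rVert_2$ are immediate: $\sigma^2$ for $\mathbf{Q}=\sigma^2\mathbf{I}$, and $\max_n\sigma_n^2$ for a diagonal $\mathbf{Q}$.

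\textbf{Step 3 (finite-horizon claim).} With zero initial deviation, the covariance recursion gives
\[
\boldsymbol{\Sigma}_s=\sum_{j=0}^{s-1}\mathbf{A}_{\mathrm{o}}^{j}\mathbf{Q}(\mathbf{A}_{\mathrm{o}}^{\top})^{j}.
\]
This is a partial sum of positive semidefinite terms, so $\boldsymbol{\Sigma}_s\preceq\boldsymbol{\Sigma}_{s+1}\preceq\boldsymbol{\Sigma}_\infty$ in the Loewner order. Every quadratic form is therefore monotonically nondecreasing in $s$ and bounded by its limit, so the same bounds hold at every finite step.

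\textbf{Main obstacle.} The main care point is Step 2. We must justify exchanging the norm with the infinite sum, which is fine because the series converges absolutely in norm. We must also handle the $f_0=0$ edge case cleanly. Beyond that, the argument is routine.
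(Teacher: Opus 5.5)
Your proposal is correct and follows essentially the paper's argument: write each output as a quadratic form in $\boldsymbol{\Sigma}_{\infty}$, bound $\boldsymbol{\Sigma}_{\infty}$ by $\lVert\mathbf{Q}\rVert_2\sum_j\lVert\mathbf{A}_{\mathrm{o}}^{j}\rVert_2^2$, apply the binomial expansion \eqref{eq:open-binomial}, and use the Loewner ordering of the partial sums for the finite-horizon claim. Where the paper bounds each summand in Loewner order, you use the norm triangle inequality, which is equivalent for positive semidefinite matrices; your explicit check that $\mathcal{B}_{\mathrm{o}}<\infty$, including the $0^0=1$ convention when $f_0=0$, is a small addition the paper leaves implicit.
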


The bounds in Corollary~\ref{cor:open-output-bounds} provide conservative estimates of service variability and holding cost. A related fleet-level bound follows from the headway difference matrix (proof in Appendix~\ref{app:open-schedule-headway-norm}).

\begin{corollary}[Fleet-level schedule--headway variance bound]\label{cor:open-schedule-headway-norm}
For $N\ge2$ and any control point \(s\), define the fleet-level schedule-deviation variance $V_{\varepsilon,s}^{\mathrm{o}}:=\tr(\boldsymbol{\Sigma}_{s})/N$ and headway-deviation variance $V_{\delta,s}^{\mathrm{o}}:=\tr(\mathbf{G}_{\mathrm{o}}\boldsymbol{\Sigma}_{s}\mathbf{G}_{\mathrm{o}}^{\top})/(N-1)$. With $\gamma_N:=\frac{N}{N-1}[2+2\cos(\pi/N)]$,
\begin{equation}
0
\le
V_{\delta,s}^{\mathrm{o}}
\le
\gamma_N V_{\varepsilon,s}^{\mathrm{o}}.
\label{eq:open-schedule-headway-bound}
\end{equation}
The same bound holds for the steady-state limits whenever they exist.
\end{corollary}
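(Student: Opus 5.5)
The bound reduces to a spectral-norm estimate for the headway difference matrix. Since $\boldsymbol{\Sigma}_{s}$ is positive semidefinite, so is $\mathbf{G}_{\mathrm{o}}\boldsymbol{\Sigma}_{s}\mathbf{G}_{\mathrm{o}}^{\top}$, and its trace is nonnegative; this gives the lower bound $V_{\delta,s}^{\mathrm{o}}\ge0$. For the upper bound we use cyclicity of the trace, $\tr(\mathbf{G}_{\mathrm{o}}\boldsymbol{\Sigma}_{s}\mathbf{G}_{\mathrm{o}}^{\top})=\tr(\boldsymbol{\Sigma}_{s}^{1/2}\mathbf{G}_{\mathrm{o}}^{\top}\mathbf{G}_{\mathrm{o}}\boldsymbol{\Sigma}_{s}^{1/2})$. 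We then apply the standard inequality $\tr(\mathbf{X}^{\top}\mathbf{K}\mathbf{X})\le\lambda_{\max}(\mathbf{K})\tr(\mathbf{X}^{\top}\mathbf{X})$ for symmetric positive semidefinite $\mathbf{K}$. This yields
\[
\tr(\mathbf{G}_{\mathrm{o}}\boldsymbol{\Sigma}_{s}\mathbf{G}_{\mathrm{o}}^{\top})\le\lambda_{\max}(\mathbf{G}_{\mathrm{o}}^{\top}\mathbf{G}_{\mathrm{o}})\,\tr(\boldsymbol{\Sigma}_{s}).
\]
Dividing by $N-1$ and rewriting $\tr(\boldsymbol{\Sigma}_{s})=N V_{\varepsilon,s}^{\mathrm{o}}$, it remains to show that $\lambda_{\max}(\mathbf{G}_{\mathrm{o}}^{\top}\mathbf{G}_{\mathrm{o}})=2+2\cos(\pi/N)$.

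The one real computation is this spectral norm. The nonzero eigenvalues of $\mathbf{G}_{\mathrm{o}}^{\top}\mathbf{G}_{\mathrm{o}}$ coincide with those of $\mathbf{G}_{\mathrm{o}}\mathbf{G}_{\mathrm{o}}^{\top}$. The latter is the $(N-1)\times(N-1)$ tridiagonal Toeplitz matrix with $2$ on the diagonal and $-1$ on the off-diagonals, that is, the Dirichlet path Laplacian. Its eigenvalues are classical:
\[
2-2\cos\!\left(\frac{k\pi}{N}\right),\qquad k=1,\ldots,N-1.
\]
One can verify these directly with the eigenvectors $(\sin(jk\pi/N))_{j=1}^{N-1}$, using the identity $\sin((j-1)\theta)+\sin((j+1)\theta)=2\cos\theta\sin(j\theta)$ together with $\sin(0)=\sin(k\pi)=0$ at both ends. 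The maximum occurs at $k=N-1$ and equals $2-2\cos((N-1)\pi/N)=2+2\cos(\pi/N)$. As a cross-check, $\mathbf{G}_{\mathrm{o}}^{\top}\mathbf{G}_{\mathrm{o}}$ is the Neumann path Laplacian, whose eigenvalues are $2-2\cos(k\pi/N)$ for $k=0,\ldots,N-1$. This gives the same maximum.

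Combining the two steps gives $V_{\delta,s}^{\mathrm{o}}\le\frac{N}{N-1}[2+2\cos(\pi/N)]V_{\varepsilon,s}^{\mathrm{o}}=\gamma_N V_{\varepsilon,s}^{\mathrm{o}}$. The argument uses only positive semidefiniteness of $\boldsymbol{\Sigma}_{s}$, so it holds for every $s$. For the steady-state statement, recall that $V_{\varepsilon,s}^{\mathrm{o}}$ and $V_{\delta,s}^{\mathrm{o}}$ are continuous linear functionals of $\boldsymbol{\Sigma}_{s}$. Whenever $\boldsymbol{\Sigma}_{s}\to\boldsymbol{\Sigma}_{\infty}$ (for instance under Proposition~\ref{prop:open-cov}), we can pass to the limit in the inequality, or apply the same argument directly to the positive semidefinite matrix $\boldsymbol{\Sigma}_{\infty}$.

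The only nonroutine step is identifying the eigenvalue of the path Laplacian. The trace inequality is standard once it is written in the symmetric form above.
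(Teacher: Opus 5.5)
Your proof is correct and follows the paper's argument. The paper also rewrites $V_{\delta,s}^{\mathrm{o}}=\tr(\mathbf{G}_{\mathrm{o}}^{\top}\mathbf{G}_{\mathrm{o}}\boldsymbol{\Sigma}_{s})/(N-1)$ by trace cyclicity and bounds it using $\mathbf{0}\preceq\mathbf{G}_{\mathrm{o}}^{\top}\mathbf{G}_{\mathrm{o}}\preceq[2+2\cos(\pi/N)]\mathbf{I}$, stating the eigenvalues $2-2\cos(k\pi/N)$ of this Neumann path Laplacian directly; your route through the Dirichlet Laplacian $\mathbf{G}_{\mathrm{o}}\mathbf{G}_{\mathrm{o}}^{\top}$ with explicit sine eigenvectors simply supplies the verification the paper leaves as a known fact.
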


Bounded schedule-deviation variance therefore implies bounded headway variance. The bound gives no converse implication, since headway differences remove common schedule shifts.

\subsection{\texorpdfstring{Closed-Form Open-Chain Analysis under SC}{Closed-Form Open-Chain Analysis under SC}}
\label{subsec:simple-open-analysis}\label{sec:simple}

The preceding leader-only analysis gives computable steady-state variance bounds, while metrics at individual control points generally require matrix calculations. We therefore specialize the finite-bus open-chain model to one-parameter SC for two reasons. First, SC isolates the only coefficient governing Schur stability within the leader-only open-chain class (Proposition~\ref{prop:open-spectrum}). Second, the scalar recursion generated by \(\mathbf{A}_{\mathrm{o}}=f_0\mathbf{I}\) yields exact schedule, headway, and holding-time variance metrics at each control point and in steady state. These formulas allow direct comparison with the loop counterparts in Section~\ref{sec:loop}.

Section~\ref{subsubsec:simple-point-metrics} derives closed-form performance and holding-cost metrics at each control point, and Section~\ref{subsubsec:simple-parameter-selection} uses their steady-state limits to obtain a closed-form reliability-constrained SC parameter.

\subsubsection{Closed-Form Performance and Holding Cost at Each Control Point}\label{subsubsec:simple-point-metrics}

Under SC, each bus follows the scalar recursion $\varepsilon_{n,s+1}=f_0\varepsilon_{n,s}+v_{n,s+1}$. Under zero initial deviation and independent homoskedastic disturbances, this recursion gives exact second moments at each control point \(s\) (proof in Appendix~\ref{app:simple-point-metrics}).

\begin{proposition}[Metrics at each control point under SC]\label{prop:simple-point-metrics}
Assume $N\ge2$, \(0\le f_0<1\), zero initial deviation, and independent disturbances across buses and control points with mean zero and variance \(\sigma^2\). For any control point \(s\ge1\), the normalized schedule-deviation, fleet-average headway-deviation, and interior-bus (\(n\ge2\)) holding-time variances, respectively, are
\begingroup

\interdisplaylinepenalty=10000
\begin{alignat}{1}
J_{\varepsilon,s}(f_0)
&:=
\frac{\Var(\varepsilon_{n,s})}{\sigma^2}
=
\frac{1-f_0^{2s}}{1-f_0^2},
\label{eq:simple-Jepsilon}\\
J_{\delta,s}(f_0)
&:=
\frac{1}{(N-1)\sigma^2}\tr(\mathbf{G}_{\mathrm{o}}\boldsymbol{\Sigma}_{s}\mathbf{G}_{\mathrm{o}}^{\top})
=
2J_{\varepsilon,s}(f_0),
\label{eq:simple-Jdelta}\\
J_{D,s}(f_0;\beta)
&:=
\frac{\Var(D_{n,s})}{\sigma^2}
=
B(f_0;\beta)J_{\varepsilon,s}(f_0),
\label{eq:simple-JD}
\end{alignat}
\endgroup
where $B(f_0;\beta):=(1+\beta-f_0)^2+\beta^2$.
\end{proposition}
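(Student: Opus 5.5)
Under SC the transition matrix is $\mathbf{A}_{\mathrm o}=f_0\mathbf{I}$, so the covariance recursion $\boldsymbol{\Sigma}_{s+1}=\mathbf{A}_{\mathrm o}\boldsymbol{\Sigma}_s\mathbf{A}_{\mathrm o}^{\top}+\mathbf{Q}$ becomes $\boldsymbol{\Sigma}_{s+1}=f_0^2\boldsymbol{\Sigma}_s+\sigma^2\mathbf{I}$, with $\boldsymbol{\Sigma}_0=\mathbf{0}$ by the zero initial deviation. The plan is to solve this scalar-matrix recursion explicitly and then read off all three outputs as quadratic forms in $\boldsymbol{\Sigma}_s$. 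Unrolling the recursion gives
\[
\boldsymbol{\Sigma}_s=\sigma^2\sum_{j=0}^{s-1}f_0^{2j}\,\mathbf{I}=\sigma^2\,\frac{1-f_0^{2s}}{1-f_0^2}\,\mathbf{I}.
\]
Equivalently, $\varepsilon_{n,s}=\sum_{j=0}^{s-1}f_0^{j}v_{n,s-j}$, with independent summands. The diagonal entry of $\boldsymbol{\Sigma}_s$ gives \eqref{eq:simple-Jepsilon} for every $n$. The off-diagonal zeros encode the key structural fact: schedule deviations of different buses are uncorrelated at every control point.

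For the headway metric, I would compute $\mathbf{G}_{\mathrm o}\boldsymbol{\Sigma}_s\mathbf{G}_{\mathrm o}^{\top}=\sigma^2J_{\varepsilon,s}\,\mathbf{G}_{\mathrm o}\mathbf{G}_{\mathrm o}^{\top}$. Each row $\mathbf{g}_n^{\top}$ of $\mathbf{G}_{\mathrm o}$ has squared norm $2$, so $\tr(\mathbf{G}_{\mathrm o}\mathbf{G}_{\mathrm o}^{\top})=2(N-1)$. Dividing by $(N-1)\sigma^2$ yields \eqref{eq:simple-Jdelta}. In this case the fleet average equals the per-bus value $\Var(\delta_{n,s})=2\sigma^2J_{\varepsilon,s}$ for each $n\ge2$.

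For holding time, I would use $\mathbf{D}_s=d_s\mathbf{1}+\mathbf{M}_{\mathrm o}\boldsymbol{\varepsilon}_s$ with $\mathbf{M}_{\mathrm o}=(f_0-1-\beta)\mathbf{I}+\beta\mathbf{L}$. Then $\Var(D_{n,s})=\mathbf{m}_n^{\top}\boldsymbol{\Sigma}_s\mathbf{m}_n=\sigma^2J_{\varepsilon,s}\lVert\mathbf{m}_n\rVert_2^2$. The $n$th row of $\mathbf{M}_{\mathrm o}$ has entry $f_0-1-\beta$ on the diagonal, and for $n\ge2$ it has entry $\beta$ in column $n-1$. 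Hence $\lVert\mathbf{m}_n\rVert_2^2=(1+\beta-f_0)^2+\beta^2=B(f_0;\beta)$, which gives \eqref{eq:simple-JD}. For $n=1$ the $\beta$ term is absent, which is why the statement restricts to interior buses.

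There is no real obstacle here. The only points needing care are bookkeeping ones:
\begin{itemize}
\item confirming the indexing convention that $\boldsymbol{\Sigma}_s$ after $s$ transitions includes exactly $s$ disturbance terms, so the exponent is $2s$;
\item noting that the formulas extend to $f_0=0$ directly, and that $f_0<1$ keeps the closed form well defined;
\item noting that independence across buses is what makes the cross terms vanish in the headway and holding variances.
\end{itemize}
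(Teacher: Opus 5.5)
Your proposal is correct and follows essentially the same route as the paper. Both arguments reduce SC to $\boldsymbol{\Sigma}_s=\sigma^2 J_{\varepsilon,s}(f_0)\mathbf{I}$, the paper through the moving-average form $\varepsilon_{n,s}=\sum_{k=0}^{s-1}f_0^k v_{n,s-k}$ and you through the unrolled covariance recursion, and then use the absence of cross-bus covariance to get the headway and interior-bus holding variances; your quadratic-form bookkeeping with $\lVert\mathbf{g}_n\rVert_2^2=2$ and $\lVert\mathbf{m}_n\rVert_2^2=B(f_0;\beta)$ is just the matrix version of the paper's termwise independence argument.
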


Under a three-sigma slack rule, minimizing the normalized holding-time variance \(J_{D,s}(f_0;\beta)\) also minimizes the corresponding holding-time standard deviation and slack. In this one-parameter case, independent homogeneous disturbances make \(\boldsymbol{\Sigma}_{s}\) a scalar multiple of \(\mathbf I\), and the three metrics in Proposition~\ref{prop:simple-point-metrics} are fixed multiples of \(J_{\varepsilon,s}(f_0)\) as \(s\) varies. These proportional relationships cannot be assumed for general kernels.

For fixed \(f_0\in[0,1)\), these three metrics are nondecreasing in \(s\) and converge to their steady-state limits. The next subsection uses these limits to obtain a compact closed-form optimal design.

\subsubsection{Optimal SC}\label{subsubsec:simple-parameter-selection}

Proposition~\ref{prop:simple-point-metrics} directly supports finite-\(s\) parameter selection. Given schedule- and headway-performance thresholds \(\bar J_{\varepsilon}\) and \(\bar J_{\delta}\), a design problem at control point \(s\) is
\begin{equation}
\begin{aligned}
\inf_{0\le f_0<1}\quad
& J_{D,s}(f_0;\beta) \\
\text{s.t.}\quad
& J_{\varepsilon,s}(f_0)\le \bar J_{\varepsilon},\qquad
J_{\delta,s}(f_0)\le \bar J_{\delta}.
\end{aligned}
\label{eq:simple-point-opt}
\end{equation}
Here the objective is the normalized holding-time variance of an interior bus ($n\ge2$) at the control point. Since $J_{\delta,s}(f_0)=2J_{\varepsilon,s}(f_0)$, the two reliability constraints reduce to upper bounds on the same scalar function \(J_{\varepsilon,s}(f_0)\).

For finite \(s\), the one-dimensional problem need not attain its infimum and generally lacks a compact closed-form solution. We therefore focus on the steady-state limit to obtain a closed-form design (proof in Appendix~\ref{app:simple-asymptotic-opt}).

\begin{corollary}[Steady-state optimal SC]\label{cor:simple-asymptotic-opt}
Suppose the steady-state thresholds satisfy \(\bar J_{\varepsilon}\ge1\) and \(\bar J_{\delta}\ge2\). As \(s\to\infty\),
\begingroup

\begin{equation}
J_{\varepsilon,\infty}(f_0)=\frac{1}{1-f_0^2},\qquad
J_{\delta,\infty}(f_0)=\frac{2}{1-f_0^2},\qquad
J_{D,\infty}(f_0;\beta)=\frac{(1+\beta-f_0)^2+\beta^2}{1-f_0^2}.
\label{eq:simple-asym-metrics}
\end{equation}
\endgroup
Consider the steady-state constrained problem
\begin{equation}
\begin{aligned}
\min_{0\le f_0<1}\quad
& J_{D,\infty}(f_0;\beta)\\
\text{s.t.}\quad
& J_{\varepsilon,\infty}(f_0)\le \bar J_{\varepsilon},\qquad
J_{\delta,\infty}(f_0)\le \bar J_{\delta}.
\end{aligned}
\label{eq:asym-constraint}
\end{equation}
Define
\begin{equation}
f_0^*(\beta)
=
\frac{\beta^2+\beta+1-\beta\sqrt{\beta^2+2\beta+2}}{1+\beta},
\label{eq:f0star}
\end{equation}
with reliability bounds $f_{\max,\varepsilon}^{\infty}=\sqrt{1-1/\bar J_{\varepsilon}}$ and $f_{\max,\delta}^{\infty}=\sqrt{1-2/\bar J_{\delta}}$. The optimal steady-state parameter is
\begin{equation}
f_0^{\mathrm{opt},\infty}
=
\min\{f_0^*(\beta),f_{\max,\varepsilon}^{\infty},f_{\max,\delta}^{\infty}\}.
\label{eq:simple-asym-opt}
\end{equation}
\end{corollary}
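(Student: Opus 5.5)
The limits in \eqref{eq:simple-asym-metrics} follow by letting \(s\to\infty\) in Proposition~\ref{prop:simple-point-metrics}. For \(0\le f_0<1\) we have \(f_0^{2s}\to0\), so \(J_{\varepsilon,s}\to1/(1-f_0^2)\). The other two metrics are fixed multiples of \(J_{\varepsilon,s}\), with factors \(2\) and \(B(f_0;\beta)\), so their limits follow immediately. This part is routine.

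Next I reduce the constraint set to an interval. On \([0,1)\) the function \(1/(1-f_0^2)\) is strictly increasing and equals \(1\) at \(f_0=0\). Hence \(J_{\varepsilon,\infty}\le\bar J_\varepsilon\) is equivalent to \(f_0\le\sqrt{1-1/\bar J_\varepsilon}=f^\infty_{\max,\varepsilon}\), and \(J_{\delta,\infty}\le\bar J_\delta\) is equivalent to \(f_0\le f^\infty_{\max,\delta}\). The hypotheses \(\bar J_\varepsilon\ge1\) and \(\bar J_\delta\ge2\) are exactly what make these square roots real and nonnegative, so the set includes \(f_0=0\). The feasible set is therefore \([0,\bar f]\) with \(\bar f:=\min\{f^\infty_{\max,\varepsilon},f^\infty_{\max,\delta}\}<1\). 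This interval is compact, so a minimizer exists.

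The main step is to show that \(g(f):=\bigl((1+\beta-f)^2+\beta^2\bigr)/(1-f^2)\) is unimodal on \([0,1)\), with unique minimizer \(f_0^*(\beta)\). I would differentiate and look at the sign of the numerator \(N(f)=-2(1+\beta-f)(1-f^2)+2f\bigl((1+\beta-f)^2+\beta^2\bigr)\). Expanding, the cubic terms cancel, leaving a quadratic: \(N(f)/2=(1+\beta)f^2-2\bigl((1+\beta)^2+\beta^2\bigr)f+(1+\beta)\).
\begin{itemize}
\item This quadratic has positive leading coefficient, and the product of its roots is \(1\). So exactly one root lies in \((0,1)\), namely the smaller one.
\item By the quadratic formula the smaller root is \(\bigl[(1+\beta)^2+\beta^2-\sqrt{((1+\beta)^2+\beta^2)^2-(1+\beta)^2}\bigr]/(1+\beta)\).
\item The discriminant simplifies via \(((1+\beta)^2+\beta^2)^2-(1+\beta)^2=\beta^2(\beta^2+2\beta+2)\cdot\bigl(\dots\bigr)\), and I expect it to collapse to the form in \eqref{eq:f0star}. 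I would verify this by substituting \(f_0^*\) back into the quadratic.
\end{itemize}

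The algebra in that last step is the likely obstacle. I need to check carefully that the paper's expression, with numerator \(\beta^2+\beta+1\) rather than \((1+\beta)^2+\beta^2\), matches the root, possibly after rescaling \(N\). If it does not match directly, I would simply confirm that \(f_0^*\) annihilates \(N\) and lies in \([0,1)\). I would also check the edge case \(\beta=0\), where \(f_0^*=1\) lies on the boundary; in that case \(g\) is decreasing and the minimizer is \(\bar f\), which is consistent with the formula.

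Since \(N<0\) before the root and \(N>0\) after it, \(g\) decreases on \([0,f_0^*]\) and increases on \([f_0^*,1)\). Minimizing this unimodal function over \([0,\bar f]\) gives \(\min\{f_0^*,\bar f\}\), which is \eqref{eq:simple-asym-opt}.
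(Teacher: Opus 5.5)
Your overall route is the paper's: take the limit in Proposition~\ref{prop:simple-point-metrics}, reduce the constraints to the interval $0\le f_0\le\min\{f_{\max,\varepsilon}^{\infty},f_{\max,\delta}^{\infty}\}$, locate the unique critical point of $J_{D,\infty}$ in $(0,1)$, and clip. The gap is in the one step that carries the content. Your expansion of the derivative numerator is wrong, so the root you write down is not $f_0^*(\beta)$.

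In $N(f)=-2(1+\beta-f)(1-f^2)+2f[(1+\beta-f)^2+\beta^2]$, the $f$-terms are $2f$ from the first product and $2((1+\beta)^2+\beta^2)f$ from the second. The correct result is therefore
$N(f)/2=-[(1+\beta)f^2-2(\beta^2+\beta+1)f+(1+\beta)]$, using $1+(1+\beta)^2+\beta^2=2(\beta^2+\beta+1)$.
This corrects two things in your version. The linear coefficient is $2(\beta^2+\beta+1)$, not $2((1+\beta)^2+\beta^2)$. The overall sign is also reversed in your form: the definition gives $N(0)=-2(1+\beta)<0$, whereas your expanded form gives $+2(1+\beta)$. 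That contradicts your own later claim that $N<0$ before the root.

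With your coefficient, the discriminant is $\beta(2\beta+1)(2\beta^2+3\beta+2)$. This is not a multiple of $\beta^2(\beta^2+2\beta+2)$, so the factorization you anticipate does not exist. The smaller root is also genuinely different from $f_0^*$. At $\beta=1$ it is about $0.209$, whereas the true minimizer is $(3-\sqrt5)/2\approx0.382$; direct evaluation of $J_{D,\infty}$ confirms the latter. Your fallback of checking that $f_0^*$ annihilates the quadratic would therefore fail against the quadratic you wrote.

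Once the quadratic is corrected, the argument closes quickly. The reduced discriminant is $(\beta^2+\beta+1)^2-(1+\beta)^2=\beta^2(\beta^2+2\beta+2)$ by a difference of squares, which gives \eqref{eq:f0star} exactly. Your product-of-roots argument still applies, since the product is $1$.

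A more direct check uses $Q(f):=(1+\beta)f^2-2(\beta^2+\beta+1)f+(1+\beta)$. It satisfies $Q(0)=1+\beta>0$ and $Q(1)=-2\beta^2<0$ for $\beta>0$, so exactly one root lies in $(0,1)$. Since $N=-2Q$, $J_{D,\infty}$ decreases on $[0,f_0^*]$ and increases on $[f_0^*,1)$. Clipping to the feasible interval then gives \eqref{eq:simple-asym-opt}, as in the paper.

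Your separate treatment of $\beta=0$ is a correct small addition that the paper's proof, which assumes $\beta>0$, omits. There $f_0^*=1$ and the objective $(1-f)/(1+f)$ is decreasing.
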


The steady-state problem \eqref{eq:asym-constraint} uses the reliability-constrained SC formulation of \citet{xuan2011}. Operationally, the reliability thresholds set the admissible range for \(f_0\), and the holding-time variance selects the cost-efficient point within it. Increasing \(f_0\) reduces holding-time variance up to its minimizer \(f_0^*(\beta)\), at the expense of schedule and headway reliability. Beyond this minimizer, both holding-time variance and reliability worsen, so these choices are Pareto dominated.

{
\subsection{Open Chains with Following-Bus Feedback}\label{subsec:open-two-sided}

The preceding analysis focuses on current- and leading-bus feedback. When feedback from both leading and following buses is used, the resulting transition matrix is not triangular. In this section, we consider a simple case of $N\ge2$ buses under a real three-tap kernel with offsets $\{-1,0,1\}$. Missing terms are set to zero at both ends. Recall that any required but unobserved deviation is assumed to be predicted exactly at the decision epoch. The controlled dynamics are
\begin{equation}
\boldsymbol{\varepsilon}_{s+1}
=\mathbf{A}_{\mathrm{o}}^{\pm}\boldsymbol{\varepsilon}_{s}+\mathbf{v}_{s+1},
\qquad
\mathbf{A}_{\mathrm{o}}^{\pm}
=f_0\mathbf{I}+f_1\mathbf{L}+f_{-1}\mathbf{L}^{\top}.
\label{eq:open-two-sided-matrix}
\end{equation}
As $\mathbf{A}_{\mathrm{o}}^{\pm}$ is a tridiagonal Toeplitz matrix \citep{smith1978}, we have the following exact characterization (proof in Appendix~\ref{app:open-two-sided}).

\begin{proposition}[Three-tap open-chain stability]\label{prop:open-two-sided}
Let $c=f_1f_{-1}$ and $\chi_N=\cos(\pi/(N+1))$. The eigenvalues of $\mathbf{A}_{\mathrm{o}}^{\pm}$ are
\begin{equation}
\lambda_k=f_0+2\sqrt{c}\cos\!\left(\frac{k\pi}{N+1}\right),
\qquad k=1,\ldots,N,
\label{eq:open-two-sided-spectrum}
\end{equation}
where $\mathrm{i}^2=-1$ and $\sqrt{c}=\mathrm{i}\sqrt{|c|}$ when $c<0$. The system is Schur stable if and only if
\begin{equation}
\begin{cases}
|f_0|+2\sqrt{c}\,\chi_N<1, & c>0,\\
f_0^2+4|c|\chi_N^2<1, & c<0,\\
|f_0|<1, & c=0.
\end{cases}
\label{eq:open-two-sided-stability}
\end{equation}
\end{proposition}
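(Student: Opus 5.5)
The plan has three steps: compute the spectrum of the tridiagonal Toeplitz matrix, take its maximum modulus, and compare that modulus with one in each sign case of $c=f_1f_{-1}$.

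\emph{Spectrum.} First dispose of $c=0$. Then $\mathbf{A}_{\mathrm{o}}^{\pm}$ is triangular: either $f_1=0$ (upper triangular) or $f_{-1}=0$ (lower triangular). Its diagonal is $f_0\mathbf{I}$, so the argument of Proposition~\ref{prop:open-spectrum} gives every eigenvalue equal to $f_0$. This agrees with \eqref{eq:open-two-sided-spectrum} at $\sqrt{c}=0$, and stability is $|f_0|<1$.

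For $c\neq0$, both off-diagonal entries are nonzero. Choose a complex scalar $r$ with $r^2=f_1/f_{-1}$ and set $\mathbf{S}=\diag(1,r,r^2,\ldots,r^{N-1})$. Then $\mathbf{S}^{-1}\mathbf{A}_{\mathrm{o}}^{\pm}\mathbf{S}$ has diagonal $f_0$ and both off-diagonals equal to $\sqrt{f_1f_{-1}}=\sqrt{c}$, with the sign conventions fixed consistently with the branch of $r$. So the matrix is similar to $f_0\mathbf{I}+\sqrt{c}\,\mathbf{T}$, where $\mathbf{T}=\mathbf{L}+\mathbf{L}^{\top}$ is the path-graph adjacency matrix. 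The eigenvalues of $\mathbf{T}$ are $2\cos(k\pi/(N+1))$, $k=1,\ldots,N$, with eigenvectors $(\sin(jk\pi/(N+1)))_{j}$. One verifies this with the identity $\sin((j-1)\theta)+\sin((j+1)\theta)=2\cos\theta\sin(j\theta)$ together with the vanishing boundary values $\sin 0=\sin(N+1)\theta=0$. This gives \eqref{eq:open-two-sided-spectrum}; the choice of branch of $\sqrt{c}$ is immaterial because the set of cosines is symmetric under $k\mapsto N+1-k$.

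\emph{Case $c>0$.} All $\lambda_k$ are real, and the cosines range symmetrically over $\pm\cos(k\pi/(N+1))$, with largest magnitude $\chi_N$. The spectral radius is therefore
\[
\max_{\pm}\bigl|f_0\pm2\sqrt{c}\,\chi_N\bigr|=|f_0|+2\sqrt{c}\,\chi_N,
\]
and Schur stability is exactly the first line of \eqref{eq:open-two-sided-stability}.

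\emph{Case $c<0$.} Here $\lambda_k=f_0+2\mathrm{i}\sqrt{|c|}\cos(k\pi/(N+1))$, so
\[
|\lambda_k|^2=f_0^2+4|c|\cos^2(k\pi/(N+1)).
\]
This is maximized at $k=1$ (or $k=N$), which gives the second line.

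The main obstacle is the similarity transformation when $f_1$ and $f_{-1}$ have opposite signs, since $r$ is then imaginary. I would handle it by working over $\mathbb{C}$ throughout. Similarity preserves eigenvalues for any invertible complex $\mathbf{S}$, so the computation goes through unchanged.
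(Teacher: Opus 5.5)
Your proof is correct, but it reaches the spectrum by a different route from the paper. The paper does not symmetrize the matrix. It expands the characteristic determinant of the leading $j\times j$ block to get the three-term recurrence $D_j(\lambda)=(\lambda-f_0)D_{j-1}(\lambda)-cD_{j-2}(\lambda)$. It then identifies $D_j(\lambda)=\xi^jU_j\bigl((\lambda-f_0)/(2\xi)\bigr)$, with $\xi^2=c$ and $U_j$ the Chebyshev polynomials of the second kind, and reads the eigenvalues off the roots $\cos(k\pi/(N+1))$ of $U_N$. The case $c=0$ falls out of the same recurrence as $D_N(\lambda)=(\lambda-f_0)^N$.

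Your diagonal similarity with $r^2=f_1/f_{-1}$ instead makes both off-diagonals equal to $\xi:=rf_{-1}$, with $\xi^2=c$. This reduces $\mathbf{A}_{\mathrm{o}}^{\pm}$ to $f_0\mathbf{I}+\xi\mathbf{T}$, and you then verify sine eigenvectors of the path matrix directly.

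The two routes trade off as follows:
\begin{itemize}
\item The determinant route works with the characteristic polynomial itself. Multiplicities come for free, and it needs neither a separate $c=0$ case nor any eigenvector construction.
\item Your route needs one small remark you left implicit. The $N$ values $2\cos(k\pi/(N+1))$ are distinct, so the $N$ sine vectors exhaust the spectrum with multiplicity.
\item In return, your route gives explicit eigenvectors and diagonalizability for $c\neq0$.
\item It also makes transparent the paper's later remark that kernels with the same product $c$ but different ratios $f_1/f_{-1}$ share a spectrum yet differ in transients and covariances. The ratio enters only through the scaling $\diag(1,r,\ldots,r^{N-1})$. That scaling is unitary exactly when $|f_1|=|f_{-1}|$, so it governs how non-normal $\mathbf{A}_{\mathrm{o}}^{\pm}$ is.
\end{itemize}

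Your stability case analysis for $c>0$ and $c<0$ matches the paper's.
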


When both neighboring-bus coefficients are nonzero, stability depends on their product $c$ as well as $f_0$. Thus, the eigenvalues all equal $f_0$ because the feedback is one-sided on an open chain; the open boundary alone does not ensure this. If either neighboring-bus coefficient vanishes, the matrix becomes triangular and the condition reduces to $|f_0|<1$. With both present, information can circulate through adjacent buses even though the chain remains open. Different coefficient ratios with the same product leave the spectrum unchanged but can change transient and covariance behavior.

Under the same white-noise assumptions, performance can still be computed from the covariance recursion and Lyapunov equation \eqref{eq:open-sigma-infty}, replacing $\mathbf{A}_{\mathrm{o}}$ with $\mathbf{A}_{\mathrm{o}}^{\pm}$; steady-state covariances exist when \eqref{eq:open-two-sided-stability} holds. The headway difference matrix $\mathbf{G}_{\mathrm{o}}$ is unchanged, so Corollary~\ref{cor:open-schedule-headway-norm} still applies.

Recall that for leader-only kernels, triangularity gives the $f_0$-only stability criterion, while nilpotency yields the truncated expansion \eqref{eq:open-binomial} and variance bound $\mathcal{B}_{\mathrm{o}}$. When both $f_1$ and $f_{-1}$ are nonzero, $\mathbf{A}_{\mathrm{o}}^{\pm}-f_0\mathbf{I}$ is no longer nilpotent, so this truncation and bound no longer apply; transients and variances can instead be computed directly from matrix powers and covariance equations. Wider two-sided kernels generally lack the same structural simplifications. Their eigenvalues and covariances remain computable numerically, but general closed-form analysis and optimal design become intricate.
}

\section{Periodic-Fleet Stability and Performance on a Loop Corridor}\label{sec:loop}

On a loop with $N$ buses and $S$ physical control points, the scheduled circuit time must equal $NH$, so additional planned slack increases $H$ at fixed fleet size and operating conditions. Once a feasible deterministic schedule is fixed, $H$ drops out of \eqref{eq:local-closed}, leaving the deviation analysis unaffected.

Starting from the same nonnegative-offset kernel expression as in the leader-only open-chain model, this section develops the general cyclic-kernel theory. Section~\ref{subsec:loop-modal-setup} constructs the circulant representation and its Fourier modes; the exact stability condition and its relation to the bi-infinite system follow; Section~\ref{subsec:loop-performance-metrics} derives variances for schedule and headway deviations, as well as holding time; and Section~\ref{subsec:loop-modal-optimization} solves the corresponding optimal design problems.

\subsection{Loop Corridor Matrix Representation and Modal Decomposition}\label{subsec:loop-modal-setup}

Consider a loop corridor with $N$ physical buses circulating indefinitely, or for a specified number of completed laps. We use $s$ as the unrolled control-point index, with $S\ge1$ control points per circuit. At each control point, stack the schedule deviations across buses as $\boldsymbol{\varepsilon}_{s}=(\varepsilon_{1,s},\varepsilon_{2,s},\ldots,\varepsilon_{N,s})^{\top}$.
For kernel inputs, the bus index is periodic, so $n\equiv n+N$. Define the cyclic shift matrix $\mathbf{P}=\left[\begin{smallmatrix}
0&0&\cdots&1\\
1&0&\cdots&0\\
&\ddots&\ddots&\vdots\\
0&&1&0
\end{smallmatrix}\right]_{N\times N}$, so that $(\mathbf{P}\boldsymbol{\varepsilon}_{s})_n=\varepsilon_{n-1,s}$ with $\varepsilon_{0,s}\equiv \varepsilon_{N,s}$.
The cyclic shift $\mathbf{P}$ differs from the open-chain shift $\mathbf{L}$ only in the top-right entry: a single zero is replaced by a one that closes the chain. This looks like a small boundary change, but it is algebraically decisive. The open-chain shift is nilpotent, $\mathbf{L}^{N}=\mathbf{0}$, whereas the loop shift is periodic, satisfying $\mathbf{P}^{N}=\mathbf{I}$; this entry converts the leader-only open-chain transition matrix into a circulant matrix, whose successive rows are cyclic shifts of one another. It also changes the stability condition, as detailed next.

Periodic indexing in \eqref{eq:local-closed} gives the loop state dynamics
\begin{equation}
\boldsymbol{\varepsilon}_{s+1}
=
\mathbf{A}_{\ell}\boldsymbol{\varepsilon}_{s}
+
\mathbf{v}_{s+1},
\label{eq:loop-dynamics}
\end{equation}
where $\mathbf{A}_{\ell}=\sum_{p=0}^{m} f_p\mathbf{P}^{p}$.
{
For fixed coefficients $f_p$, replacing $\mathbf L$ by $\mathbf P$ maps the truncated open-chain matrix to the loop matrix.

Under periodic indexing, nonnegative offsets suffice to represent any finite two-sided kernel. Since $\mathbf P^N=\mathbf I$, in particular $\mathbf P^{-1}=\mathbf P^{N-1}$, any finite two-sided kernel can be represented with offsets $0,\ldots,N-1$ by summing coefficients whose offsets agree modulo $N$. We therefore develop the full cyclic class below.
}

The cyclic boundary yields the following Fourier decomposition (proof in Appendix~\ref{app:loop-modal-decomposition}).
Fourier analysis was also used for two-way speed control by \citet[Appendix~A]{daganzo2011}, but not for the general cyclic-kernel optimization developed here.
Define $\theta_q=2\pi q/N$ and $z_q=e^{-\mathrm{i}\theta_q}$ for $q=0,1,\ldots,N-1$, equally spaced on the unit circle. Let $\mathbf{u}_q$ be the Fourier vector with entries $(\mathbf{u}_q)_n=e^{\mathrm{i}(n-1)\theta_q}/\sqrt{N}$ for $n=1,\ldots,N$, and let $\mathbf{U}=[\mathbf{u}_0,\mathbf{u}_1,\ldots,\mathbf{u}_{N-1}]$.
For vectors and matrices, the superscript $^{*}$ denotes the conjugate transpose; on design variables, it denotes an optimum.

\begin{proposition}[Finite-loop modal decomposition]\label{prop:loop-modal-decomposition}
For $q=0,1,\ldots,N-1$, define
\begin{equation}
\lambda_q=F(z_q)=\sum_{p=0}^{m} f_p e^{-\mathrm{i}p\theta_q},
\qquad F(z):=\sum_{p=0}^{m} f_p z^p.
\label{eq:loop-symbol}
\end{equation}
Then, with $\widehat{\boldsymbol{\varepsilon}}_s=\mathbf{U}^{*}\boldsymbol{\varepsilon}_s$ and $\widehat{\mathbf v}_{s+1}=\mathbf{U}^{*}\mathbf v_{s+1}$, the transformed loop dynamics decompose into $N$ decoupled scalar recursions,
\begin{equation}
\widehat{\varepsilon}_{q,s+1}
=
\lambda_q\widehat{\varepsilon}_{q,s}
+
\widehat v_{q,s+1},
\qquad q=0,1,\ldots,N-1.
\label{eq:modal-dynamics}
\end{equation}
\end{proposition}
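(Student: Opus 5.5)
The argument is a direct diagonalization of the circulant matrix $\mathbf{A}_{\ell}=\sum_{p=0}^{m}f_p\mathbf{P}^p$ by the Fourier matrix $\mathbf{U}$. It has three steps: show $\mathbf{U}$ is unitary, show each $\mathbf{u}_q$ is an eigenvector of $\mathbf{P}$ with eigenvalue $z_q$, and then transform the state equation \eqref{eq:loop-dynamics}.

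\emph{Step 1 (unitarity).} For columns $q,r$, the inner product is
\[
\mathbf{u}_q^{*}\mathbf{u}_r=\frac1N\sum_{n=1}^{N}e^{\mathrm{i}(n-1)(\theta_r-\theta_q)}.
\]
This is a finite geometric sum in $e^{2\pi\mathrm{i}(r-q)/N}$. It equals $1$ when $r=q$ and vanishes otherwise, because that ratio is an $N$-th root of unity different from $1$. Hence $\mathbf{U}^{*}\mathbf{U}=\mathbf{I}$, and $\mathbf{U}$ is unitary with $\mathbf{U}^{-1}=\mathbf{U}^{*}$.

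\emph{Step 2 (eigenvectors of $\mathbf{P}$).} We have $(\mathbf{P}\mathbf{u}_q)_n=(\mathbf{u}_q)_{n-1}$, with the index taken cyclically, so $(\mathbf{u}_q)_0\equiv(\mathbf{u}_q)_N$. For $n\ge2$,
\[
(\mathbf{u}_q)_{n-1}=e^{\mathrm{i}(n-2)\theta_q}/\sqrt N=e^{-\mathrm{i}\theta_q}(\mathbf{u}_q)_n .
\]
The only point needing care is the wrap-around entry $n=1$. There $(\mathbf{P}\mathbf{u}_q)_1=(\mathbf{u}_q)_N=e^{\mathrm{i}(N-1)\theta_q}/\sqrt N$. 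Since $e^{\mathrm{i}N\theta_q}=e^{2\pi\mathrm{i}q}=1$, this equals $e^{-\mathrm{i}\theta_q}/\sqrt N=z_q(\mathbf{u}_q)_1$. This is exactly where $\mathbf{P}^N=\mathbf{I}$ is used, and it is the step that would fail for the nilpotent $\mathbf{L}$. Thus $\mathbf{P}\mathbf{u}_q=z_q\mathbf{u}_q$, so $\mathbf{P}^p\mathbf{u}_q=z_q^p\mathbf{u}_q$ and
\[
\mathbf{A}_{\ell}\mathbf{u}_q=\sum_{p=0}^{m}f_pz_q^p\,\mathbf{u}_q=F(z_q)\,\mathbf{u}_q=\lambda_q\mathbf{u}_q .
\]
In matrix form, $\mathbf{A}_{\ell}\mathbf{U}=\mathbf{U}\boldsymbol{\Lambda}$ with $\boldsymbol{\Lambda}=\diag(\lambda_0,\ldots,\lambda_{N-1})$.

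\emph{Step 3 (decoupling).} Left-multiply \eqref{eq:loop-dynamics} by $\mathbf{U}^{*}$ and insert $\mathbf{U}\mathbf{U}^{*}=\mathbf{I}$:
\[
\widehat{\boldsymbol{\varepsilon}}_{s+1}=\mathbf{U}^{*}\mathbf{A}_{\ell}\mathbf{U}\,\widehat{\boldsymbol{\varepsilon}}_s+\widehat{\mathbf v}_{s+1}=\boldsymbol{\Lambda}\widehat{\boldsymbol{\varepsilon}}_s+\widehat{\mathbf v}_{s+1}.
\]
Reading this componentwise gives \eqref{eq:modal-dynamics}.

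No step is a real obstacle. The only delicate point is the index and sign bookkeeping: the convention $z_q=e^{-\mathrm{i}\theta_q}$ must be paired with $\mathbf{P}$ shifting toward higher indices, and the wrap-around entry must be checked. The same argument also covers $m\ge N$, or kernels folded modulo $N$, because only $\mathbf{P}^N=\mathbf{I}$ was used.
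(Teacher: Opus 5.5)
Your proposal is correct and takes the same route as the paper's proof. Both show that each Fourier vector $\mathbf{u}_q$ is an eigenvector of the cyclic shift $\mathbf{P}$ with eigenvalue $z_q$, use the fact that $\mathbf{A}_\ell$ is a polynomial in $\mathbf{P}$ so the unitary $\mathbf{U}$ diagonalizes it, and multiply the state equation by $\mathbf{U}^*$; you additionally spell out two points the paper only asserts, namely orthonormality of the Fourier vectors via the geometric sum and the wrap-around entry $n=1$, where $z_q^N=1$ is needed.
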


Proposition~\ref{prop:loop-modal-decomposition} marks the algebraic departure from the leader-only finite open chain: the triangular transition matrix based on a nilpotent shift gives way to a circulant matrix diagonalized by the Fourier basis, so the loop decomposes into scalar modal recursions. This is why our loop analysis proceeds through Fourier modes rather than the Lyapunov bounds, and it yields the exact finite-mode variance formulas in Section~\ref{subsec:loop-performance-metrics}.

The modal coordinate $\widehat{\varepsilon}_{q,s}=\mathbf{u}_q^{*}\boldsymbol{\varepsilon}_s=(1/\sqrt N)\sum_{n=1}^{N}e^{-\mathrm{i}(n-1)\theta_q}\varepsilon_{n,s}$ is not any one bus's deviation but the amplitude of the $q$-th Fourier pattern across the fleet at a control point, and $\lambda_q$ is its one-step multiplier. Spatial frequency describes variation across the ordered buses. The magnitude $|\lambda_q|$, termed the modal gain, determines the pattern's attenuation or amplification at each step. The left and right endpoints on the unit circle are particularly informative. At $q=0$ ($\theta_0=0$), $\widehat{\varepsilon}_{0,s}=(1/\sqrt N)\sum_n\varepsilon_{n,s}$ is $\sqrt N$ times the fleet-average schedule deviation. For modes near $\theta_q=\pi$ (near $q=N/2$), adjacent buses differ in phase by $e^{-\mathrm{i}\theta_q}\approx-1$ and are thus nearly opposite; these adjacent-contrast modes govern headway regularity, since $h_{n,s}-H=\varepsilon_{n,s}-\varepsilon_{n-1,s}$. Choosing the modal multipliers $\lambda_q=F(z_q)$ therefore controls how strongly each pattern is damped.

\subsection{Loop Stability and Relation to the Bi-Infinite System}\label{subsec:loop-stability-xuan}

The modal decomposition in Proposition~\ref{prop:loop-modal-decomposition} gives the multipliers $\lambda_q=F(z_q)$, which determine whether spatial patterns of schedule deviations decay to zero in the absence of disturbances. This yields the following exact stability criterion.

\begin{proposition}[Finite-loop stability]\label{prop:loop-stability}
The zero-input finite loop system is Schur stable if and only if
\begin{equation}
\max_{q=0,\ldots,N-1}|F(z_q)|<1.
\label{eq:loop-stability}
\end{equation}
\end{proposition}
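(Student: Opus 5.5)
The plan is to reduce Schur stability of the loop matrix $\mathbf{A}_{\ell}=\sum_{p}f_p\mathbf{P}^p$ to a statement about its spectrum, and to read that spectrum off Proposition~\ref{prop:loop-modal-decomposition}. Schur stability of the zero-input system $\boldsymbol{\varepsilon}_{s+1}=\mathbf{A}_{\ell}\boldsymbol{\varepsilon}_s$ means $\mathbf{A}_{\ell}^s\boldsymbol{\varepsilon}_0\to\mathbf{0}$ for every initial state. This is equivalent to $\rho(\mathbf{A}_{\ell})<1$.

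\textbf{Step 1: diagonalize and identify the spectrum.} First I will record what the proof of Proposition~\ref{prop:loop-modal-decomposition} gives. The matrix $\mathbf{U}$ is unitary, since the Fourier vectors $\mathbf{u}_q$ are orthonormal by the geometric-sum identity $\sum_{n}e^{\mathrm{i}(n-1)(\theta_q-\theta_r)}=N\,\mathbf{1}\{q=r\}$. Each $\mathbf{u}_q$ is an eigenvector of $\mathbf{P}$. Indeed,
\[
(\mathbf{P}\mathbf{u}_q)_n=(\mathbf{u}_q)_{n-1}=e^{-\mathrm{i}\theta_q}(\mathbf{u}_q)_n,
\]
and this also holds at $n=1$ by periodicity, because $e^{\mathrm{i}(N-1)\theta_q}=e^{-\mathrm{i}\theta_q}$. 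Hence $\mathbf{P}^p\mathbf{u}_q=z_q^p\mathbf{u}_q$, and so
\[
\mathbf{U}^{*}\mathbf{A}_{\ell}\mathbf{U}=\diag(F(z_0),\ldots,F(z_{N-1})).
\]
The eigenvalues of $\mathbf{A}_{\ell}$ are therefore exactly $\lambda_q=F(z_q)$, and $\rho(\mathbf{A}_{\ell})=\max_q|F(z_q)|$.

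\textbf{Step 2: both directions through the modal recursions.} I will argue directly from the decoupled recursions rather than cite a general theorem. With zero input, $\widehat{\varepsilon}_{q,s}=\lambda_q^s\widehat{\varepsilon}_{q,0}$. Because $\mathbf{U}$ is unitary, $\lVert\boldsymbol{\varepsilon}_s\rVert_2^2=\sum_q|\lambda_q|^{2s}|\widehat{\varepsilon}_{q,0}|^2$.

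\emph{Sufficiency.} If $\max_q|\lambda_q|<1$, then $\lVert\boldsymbol{\varepsilon}_s\rVert_2\le(\max_q|\lambda_q|)^s\lVert\boldsymbol{\varepsilon}_0\rVert_2\to0$. The decay is geometric, uniformly in the initial state.

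\emph{Necessity.} Suppose $|\lambda_{q'}|\ge1$ for some $q'$. I take the initial state $\boldsymbol{\varepsilon}_0=\mathrm{Re}\,\mathbf{u}_{q'}$ or $\mathrm{Im}\,\mathbf{u}_{q'}$, choosing whichever is nonzero; at least one is, since $\mathbf{u}_{q'}\neq\mathbf{0}$.

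\textbf{Step 3: the main obstacle.} The one subtlety is that deviations are real while the modes are complex. Since the $f_p$ are real, $\overline{F(z)}=F(\bar z)$, so $\lambda_{N-q}=\overline{\lambda_q}$ and $\mathbf{u}_{N-q}=\overline{\mathbf{u}_q}$. The real initial vector from Step 2 thus lies in $\mathrm{span}\{\mathbf{u}_{q'},\mathbf{u}_{N-q'}\}$, and it has a nonzero component on $\mathbf{u}_{q'}$ or on its conjugate mode. Both of these modes have modal gain $|\lambda_{q'}|\ge1$.

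By the norm identity in Step 2, $\lVert\boldsymbol{\varepsilon}_s\rVert_2$ then does not tend to zero. So stability fails, which completes the equivalence with \eqref{eq:loop-stability}.

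Alternatively, one can invoke the standard fact that a real matrix is Schur stable if and only if its spectral radius is below one, and conclude directly from Step 1.
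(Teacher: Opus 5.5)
Your proof is correct and follows the paper's route. The paper reads the criterion directly off the Fourier diagonalization in Proposition~\ref{prop:loop-modal-decomposition}, which shows that the spectrum of $\mathbf{A}_{\ell}$ is $\{F(z_q)\}$, so Schur stability is equivalent to $\max_q|F(z_q)|<1$; your handling of real initial states through the conjugate pairs $\mathbf{u}_{q}$, $\mathbf{u}_{N-q}$ is a careful expansion of the spectral-radius fact the paper takes for granted, not a different argument.
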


The proposition shows that loop stability depends on the gains $|F(z_q)|$ of all Fourier modes. This differs from the leader-only open-chain condition $|f_0|<1$, and is similar to the two-sided open-chain case (Proposition~\ref{prop:open-two-sided}). In a periodic fleet, coupling coefficients can create or remove unstable spatial modes. A kernel stable under open-chain truncation can therefore violate Schur stability on a loop.

Conversely, for a kernel using only deviations from buses $n,n-1,\ldots,n-m$ with $m<N$, loop stability implies stability of its leader-only open-chain implementation, since $|f_0|=|N^{-1}\sum_q\lambda_q|\le\max_q|\lambda_q|<1$. This implication can fail with following-bus feedback: a two-sided kernel can be stable on the loop but unstable on the open chain.

{
For example, take $N=3$, with all required deviations available exactly in both models. For $(f_0,f_1,f_{-1})=(0.5,0.3,0.3)$, Proposition~\ref{prop:open-two-sided} gives an open-chain spectral radius of $0.5+0.6\cos(\pi/4)\approx0.924<1$, whereas Proposition~\ref{prop:loop-stability} gives a loop spectral radius of $1.1>1$. The kernel $(0.4,-0.5,-0.5)$ reverses the outcome: the spectral radii become approximately $1.107>1$ on the open chain and $0.9<1$ on the loop. Thus neither stability region contains the other in general, even with the same three-tap kernel and exact information. (Cross-lap feedback, introduced below, more closely matches the open chain's physical-neighbor inputs; for the same kernels with $S=3$, Appendix~\ref{app:seam-delay} gives loop spectral radii of $1.103$ and $0.945$, respectively, preserving both stability contrasts found under same-lap feedback.)
}

{
\begin{remark}[Same-lap and cross-lap feedback]\label{rem:seam}
The loop stability result uses the same-lap inputs in \eqref{eq:loop-dynamics}: bus deviations at the same physical control point on the same numbered lap, all indexed by $s$. At the connection between buses $N$ and $1$ (the seam), the $f_1$ term for bus $1$ therefore uses the not-yet-observed $\varepsilon_{N,s}$. Using the observed preceding-lap value $\varepsilon_{N,s-S}$ instead gives a cross-lap feedback law, which can be modeled with a larger state containing both current and past deviations. Appendix~\ref{app:seam-delay} develops this model and gives a two-tap example that is stable on the open chain but unstable on the loop under either feedback convention; its loop instability is therefore not an artifact of the same-lap assumption. SC remains a special case on the loop: it has no wrapping kernel input, so the same- and cross-lap feedback laws coincide.
\end{remark}
}

As $N\to\infty$, with the coefficients and offsets of $F$ fixed, the loop's Fourier sampling points become dense on the unit circle, so its maximum modal gain converges to the supremum over all spatial frequencies. For the bi-infinite system underlying the analysis of \citet{xuan2011}, the condition
\begin{equation}
\sup_{\theta\in[0,2\pi)}
|F(e^{-\mathrm{i}\theta})|
<1
\label{eq:infinite-fourier}
\end{equation}
is necessary and sufficient for uniform geometric decay of all spatial Fourier modes in the absence of disturbances.

\citet{xuan2011} analyze a convolution-type linear control family and use a sufficient boundedness condition of the form
\begin{equation}
\sum_{p=0}^{m}|f_p|<1.
\label{eq:xuan-sufficient}
\end{equation}
By the triangle inequality, $|F(e^{-\mathrm{i}\theta})|=|\sum_{p=0}^{m}f_p e^{-\mathrm{i}p\theta}|\le\sum_{p=0}^{m}|f_p|$.
Hence \eqref{eq:xuan-sufficient} is sufficient for \eqref{eq:infinite-fourier}. The two conditions coincide when all coefficients are nonnegative, since equality holds at $\theta=0$. With mixed-sign coefficients, cancellation can make the coefficient-sum condition \eqref{eq:xuan-sufficient} conservative. Section~\ref{sec:numerics} quantifies the magnitude of this conservatism and its growth as deviations from more buses are allowed in the kernel.

\subsection{Closed-Form Loop Performance and Cost Metrics}\label{subsec:loop-performance-metrics}\label{subsec:loop-headway-variance}

For the same-lap dynamics \eqref{eq:loop-dynamics}, define the same-lap outputs
\begingroup
\[
\begin{aligned}
\bar{\boldsymbol\delta}_s&=\mathbf G_\ell\boldsymbol\varepsilon_s,
&\mathbf G_\ell&:=\mathbf I-\mathbf P,\\
\bar{\mathbf D}_s&=d_s\mathbf1+\mathbf M_\ell\boldsymbol\varepsilon_s,
&\mathbf M_\ell&:=\mathbf A_\ell-(1+\beta)\mathbf I+\beta\mathbf P.
\end{aligned}
\]
\endgroup
The matrices $\mathbf G_\ell$ and $\mathbf M_\ell$ are the loop counterparts of $\mathbf G_{\mathrm o}$ and $\mathbf M_{\mathrm o}$ in Section~\ref{subsec:open-variance-bounds}. The normalized fleet-average output variances are $J_{\delta,s}^{\ell}$ and $J_{D,s}^{\ell}$; $J_{\varepsilon,s}^{\ell}$ is the schedule-variance metric. Fourier diagonalization gives the following formulas (proof in Appendix~\ref{app:loop-performance-metrics}).

\begin{proposition}[Loop performance metrics]\label{prop:loop-performance-metrics}
Assume zero initial deviations, $\mathbb{E}[\mathbf{v}_{s}]=\mathbf{0}$, $\Var(\mathbf{v}_{s})=\sigma^2\mathbf{I}$, and independence across control points. Define $g_q=(1+\beta)-\beta z_q$, $h_q=|1-z_q|^2$, and $\mu_q:=\lambda_q-g_q$.
After $s$ passages, the metrics are as follows, with each term interpreted by continuity when $|\lambda_q|=1$:
\begingroup

\interdisplaylinepenalty=10000
\begin{alignat}{1}
J_{\varepsilon,s}^{\ell}
&:=\frac{1}{N\sigma^2}\tr(\Var(\boldsymbol{\varepsilon}_{s}))
=\frac{1}{N}\sum_{q=0}^{N-1}\frac{1-|\lambda_q|^{2s}}{1-|\lambda_q|^2},
\label{eq:loop-schedule-finite}\\
J_{\delta,s}^{\ell}
&:=\frac{1}{N\sigma^2}\tr(\Var(\bar{\boldsymbol{\delta}}_{s}))=\frac{1}{N}\sum_{q=0}^{N-1}h_q\frac{1-|\lambda_q|^{2s}}{1-|\lambda_q|^2},
\label{eq:loop-headway-finite}\\
J_{D,s}^{\ell}
&:=\frac{1}{N\sigma^2}\tr(\Var(\bar{\mathbf{D}}_{s}))
=\frac{1}{N}\sum_{q=0}^{N-1}|\mu_q|^2\frac{1-|\lambda_q|^{2s}}{1-|\lambda_q|^2}.
\label{eq:loop-holding-finite}
\end{alignat}
\endgroup
Under the strict stability condition \eqref{eq:loop-stability}, the steady-state limits are
\begingroup

\interdisplaylinepenalty=10000
\begin{alignat}{1}
J_{\varepsilon,\infty}^{\ell}
&=\frac{1}{N}\sum_{q=0}^{N-1}\frac{1}{1-|\lambda_q|^2},
\label{eq:loop-schedule-ss}\\
J_{\delta,\infty}^{\ell}
&=\frac{1}{N}\sum_{q=0}^{N-1}\frac{h_q}{1-|\lambda_q|^2},
\label{eq:loop-headway-ss}\\
J_{D,\infty}^{\ell}
&=\frac{1}{N}\sum_{q=0}^{N-1}\frac{|\mu_q|^2}{1-|\lambda_q|^2}.
\label{eq:loop-holding-ss}
\end{alignat}
\endgroup
\end{proposition}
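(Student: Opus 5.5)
The plan is to use the unitary Fourier change of basis from Proposition~\ref{prop:loop-modal-decomposition} and reduce every covariance trace to a sum of independent scalar modal variances. Because $\mathbf U$ is unitary, $\widehat{\mathbf v}_{s}=\mathbf U^{*}\mathbf v_s$ has covariance $\mathbf U^{*}(\sigma^2\mathbf I)\mathbf U=\sigma^2\mathbf I$. The modal disturbances are therefore zero-mean, mutually uncorrelated across $q$ (in the complex sense, $\mathbb E[\widehat v_{q,s}\overline{\widehat v_{q',s}}]=\sigma^2\delta_{qq'}$), and independent across control points. With zero initial deviation, iterating the scalar recursion \eqref{eq:modal-dynamics} gives
\[
\widehat\varepsilon_{q,s}=\sum_{j=0}^{s-1}\lambda_q^{j}\,\widehat v_{q,s-j}.
\]
Hence $\widehat{\boldsymbol\Sigma}_s:=\mathbb E[\widehat{\boldsymbol\varepsilon}_s\widehat{\boldsymbol\varepsilon}_s^{*}]$ is diagonal, with entries $\sigma^2\sum_{j=0}^{s-1}|\lambda_q|^{2j}=\sigma^2(1-|\lambda_q|^{2s})/(1-|\lambda_q|^2)$. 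When $|\lambda_q|=1$ this entry is read as $s\sigma^2$, which is the continuity convention in the statement. Since $\boldsymbol\Sigma_s=\mathbf U\widehat{\boldsymbol\Sigma}_s\mathbf U^{*}$, trace invariance gives \eqref{eq:loop-schedule-finite} directly.

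For the outputs, I would note that $\mathbf G_\ell=\mathbf I-\mathbf P$ and $\mathbf M_\ell=\mathbf A_\ell-(1+\beta)\mathbf I+\beta\mathbf P$ are polynomials in $\mathbf P$, so they are circulant and diagonalized by the same $\mathbf U$. The one fact needed is $\mathbf P\mathbf u_q=z_q\mathbf u_q$. To check it, $(\mathbf P\mathbf u_q)_n=(\mathbf u_q)_{n-1}=e^{\mathrm i(n-2)\theta_q}/\sqrt N=e^{-\mathrm i\theta_q}(\mathbf u_q)_n$, and periodicity makes this valid at $n=1$ as well. This is the same computation that gives $\lambda_q=F(z_q)$ in Proposition~\ref{prop:loop-modal-decomposition}. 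It follows that $\mathbf G_\ell=\mathbf U\diag(1-z_q)\mathbf U^{*}$ and $\mathbf M_\ell=\mathbf U\diag(\lambda_q-g_q)\mathbf U^{*}=\mathbf U\diag(\mu_q)\mathbf U^{*}$.

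Since $d_s\mathbf 1$ is deterministic, it does not affect the variance. For any circulant output matrix $\mathbf K=\mathbf U\diag(\kappa_q)\mathbf U^{*}$ we then have
\[
\tr(\mathbf K\boldsymbol\Sigma_s\mathbf K^{*})=\tr\bigl(\diag(\kappa_q)\widehat{\boldsymbol\Sigma}_s\diag(\bar\kappa_q)\bigr)=\sum_q|\kappa_q|^2(\widehat{\boldsymbol\Sigma}_s)_{qq}.
\]
The output matrices are real, so $\mathbf K^{*}=\mathbf K^{\top}$ and $\tr(\mathbf K\boldsymbol\Sigma_s\mathbf K^{*})=\tr(\Var(\mathbf K\boldsymbol\varepsilon_s))$. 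Taking $\kappa_q=1-z_q$ gives $|\kappa_q|^2=h_q$ and hence \eqref{eq:loop-headway-finite}; taking $\kappa_q=\mu_q$ gives \eqref{eq:loop-holding-finite}.

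The step that needs the most care, though it is not deep, is the bookkeeping between real and complex quantities. The point is that the real covariance $\boldsymbol\Sigma_s$ equals $\mathbf U\widehat{\boldsymbol\Sigma}_s\mathbf U^{*}$ with $\widehat{\boldsymbol\Sigma}_s$ diagonal, even though the individual modal coordinates are complex and conjugate-paired ($q\leftrightarrow N-q$). I would avoid modal independence arguments altogether and work only with the matrix identity $\boldsymbol\Sigma_s=\sigma^2\sum_{j=0}^{s-1}\mathbf A_\ell^{j}(\mathbf A_\ell^{\top})^{j}$, which comes from the covariance recursion. Since $\mathbf A_\ell^{\top}=\mathbf A_\ell^{*}=\mathbf U\diag(\bar\lambda_q)\mathbf U^{*}$ for real coefficients, this sum is $\mathbf U\diag\bigl(\sigma^2\sum_j|\lambda_q|^{2j}\bigr)\mathbf U^{*}$ with no probabilistic subtlety.

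Finally, under \eqref{eq:loop-stability} every $|\lambda_q|<1$, so $|\lambda_q|^{2s}\to0$. Each of the finitely many terms converges, which yields \eqref{eq:loop-schedule-ss}--\eqref{eq:loop-holding-ss}.
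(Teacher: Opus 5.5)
Your proposal is correct and follows essentially the same route as the paper: the unitary Fourier basis diagonalizes the circulant matrices $\mathbf A_\ell$, $\mathbf G_\ell$ and $\mathbf M_\ell$, which yields modal variances $\sigma^2\sum_{j=0}^{s-1}|\lambda_q|^{2j}$ weighted by $1$, $h_q$ and $|\mu_q|^2$, then trace invariance and termwise limits under strict stability. Your explicit identity $\boldsymbol\Sigma_s=\sigma^2\sum_{j=0}^{s-1}\mathbf A_\ell^{j}(\mathbf A_\ell^{\top})^{j}=\mathbf U\diag\bigl(\sigma^2\sum_{j}|\lambda_q|^{2j}\bigr)\mathbf U^{*}$ simply makes rigorous the real/complex bookkeeping that the paper's short proof leaves implicit.
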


The holding outputs share variance $\sigma^2J_{D,\infty}^{\ell}(F)$, with three-sigma scale $3\sigma\sqrt{J_{D,\infty}^{\ell}(F)}$.

The above variance formulas are additive across modes. In steady state, the same modal variance-accumulation factor $1/(1-|\lambda_q|^2)$ enters the schedule, headway, and holding metrics for each mode with output weights $1$, $h_q$, and $|\mu_q|^2$, respectively. These output weights distinguish schedule anchoring from headway smoothing. The schedule metric assigns equal output weight to every mode, whereas the headway weight $h_q=|1-z_q|^2=2-2\cos\theta_q=4\sin^2(\theta_q/2)$ is zero at $\theta_q=0$ and rises to its maximum $4$ at $\theta_q=\pi$. Thus, the headway metric drops the fleet-average mode $q=0$ and emphasizes the adjacent-contrast modes near $\theta_q=\pi$---the short--long headway alternation around the loop. A controller that effectively curbs schedule deviations can therefore still perform poorly in terms of headway regularity if it leaves those adjacent-contrast modes weakly damped; to our knowledge, this mode-based distinction has not been adequately explained and verified in the literature. The holding weight $|\mu_q|^2$ measures how strongly the control law translates each mode's accumulated schedule variance into holding-time variance.

Since $0\le h_q\le4$, termwise comparison of \eqref{eq:loop-schedule-finite} and \eqref{eq:loop-headway-finite} yields the following bound between the fleet-average schedule and headway variances on a loop. This one-sided relation is consistent with \citet{xuan2011} and the open-chain bound in Corollary~\ref{cor:open-schedule-headway-norm}.

\begin{corollary}[Loop schedule--headway bound]\label{cor:loop-schedule-headway-upper}
For every finite number \(s\) of control-point passages,
\begin{equation}
0
\le
J_{\delta,s}^{\ell}
\le
4J_{\varepsilon,s}^{\ell}.
\label{eq:loop-schedule-headway-upper}
\end{equation}
Thus, on the loop, bounded schedule-deviation variance implies bounded headway variance, but not conversely. In particular, if all nonzero Fourier modes are stable but the zero-frequency mode is not, the headway-deviation metric can remain bounded while the schedule-deviation metric diverges.
\end{corollary}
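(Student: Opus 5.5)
The plan is to compare the two finite-$s$ formulas of Proposition~\ref{prop:loop-performance-metrics} term by term. Fix $s$ and write $w_q:=\frac{1-|\lambda_q|^{2s}}{1-|\lambda_q|^2}$, read by continuity as $w_q=s$ when $|\lambda_q|=1$. First observe that $w_q=\sum_{k=0}^{s-1}|\lambda_q|^{2k}$. This geometric sum shows that every $w_q$ is finite and nonnegative for any finite $s$, with no stability assumption needed. The same expression shows that $\widehat{\varepsilon}_{q,s}=\sum_{k=0}^{s-1}\lambda_q^{k}\widehat v_{q,s-k}$ has variance $\sigma^2 w_q$. This holds because $\mathbf U$ is unitary, so the modal disturbances are again white with variance $\sigma^2$.

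Next, use $h_q=|1-z_q|^2=4\sin^2(\theta_q/2)\in[0,4]$. Multiplying by $w_q\ge0$ gives $0\le h_qw_q\le 4w_q$ for every $q$. Summing over $q$ and dividing by $N$ yields
\[
0\le J_{\delta,s}^{\ell}=\frac1N\sum_{q}h_qw_q\le\frac4N\sum_q w_q=4J_{\varepsilon,s}^{\ell}.
\]
That is \eqref{eq:loop-schedule-headway-upper}. The same comparison applies to the steady-state sums \eqref{eq:loop-schedule-ss}--\eqref{eq:loop-headway-ss} whenever they exist. So if $J_{\varepsilon,s}^{\ell}$ stays bounded in $s$, so does $J_{\delta,s}^{\ell}$.

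For the failure of the converse, I would give an explicit kernel. The natural choice is one for which $|\lambda_0|=|F(1)|=\sum_p f_p\ge1$ while $|\lambda_q|<1$ for all $q\ne0$. TWHC with $0<\alpha<1/2$ works. Its $F(1)=\alpha+(1-2\alpha)+\alpha=1$. For $q\ne0$,
\[
\lambda_q=1-2\alpha+2\alpha\cos\theta_q=1-\alpha h_q\in(-1,1),
\]
since $0<\alpha h_q\le4\alpha<2$ and $h_q>0$. The argument then runs as follows.
\begin{itemize}
\item Because $h_0=0$, mode $0$ drops out of $J_{\delta,s}^{\ell}$.
\item Every other term in $J_{\delta,s}^{\ell}$ is bounded by $h_q/(1-|\lambda_q|^2)$. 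Hence $J_{\delta,s}^{\ell}$ stays bounded.
\item Mode $0$ has $w_0=s$, so $J_{\varepsilon,s}^{\ell}\ge s/N\to\infty$.
\end{itemize}

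The only delicate point is the continuity interpretation at $|\lambda_q|=1$. I will handle it through the geometric-sum identity for $w_q$ rather than the quotient form.
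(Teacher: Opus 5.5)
Your proposal is correct and follows the paper's route. The bound is the termwise comparison of \eqref{eq:loop-schedule-finite} and \eqref{eq:loop-headway-finite} using $0\le h_q\le4$, and your TWHC example with $\lambda_0=1$ and $\lambda_q=1-\alpha h_q\in(-1,1)$ is the same neutral-zero-mode mechanism the paper invokes via Corollary~\ref{cor:headway-based-schedule-divergence} and Appendix~\ref{app:twhc}. One small slip: $|F(1)|=|\sum_p f_p|$ rather than $\sum_p f_p$, which is harmless here since $F(1)=1$ for TWHC.
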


{
\begin{remark}[Cross-lap corrections to headway and holding outputs]\label{rem:physical-outputs}
For analytical tractability, the headway and holding formulas above approximate bus~$1$'s preceding-lap leader deviation $\varepsilon_{N,s-S}$ by the same-lap value $\varepsilon_{N,s}$. The headway to its actual leader is $h_{1,s}=H+\varepsilon_{1,s}-\varepsilon_{N,s-S}$. \textbf{Cross-lap corrections} account for bus~$1$'s preceding-lap leader in the headway output and the demand-compensation term of the holding law, while retaining same-lap kernel inputs. Note that this is different from the \textbf{cross-lap feedback law} in Remark~\ref{rem:seam}. Demand cancellation remains exact and the state dynamics are unchanged; only bus~$1$'s contributions to the headway and holding variances in \eqref{eq:loop-headway-finite}--\eqref{eq:loop-holding-finite} require adjustment. Fleet averaging reduces each resulting variance discrepancy to $1/N$ of its bus~$1$ value. Appendices~\ref{app:physical-metrics}--\ref{app:physical-bounds} provide exact variance corrections and conditions under which budget tightening yields feasible analytical designs with bounded excess holding cost; Section~\ref{subsec:num-frontiers} shows that the analytical design, retuned using exact corrections, remains close to the full cyclic numerical frontier at the tested budgets.
\end{remark}
}

Under SC, cross-lap corrections leave the steady-state variances unchanged, and the loop metrics specialize cleanly. Setting $F(z)=f_0$ gives $\lambda_q=f_0$ for every mode, and Proposition~\ref{prop:loop-performance-metrics}, using $\sum_{q}z_q=0$, yields for $N>1$ and $|f_0|<1$ the steady-state values $J_{\varepsilon,\infty}^{\ell}=1/(1-f_0^2)$, $J_{\delta,\infty}^{\ell}=2/(1-f_0^2)$, and $J_{D,\infty}^{\ell}=[(1+\beta-f_0)^2+\beta^2]/(1-f_0^2)$. These coincide with the open-chain SC schedule and headway formulas in Section~\ref{subsec:simple-open-analysis}; the holding formula coincides with that for interior open-chain buses.

\subsubsection{Headway-Based Control}\label{subsubsec:loop-headway-based-control}
We call a cyclic kernel headway-based when its holding output depends only on same-lap adjacent differences:
{
\begin{equation}
\bar{\mathbf{D}}_{s}-d_s\mathbf{1}
=
K(\mathbf{P})\mathbf G_\ell
\boldsymbol{\varepsilon}_{s},
\qquad
K(z)=\sum_{p=0}^{m_h}a_p z^p ,
\label{eq:headway-based-holding}
\end{equation}
}
Here $\mathbf G_\ell\boldsymbol\varepsilon_s=\bar{\boldsymbol\delta}_s$, and $K(\mathbf P)$ forms weighted sums of these headway deviations to determine holding adjustments. This control law has no absolute schedule anchor; accounting for bus~$1$'s preceding-lap leader adds the cross-lap correction in \eqref{eq:physical-outputs} to its holding output (Appendix~\ref{app:physical-metrics}). The resulting state multiplier has the following form (proof in Appendix~\ref{app:headway-based-zero-mode}):

\begin{proposition}[Headway-based feedback leaves the fleet-average schedule mode neutral]\label{prop:headway-based-zero-mode}
For any headway-based cyclic kernel of the form \eqref{eq:headway-based-holding}, the one-step multiplier of the mode with cyclic-shift coordinate \(z\) is
\begin{equation}
F_{\mathrm h}(z)
=
(1+\beta)-\beta z
+
K(z)(1-z).
\label{eq:headway-based-symbol}
\end{equation}
At the zero-frequency coordinate \(z_0=1\), $\lambda_0=F_{\mathrm h}(z_0)=1$.
Consequently, the zero-frequency component, corresponding to the fleet-average schedule deviation, is carried forward unchanged, and the schedule-deviation state is not Schur stable.
\end{proposition}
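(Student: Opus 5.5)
The plan is to identify the closed-loop transition matrix implied by the headway-based holding law, recognize it as a polynomial in the cyclic shift $\mathbf P$, and then read off its Fourier multipliers using Proposition~\ref{prop:loop-modal-decomposition}.

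\textbf{Step 1: express the transition matrix.} On the loop, the same-lap holding output is $\bar{\mathbf D}_s-d_s\mathbf 1=\mathbf M_\ell\boldsymbol\varepsilon_s$, with $\mathbf M_\ell=\mathbf A_\ell-(1+\beta)\mathbf I+\beta\mathbf P$. Equating this with \eqref{eq:headway-based-holding} for all states $\boldsymbol\varepsilon_s$ gives
\[
\mathbf A_\ell=(1+\beta)\mathbf I-\beta\mathbf P+K(\mathbf P)(\mathbf I-\mathbf P).
\]
The same identity also follows by substituting the holding law into the vector form of \eqref{eq:deviation-propagation}: the demand term $\beta(\mathbf I-\mathbf P)\boldsymbol\varepsilon_s$ combines with the holding adjustment. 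Either way, $\mathbf A_\ell$ is a polynomial in $\mathbf P$, hence circulant. Where powers of $\mathbf P$ reach $N$ or beyond, $\mathbf P^N=\mathbf I$ folds them back into the nonnegative-offset representation without affecting the argument.

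\textbf{Step 2: compute the modal multipliers.} Every circulant polynomial $G(\mathbf P)$ is diagonalized by $\mathbf U$, with $\mathbf P\mathbf u_q=z_q\mathbf u_q$. To check this, note that $(\mathbf P\mathbf u_q)_n=(\mathbf u_q)_{n-1}=e^{\mathrm i(n-2)\theta_q}/\sqrt N=e^{-\mathrm i\theta_q}(\mathbf u_q)_n$, and this remains valid at $n=1$ because the phase $e^{-\mathrm i\theta_q}$ wraps consistently, since $e^{\mathrm iN\theta_q}=1$. It follows that $G(\mathbf P)\mathbf u_q=G(z_q)\mathbf u_q$. Applying this to $\mathbf A_\ell$ gives $\lambda_q=F_{\mathrm h}(z_q)$ with $F_{\mathrm h}$ as in \eqref{eq:headway-based-symbol}.

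\textbf{Step 3: evaluate at zero frequency.} At $z_0=1$ we get $\lambda_0=(1+\beta)-\beta+K(1)\cdot 0=1$. Equivalently, $\mathbf 1^\top(\mathbf I-\mathbf P)=\mathbf 0$ and $\mathbf 1^\top\mathbf P=\mathbf 1^\top$, so $\mathbf 1^\top\mathbf A_\ell=\mathbf 1^\top$. By \eqref{eq:modal-dynamics}, $\widehat\varepsilon_{0,s+1}=\widehat\varepsilon_{0,s}+\widehat v_{0,s+1}$, which means the fleet-sum mode (that is, $\sqrt N$ times the average deviation) is carried forward unchanged under zero input. Because $\max_q|\lambda_q|\ge1$, Proposition~\ref{prop:loop-stability} gives that the system is not Schur stable. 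As a concrete witness, take the initial state $\mathbf 1$, which is invariant.

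The only delicate point I expect is reconciling the paper's parametrization. The general law \eqref{eq:general-law} specifies $\mathbf A_\ell$, whereas the headway-based class is defined through its holding output, so I must make sure the demand-compensation term is placed with the correct sign, so that $g_q=(1+\beta)-\beta z_q$ appears as written. Once that identification is settled, the rest is a direct consequence of the earlier modal results.
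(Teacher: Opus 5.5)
Your proposal is correct and follows the paper's proof: you equate the two holding-output expressions to get $\mathbf A_\ell=(1+\beta)\mathbf I-\beta\mathbf P+K(\mathbf P)(\mathbf I-\mathbf P)$, read off the multiplier on the Fourier eigenvectors of $\mathbf P$, and evaluate at $z_0=1$. Your extra checks are correct and make explicit what the paper leaves implicit; they are the wrap-around eigenvector identity, $\mathbf 1^\top\mathbf A_\ell=\mathbf 1^\top$, and the invariant initial state $\mathbf 1$.
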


The following corollary shows that under persistent disturbances, the neutral mode (with $\lambda_0=1$) causes schedule variance to grow (proof in Appendix~\ref{app:headway-based-schedule-divergence}):

\begin{corollary}[Schedule divergence under persistent disturbances]\label{cor:headway-based-schedule-divergence}
Under the initial-state and disturbance assumptions stated in Proposition~\ref{prop:loop-performance-metrics}, any headway-based cyclic kernel satisfying Proposition~\ref{prop:headway-based-zero-mode} has
\begin{equation}
J_{\varepsilon,s}^{\ell}
=
\frac{s}{N}
+
\frac{1}{N}
\sum_{q=1}^{N-1}
\frac{1-|\lambda_q|^{2s}}{1-|\lambda_q|^2}
\ge
\frac{s}{N}.
\label{eq:headway-based-schedule-growth}
\end{equation}
with the same continuity convention as in Proposition~\ref{prop:loop-performance-metrics}. Therefore, the schedule-deviation metric diverges as \(s\to\infty\). If, in addition, $\max_{q=1,\ldots,N-1}|\lambda_q|<1$, then the headway-deviation metric has the finite steady-state limit
\[
J_{\delta,\infty}^{\ell}
=
\frac{1}{N}
\sum_{q=1}^{N-1}
\frac{h_q}{1-|\lambda_q|^2},
\]
while \(J_{\varepsilon,s}^{\ell}=s/N+O(1)\).
\end{corollary}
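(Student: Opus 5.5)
The plan is to specialize the finite-step formula \eqref{eq:loop-schedule-finite} of Proposition~\ref{prop:loop-performance-metrics} using the neutral zero mode from Proposition~\ref{prop:headway-based-zero-mode}, and then take limits mode by mode. The assumptions of Proposition~\ref{prop:loop-performance-metrics} are exactly those imposed here: zero initial deviation, white noise with covariance $\sigma^2\mathbf I$, and independence across control points. That proposition therefore applies to the headway-based kernel without change, with $\lambda_q=F_{\mathrm h}(z_q)$.

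\textbf{Step 1 (the zero mode).} By Proposition~\ref{prop:headway-based-zero-mode}, $\lambda_0=1$. The $q=0$ term of \eqref{eq:loop-schedule-finite} is then interpreted by continuity. I would justify this directly rather than by a limiting argument: the term $(1-|\lambda|^{2s})/(1-|\lambda|^2)$ is the finite geometric sum $\sum_{j=0}^{s-1}|\lambda|^{2j}$, which is exactly the modal variance accumulated by the recursion \eqref{eq:modal-dynamics} after $s$ steps, normalized by $\sigma^2$. Here I use that $\mathbf U$ is unitary, so $\widehat{\mathbf v}$ is again white with covariance $\sigma^2\mathbf I$. At $|\lambda_0|=1$ this sum equals $s$. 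Separating this term gives the displayed identity \eqref{eq:headway-based-schedule-growth}.

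\textbf{Step 2 (the lower bound).} Every remaining term is itself a sum of nonnegative powers $|\lambda_q|^{2j}$. It is therefore at least zero, and in fact at least one for $s\ge1$. Hence $J_{\varepsilon,s}^{\ell}\ge s/N$, which tends to infinity as $s\to\infty$.

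\textbf{Step 3 (headway limit under the extra condition).} Apply \eqref{eq:loop-headway-finite}. Its $q=0$ term carries the weight $h_0=|1-z_0|^2=0$, so it vanishes for every $s$ even though the zero mode accumulates variance linearly. For $q\ge1$ the assumption $|\lambda_q|<1$ gives $|\lambda_q|^{2s}\to0$. This is a finite sum, so the limit passes termwise and yields the stated formula for $J_{\delta,\infty}^{\ell}$.

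\textbf{Step 4 (the $O(1)$ remainder).} Under the same extra condition, each term with $q\ge1$ in \eqref{eq:headway-based-schedule-growth} is bounded above by $1/(1-|\lambda_q|^2)$, uniformly in $s$. The remainder is therefore at most $N^{-1}\sum_{q\ge1}1/(1-|\lambda_q|^2)$, a constant, which gives $J_{\varepsilon,s}^{\ell}=s/N+O(1)$.

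The only delicate point is the continuity convention. If some nonzero mode also has $|\lambda_q|=1$, its term equals $s$ as well, but that only strengthens the inequality. The geometric-sum reading of each term handles the unit-modulus case uniformly, so I expect no real obstacle there.
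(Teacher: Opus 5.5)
Your proposal is correct and follows essentially the same route as the paper: you substitute $\lambda_0=1$ into \eqref{eq:loop-schedule-finite} so the zero-mode term becomes $s/N$, and you use $h_0=0$ to drop that mode from the headway metric before taking termwise limits. Your geometric-sum reading of the continuity convention and your explicit Step~4 bound for the $O(1)$ remainder add detail that the paper's short appendix proof leaves implicit, but they do not change the argument.
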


This is the finite loop counterpart of the kernel-sum-one schedule-divergence result in \citet{xuan2011}: relative modes can remain stable while the common schedule mode is a random walk. When the cross-lap correction to bus~$1$'s headway is included, the neutral common mode contributes the finite constant $S/N^2$ to the normalized fleet-average headway variance; with stable relative modes, the variance therefore still converges (see Appendix~\ref{app:physical-neutral}).

Two specific headway-based control schemes, the BHC and TWHC variants of \citet{xuan2011}, are analyzed in Appendices~\ref{app:bhc} and~\ref{app:twhc}, respectively, demonstrating the value of modal analysis for diagnosing specific control schemes.

\subsection{Modal Constrained Optimization on a Loop}\label{subsec:loop-modal-optimization}

Using the same-lap variance metrics derived above, Fourier decomposition gives an exact finite-dimensional optimization over the full cyclic kernel $F_f(z)=\sum_{p=0}^{N-1}f_pz^p$. This entails no loss of generality for circulant matrices because $\mathbf P^N=\mathbf I$. Fleet-average holding-time variance measures the mean squared adjustment around the scheduled slack and provides an analytically tractable proxy for holding cost. Let $R_\varepsilon$ and $R_\delta$ be feasible finite budgets for the schedule and headway metrics, respectively. The resulting design problem is
\begin{equation}
\begin{aligned}
\min_{f_0,\ldots,f_{N-1}}\quad
&
\frac{1}{N}\sum_{q=0}^{N-1}
\frac{|F_f(z_q)-g_q|^2}{1-|F_f(z_q)|^2}
\\
\mathrm{s.t.}\quad
&
\frac{1}{N}\sum_{q=0}^{N-1}
\frac{1}{1-|F_f(z_q)|^2}
\le R_{\varepsilon},
\\
&
\frac{1}{N}\sum_{q=0}^{N-1}
\frac{h_q}{1-|F_f(z_q)|^2}
\le R_{\delta},
\\
&
|F_f(z_q)|<1,
\qquad q=0,\ldots,N-1 .
\end{aligned}
\label{eq:loop-full-kernel-optimization}
\end{equation}
Equation \eqref{eq:loop-full-kernel-optimization} requires full schedule stability. Let \(\eta_{\varepsilon}\ge0\) and \(\eta_{\delta}\ge0\) denote the Lagrange multipliers associated with the schedule- and headway-reliability constraints. A design without a schedule anchor instead permits a neutral zero mode and is treated in Section~\ref{subsubsec:loop-headway-equalizing}.

The following result characterizes the optimal full cyclic kernel (proof in Appendix~\ref{app:loop-optimal-design}).
\begin{proposition}[Optimal full cyclic control]\label{prop:loop-optimal-design}
{
Let $N\ge2$ and $\beta>0$, and assume $1<R_\varepsilon<\infty$ and $R_\delta>2$ whenever the corresponding constraints are imposed. Define $G_q:=|g_q|=\sqrt{1+\beta(1+\beta)h_q}$ and $W_q:=\eta_\varepsilon+\eta_\delta h_q$. With the schedule constraint imposed, Problem~\eqref{eq:loop-full-kernel-optimization} has a unique real optimal cyclic kernel. Its optimal modal gains $r_q^*=|\lambda_q^*|$ and multipliers $\lambda_q^*$, for $q=0,\ldots,N-1$, are
\begin{equation}
r_q^*(\eta_{\varepsilon},\eta_{\delta})
=
\frac{1+G_q^2+W_q-
\sqrt{(1+G_q^2+W_q)^2-4G_q^2}}
{2G_q}.
\label{eq:modal-rstar}
\end{equation}
\begin{equation}
\lambda_q^*(\eta_{\varepsilon},\eta_{\delta})
=
r_q^*(\eta_{\varepsilon},\eta_{\delta})\frac{g_q}{G_q}.
\label{eq:modal-lambdastar}
\end{equation}
The corresponding coefficients, for $p=0,\ldots,N-1$, are
\begin{equation}
f_p^*(\eta_{\varepsilon},\eta_{\delta})
=\frac{1}{N}\sum_{q=0}^{N-1}
\lambda_q^*(\eta_{\varepsilon},\eta_{\delta})z_q^{-p}.
\label{eq:modal-fstar}
\end{equation}
The multipliers satisfy $\eta_\varepsilon>0$ and $\eta_\delta\ge0$. Evaluated at the optimal kernel, the metrics satisfy feasibility and complementary slackness:
\[
\begin{aligned}
J_{\varepsilon,\infty}^{\ell}&\le R_\varepsilon,
&\qquad \eta_\varepsilon(J_{\varepsilon,\infty}^{\ell}-R_\varepsilon)&=0,\\
J_{\delta,\infty}^{\ell}&\le R_\delta,
&\qquad \eta_\delta(J_{\delta,\infty}^{\ell}-R_\delta)&=0.
\end{aligned}
\]
If the headway constraint is omitted, set $\eta_\delta=0$ and omit its feasibility and slackness conditions.

With only a headway-variance constraint $2<R_\delta<\infty$, allow $\lambda_0\in(-1,1]$ and require all nonzero modes to be strictly stable. The unique optimum has $\lambda_0^*=1$; its remaining modal values follow \eqref{eq:modal-rstar}--\eqref{eq:modal-lambdastar} with $\eta_\varepsilon=0$, and its coefficients follow \eqref{eq:modal-fstar}. Only the headway feasibility and complementary-slackness conditions apply.
}
\end{proposition}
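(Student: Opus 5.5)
The plan is to change variables from the kernel coefficients to the modal multipliers and solve a separable problem. Because the DFT $\mathbf U$ diagonalizes every circulant (Proposition~\ref{prop:loop-modal-decomposition}), the map $(f_0,\ldots,f_{N-1})\mapsto(\lambda_0,\ldots,\lambda_{N-1})$, $\lambda_q=F_f(z_q)$, is a linear bijection from $\mathbb C^N$ onto $\mathbb C^N$. Its inverse is \eqref{eq:modal-fstar}, and real kernels correspond exactly to conjugate-symmetric multipliers, $\lambda_{N-q}=\overline{\lambda_q}$. Since $g_{N-q}=\overline{g_q}$ and $h_{N-q}=h_q$, Problem~\eqref{eq:loop-full-kernel-optimization} becomes
\[
\min_{\lambda}\ \frac1N\sum_q\frac{|\lambda_q-g_q|^2}{1-|\lambda_q|^2}\quad\text{s.t.}\quad\frac1N\sum_q\frac{1}{1-|\lambda_q|^2}\le R_\varepsilon,\quad \frac1N\sum_q\frac{h_q}{1-|\lambda_q|^2}\le R_\delta,
\]
over the open polydisk $|\lambda_q|<1$. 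I will solve this problem without imposing symmetry and then verify that the solution is conjugate-symmetric, hence real in the coefficients.

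Next I separate modulus and phase. Write $\lambda_q=r_qe^{\mathrm i\phi_q}$. Only the objective depends on the phase, through $|\lambda_q-g_q|^2=r_q^2-2r_qG_q\cos(\phi_q-\arg g_q)+G_q^2$. This is uniquely minimized at $\phi_q=\arg g_q$ whenever $r_q>0$, which gives \eqref{eq:modal-lambdastar} and the symmetry property, since $\arg g_{N-q}=-\arg g_q$. The problem then reduces to scalar variables $r_q\in[0,1)$, with per-mode cost $c_q(r)=(G_q-r)^2/(1-r^2)$ and constraint terms $1/(1-r^2)$ and $h_q/(1-r^2)$.

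The key step is convexity. Substituting $x_q=1/(1-r_q^2)\in[1,\infty)$ makes the constraints linear in $x$. For the objective, $(G-r)^2/(1-r^2)=(1+G^2)x-2G\sqrt{x(x-1)}$ with $r=\sqrt{1-1/x}$. Because $\sqrt{x(x-1)}$ is concave on $[1,\infty)$, each cost is strictly convex in $x$. The reduced problem is therefore a strictly convex separable program with linear constraints. The point $x_q\equiv1$ gives $J_\varepsilon=1<R_\varepsilon$ and $J_\delta=\frac1N\sum_q h_q=2<R_\delta$, so Slater's condition holds. Hence KKT conditions are necessary and sufficient, the optimum is unique, and existence follows from coercivity, since $c_q\sim(1-G_q)^2x$ and the constraints bound $x$.

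Stationarity in $r$ gives $\partial_r[(G_q-r)^2+W_q]/(1-r^2)=0$, that is,
\[
G_qr^2-(1+G_q^2+W_q)r+G_q=0 .
\]
The root lying in $[0,1)$ is the smaller one, \eqref{eq:modal-rstar}. The product of the two roots is $1$, and the discriminant is positive when $W_q>0$ or $G_q\neq1$.

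To show $\eta_\varepsilon>0$, I argue by contradiction. If $\eta_\varepsilon=0$, then for $q=0$ we have $h_0=0$ and $G_0=1$, so $W_0=0$ and the root is $r_0^*=1$. This violates strict stability and makes $J_\varepsilon$ infinite, so the schedule constraint must bind.

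In the headway-only case, mode $0$ carries zero objective weight when $\lambda_0=1=g_0$, and it does not enter the headway constraint. Allowing $\lambda_0\in(-1,1]$, the choice $\lambda_0=1$ attains cost $0$ (interpreting $0/0$ by continuity, $(1-r)/(1+r)\to0$), while any $|\lambda_0|<1$ gives positive cost $|\lambda_0-1|^2/(1-|\lambda_0|^2)>0$. This gives uniqueness of $\lambda_0^*=1$. The remaining modes are handled by the same convex KKT argument with $\eta_\varepsilon=0$; for them $h_q>0$ and $G_q>1$, so $r_q^*<1$.

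I expect the main obstacle to be the convexity and uniqueness step: identifying the $x=1/(1-r^2)$ reparametrization, and handling the boundary cases $r_q=0$ (possible only when $G_q=0$, which never happens since $G_q\ge1$) and the degenerate mode $G_0=1$.
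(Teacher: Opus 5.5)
Your proposal is correct and follows essentially the same route as the paper. Both pass to modal multipliers through the DFT (real kernels correspond to conjugate symmetry) and align phases with $g_q$. Both then solve the resulting separable convex program by Slater/KKT, which yields the quadratic $G_qr^2-(1+G_q^2+W_q)r+G_q=0$. Finally, both use the zero mode ($G_0=1$, $h_0=0$) to force $\eta_\varepsilon>0$ and to obtain $\lambda_0^*=1$ in the headway-only case.

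The only variation is your convexity certificate. You substitute $x_q=1/(1-r_q^2)$, which makes the constraints linear, whereas the paper checks convexity of $(r-G)^2/(1-r^2)$ and $1/(1-r^2)$ directly in $r$. Note that the correct identity is $(G-r)^2/(1-r^2)=(1+G^2)x-1-2G\sqrt{x(x-1)}$; the constant you dropped does not affect the argument.
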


The analytical solution requires a one-dimensional search over a Lagrange multiplier for one reliability constraint, or at most a two-dimensional search for both. Under the conditions in Appendix~\ref{app:physical-design}, tightening the headway-variance budget lets the same analytical construction meet the original budgets after cross-lap corrections, with an explicit bound on its excess holding cost. No separate optimization with cross-lap corrections is required. The noise floors---the minimum variance levels imposed by new disturbances---are \(R_{\varepsilon}=1\) and \(R_{\delta}=2\); the corresponding designs are obtained by limits, with the single-constraint endpoints summarized in Table~\ref{tab:loop-endpoints}. If the kernel uses only deviations from buses \(n,n-1,\ldots,n-m\), with \(m<N-1\), the modal values are coupled by the reduced coefficient vector, and a compact closed-form optimum is generally unavailable.

\begin{table}[tbp]
\caption{Limiting regimes of the single-constraint optimal designs. The two loose endpoints have the same nonzero-mode multipliers and holding cost.\label{tab:loop-endpoints}}
\centering
\footnotesize\setlength{\tabcolsep}{3pt}
\begin{tabular*}{\textwidth}{@{\extracolsep{\fill}}llccc@{}}
\toprule
Design / endpoint & Limiting operator (zero mode $\lambda_0$) & $J_{\varepsilon,\infty}^{\ell}$ & $J_{\delta,\infty}^{\ell}$ & $J_{D,\infty}^{\ell}$ \\
\midrule
Schedule-adhering, tight ($\eta_\varepsilon\!\to\!\infty$) & $F^*\!\equiv\!0$, all $f_p^*\!\to\!0$ (schedule reset); $\lambda_0\!\to\!0$ & $1$ & $2$ & $1+2\beta+2\beta^2$ \\
Schedule-adhering, loose ($\eta_\varepsilon\!\downarrow\!0$) & $\lambda_q\!\to\!\dfrac{1}{(1+\beta)-\beta z_q^{-1}}$; $\lambda_0\!\to\!1$ & $\infty$ & $2+\dfrac{N-1}{N\beta(1+\beta)}$ & $2\beta(1+\beta)$ \\
Headway-equalizing, tight ($\eta_\delta\!\to\!\infty$) & $f_p^*\!=\!1/N$ (fleet averaging); $\lambda_0\!=\!1$ & $\infty$ & $2$ & $1+2\beta+2\beta^2-\dfrac1N$ \\
Headway-equalizing, loose ($\eta_\delta\!\downarrow\!0$) & $\lambda_q\!\to\!\dfrac{1}{(1+\beta)-\beta z_q^{-1}}$; $\lambda_0\!=\!1$ & $\infty$ & $2+\dfrac{N-1}{N\beta(1+\beta)}$ & $2\beta(1+\beta)$ \\
\bottomrule
\end{tabular*}

\end{table}

The following two subsections investigate two common special cases of holding optimization: one constrained by schedule adherence alone and the other by headway regularity alone.

\subsubsection{Schedule-Adhering Optimal Control Problem}\label{subsubsec:loop-schedule-adhering}

With only a schedule-reliability constraint, \(\eta_{\delta}=0\), and every mode receives the same reliability weight \(\eta_{\varepsilon}\). Differences between optimal modal update factors therefore arise only from the demand-propagation factor \(g_q=1+\beta(1-z_q)\). When \(\beta\ll1\), its magnitude and phase vary little across frequencies, which explains why SC, with the common modal factor \(\lambda_q=f_0\), can be near optimal for schedule-adhering design. For any fixed finite $R_\varepsilon>1$, all optimal modal factors converge to $\sqrt{1-1/R_\varepsilon}$ as $\beta\to0$, so the optimal kernel converges to SC. For every finite budget \(R_{\varepsilon}>1\), \(\eta_{\varepsilon}>0\) and \(0<\lambda_0^*<1\): the controller damps a common displacement of the fleet from its virtual timetable.

Table~\ref{tab:loop-endpoints} summarizes the two endpoints and their performance metrics; their derivation is included in Appendix~\ref{app:loop-optimal-design}. As the budget tightens, the controller approaches the schedule reset \(F_f^*\equiv0\), removing existing deviations before new disturbances enter, but at a relatively high holding cost. As the budget loosens, it approaches the limiting kernel in Table~\ref{tab:loop-endpoints} while \(\lambda_0^*\to1\). This loose endpoint is a stability-boundary limit: schedule variance diverges as timetable anchoring disappears, although every finite-budget design remains strictly stable.

\subsubsection{Headway-Equalizing Optimal Control Problem}\label{subsubsec:loop-headway-equalizing}

For headway-equalizing design, we impose only a headway-variance budget and allow $\lambda_0\in(-1,1]$, while all relative modes remain stable. Proposition~\ref{prop:loop-optimal-design} gives $\lambda_0^*=1$, equivalently $\sum_{p=0}^{N-1}f_p^*=1$, as an optimality result: the controller leaves the common schedule offset unanchored. At this optimum, $h_0=\mu_0=0$, so the common mode drops out of the headway and holding metrics defined above. Optimizing the remaining modes gives the exact frontier for these metrics over the full cyclic kernels in this domain. Appendix~\ref{app:physical-neutral} gives the finite common-mode corrections, while Appendix~\ref{app:physical-design} establishes conditions for retaining feasibility and bounding the design's excess holding cost even when schedule variance diverges.

Proposition~\ref{prop:loop-optimal-design} applies to the nonzero modes with reliability weight \(\eta_{\delta}h_q\). Since \(h_q=4\sin^2(\theta_q/2)\), the modes nearest \(\theta_q=\pi\), representing short--long headway alternation, receive the greatest weight and are damped most strongly. The optimal factors retain the phase of \(g_q\); the supporting calculations are included in Appendix~\ref{app:loop-optimal-design}.

The tight endpoint applies fleet averaging, \(f_p^*=1/N\), removing relative deviations while preserving the fleet's common offset from schedule. The loose endpoint is the same limiting kernel given in Table~\ref{tab:loop-endpoints}. Between these endpoints, the optimal controller trades holding effort against headway regularity while leaving the fleet-average schedule mode neutral.

\section{Discussion: Finite-Horizon Boundary Effects}\label{sec:finite-boundary-discussion}

The stability differences between open chains and loops need not appear immediately in finite operations. This section identifies when their variance responses coincide and how boundary effects develop as disturbances propagate through the fleet. Let $T$ denote the operating horizon, measured in control-point passages per bus: the number of controlled stops traversed on an open corridor, or $T=KS$ for $K$ circuits of a loop with $S$ control points per circuit.

The next proposition links open-chain performance to the loop response before boundary truncation affects a given bus. It identifies a window of equal variances and a lower bound on the eventual open-chain variance (proof in Appendix~\ref{app:transient-envelope}).

\begin{proposition}[Loop condition as open-chain transient envelope]\label{prop:transient-envelope}
Consider the same kernel $F(z)=\sum_{p=-r}^{m}f_pz^p$ under open-chain truncation and periodic indexing on $N$ buses, where $r,m$ are nonnegative integers with $r+m\ge1$. Set $M:=\max_q|F(z_q)|$ and, for each bus $n=1,\ldots,N$, $s_n^\dagger:=1+\min\{\lfloor(n-1)/m\rfloor,\lfloor(N-n)/r\rfloor\}$, omitting a term when its denominator is zero. Under zero initial deviations and the disturbance assumptions of Proposition~\ref{prop:loop-performance-metrics},
\begin{equation}
\Var(\varepsilon_{n,s})=\sigma^2J^{\ell}_{\varepsilon,s}
\qquad\text{for all }1\le s\le s_n^\dagger,
\label{eq:coincidence-window}
\end{equation}
where the left-hand side is the open-chain variance of bus $n$, and $J^{\ell}_{\varepsilon,s}$ is the normalized loop fleet-average variance in \eqref{eq:loop-schedule-finite}, also the normalized variance of every bus by circulant symmetry. If the open-chain transition matrix is Schur stable and $M>1$, then
\begin{equation}
\lim_{s\to\infty}\Var(\varepsilon_{n,s})
\ge\frac{\sigma^2}{N}\frac{M^{2s_n^\dagger}-1}{M^2-1}.
\label{eq:transient-floor}
\end{equation}
\end{proposition}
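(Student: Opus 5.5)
The plan is to write the open-chain variance of bus $n$ as a finite sum of squared row norms of matrix powers. We then show that, inside the stated window, these rows coincide with the corresponding rows of the circulant loop matrix. Let $\mathbf A_{\mathrm o}=\sum_{p=-r}^{m}f_p\mathbf L^{p}$ denote the open-chain matrix, with the convention $\mathbf L^{-p}:=(\mathbf L^{\top})^{p}$. Let $\mathbf A_\ell=\sum_{p}f_p\mathbf P^{p}$ denote the loop matrix. With zero initial deviations, $\boldsymbol\varepsilon_s=\sum_{j=0}^{s-1}\mathbf A^{j}\mathbf v_{s-j}$, and the disturbances are white. Hence, in both models,
\[
\Var(\varepsilon_{n,s})=\sigma^2\sum_{j=0}^{s-1}\lVert\mathbf e_n^{\top}\mathbf A^{j}\rVert_2^2 .
\]
For the loop, circulant symmetry makes every diagonal entry of $\mathbf A_\ell^{j}(\mathbf A_\ell^{j})^{*}$ equal to $N^{-1}\tr$ of that matrix. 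By Proposition~\ref{prop:loop-modal-decomposition}, this trace equals $\sum_q|\lambda_q|^{2j}$. This gives the per-bus identity $\Var(\varepsilon_{n,s})=\sigma^2J^{\ell}_{\varepsilon,s}$ on the loop, which recovers \eqref{eq:loop-schedule-finite}.

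The core step is a path (dependency-cone) argument. Expand the $(n,k)$ entry of $\mathbf A^{j}$ as a sum over offset sequences $(p_1,\ldots,p_j)$ with $p_i\in\{-r,\ldots,m\}$, each weighted by $f_{p_1}\cdots f_{p_j}$. Along such a path, the bus index visits $n-p_1$, then $n-p_1-p_2$, and so on.

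On the open chain, a path contributes only if every intermediate index stays in $\{1,\ldots,N\}$. Every partial sum of $k$ offsets lies in $[-kr,km]\subseteq[-jr,jm]$. So when $jm\le n-1$ and $jr\le N-n$, no path is truncated. These two inequalities are exactly $j\le s_n^\dagger-1$.

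On the loop, indices are read modulo $N$. The reachable indices form the interval $[n-jm,n+jr]$, whose width is $j(m+r)\le N-1$ under the same two inequalities. So distinct integer endpoints remain distinct residues, and no aliasing merges entries. Row $n$ of $\mathbf A_{\mathrm o}^{j}$ therefore equals row $n$ of $\mathbf A_\ell^{j}$, up to the identity relabeling of columns, for all $j\le s_n^\dagger-1$. Summing over $j\le s-1\le s_n^\dagger-1$ gives \eqref{eq:coincidence-window}. I expect this bookkeeping to be the main obstacle. The delicate points are handling the one-sided cases $m=0$ or $r=0$, where the corresponding term in $s_n^\dagger$ is omitted, and checking that the no-wrap and no-truncation conditions both reduce to the same threshold.

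For \eqref{eq:transient-floor}, Schur stability of $\mathbf A_{\mathrm o}$ makes the open-chain variance a convergent series of nonnegative terms in $j$ (cf.\ Proposition~\ref{prop:open-cov}). So its limit is at least its partial sum at $s=s_n^\dagger$. By the window identity, that partial sum equals
\[
\frac{\sigma^2}{N}\sum_q\sum_{j=0}^{s_n^\dagger-1}|\lambda_q|^{2j}.
\]
Dropping every mode except one with $|\lambda_q|=M$ bounds this below by $\frac{\sigma^2}{N}\sum_{j=0}^{s_n^\dagger-1}M^{2j}=\frac{\sigma^2}{N}\frac{M^{2s_n^\dagger}-1}{M^2-1}$, using $M>1$ to sum the geometric series.
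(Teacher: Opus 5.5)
Your proposal is correct and follows essentially the same route as the paper. You write the variance under either boundary as $\sigma^2\sum_{j=0}^{s-1}\lVert\mathbf e_n^{\top}\mathbf A^{j}\rVert_2^2$, use the path-support argument to get $\mathbf e_n^{\top}\mathbf A_{\mathrm o}^{j}=\mathbf e_n^{\top}\mathbf A_\ell^{j}$ for $j\le s_n^\dagger-1$ (the containment $[n-jm,n+jr]\subseteq[1,N]$ rules out both truncation and wrap-around), and then bound the stable open-chain limit below by its partial sum at $s_n^\dagger$, keeping only a mode that attains $M$.
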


This result is intuitive: over $j$ transitions, contributing bus displacements lie between $-rj$ and $mj$. With zero initial deviations, the variance at passage $s$ involves at most $s-1$ transitions. Within $s_n^\dagger$, no contributing path leaves either open boundary, so both finite systems reproduce the bi-infinite disturbance response. For leader-only open chains ($r=0$), the last bus $n=N$ has the longest window and hence the strongest variance lower bound.

A stable open chain can therefore accumulate substantial variance before boundary truncation takes effect. For a fixed kernel with $\sup_{|z|=1}|F(z)|>1$, the lower bound \eqref{eq:transient-floor} grows geometrically with $N$ whenever $s_n^\dagger$ grows proportionally with $N$ and the open chain remains stable. Under leader-only feedback, this growth occurs when the bus's distance from the head boundary grows proportionally with $N$; under two-sided feedback, both boundary distances must grow proportionally with $N$. See Section~\ref{subsec:num-boundary} for a numerical illustration.

On the loop, every bus shares the bi-infinite schedule variance through passage $T$ whenever $(r+m)(T-1)<N$, since distinct contributing displacements remain distinct modulo $N$. Together with Proposition~\ref{prop:transient-envelope}, this yields a common local limit at fixed kernel and horizon as $N$ grows: every loop bus and a fraction of open-chain buses tending to one share the bi-infinite variance over that horizon. As $(r+m)T/N$ increases, boundary truncation and periodic overlap can produce different variance responses. Support width determines propagation range; the coefficients determine response magnitudes.

SC is the special case $r=m=0$: schedule-deviation dynamics do not propagate between buses, so the same-lap loop and open-chain schedule and headway formulas agree with those of the bi-infinite system. In steady state, loop holding variance, including cross-lap corrections, also equals that of interior open-chain buses.

\section{Numerical Experiments}\label{sec:numerics}

The main purpose of this section is to illustrate and validate the theory developed in Sections~\ref{sec:modelclasses}--\ref{sec:finite-boundary-discussion}, consistent with the paper's analytical focus. We first examine stability by comparing open-chain and loop responses under a common kernel (Section~\ref{subsec:num-boundary}) and quantifying the conservatism of the coefficient-sum test of \citet{xuan2011} for longer kernels (Section~\ref{subsec:num-conservatism}). We then compare named control laws with analytical and numerical Pareto frontiers and examine sensitivity to demand, fleet size, and reliability budgets (Section~\ref{subsec:num-frontiers}), before assessing the three-tap approximation under same-lap input prediction errors (Section~\ref{subsec:num-robustness}).

Unless otherwise specified, the fleet size is $N=12$ and the demand parameter is $\beta=0.05$; all variance metrics are normalized by $\sigma^2$. Performance comparisons in Sections~\ref{subsec:num-frontiers} and~\ref{subsec:num-robustness} include cross-lap corrections, with $S=12$ unless otherwise specified; the other loop performance comparisons use the metrics of Section~\ref{subsec:loop-performance-metrics}. Curves are obtained from analytical expressions and exact covariance recursions, with the stability regions in Figure~\ref{fig:num-shadow}(b) and full cyclic designs computed numerically, while three-tap designs use the analytical construction and a scalar search. Selected data points are obtained from 1{,}000 independent Monte Carlo replications of \eqref{eq:local-closed} under the corresponding boundary condition (30{,}000 for Figure~\ref{fig:num-frontiers}(b)).

\subsection{Stability Comparisons}\label{subsec:num-stability}

The stability regions in this subsection are independent of $\beta$ when the kernel coefficients are held fixed, because demand compensation removes this parameter from \eqref{eq:local-closed}.

\subsubsection{Open-Chain and Loop Responses under a Common Kernel}\label{subsec:num-boundary}

Figure~\ref{fig:num-boundary}(a) shows the stability regions of the
two-tap kernel $F(z)=f_0+f_1z$, the smallest family for which open-chain and loop stability can differ. The open chain uses current- and leading-bus information; the loop uses exact same-lap inputs at the seam.

\begin{figure}[hbp]
\centering
\includegraphics[width=\textwidth]{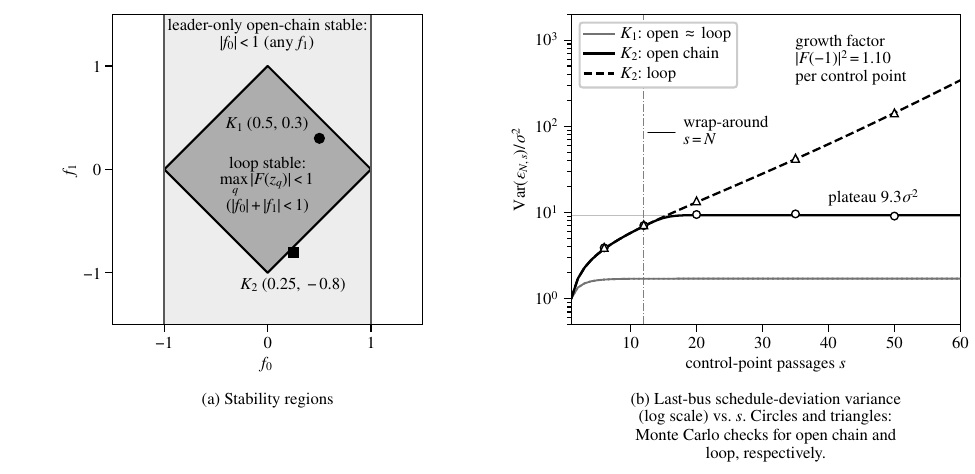}
\caption{Open-chain and loop responses under $F(z)=f_0+f_1z$ ($\beta=0.05$, $N=12$).\label{fig:num-boundary}}

\end{figure}

The open-chain region is the vertical strip
$|f_0|<1$, independent of $f_1$ (Proposition~\ref{prop:open-spectrum}); the
loop region is $\max_q|f_0+f_1z_q|<1$ (Proposition~\ref{prop:loop-stability}).
For a real two-tap kernel the frequency response peaks at $z=1$ or
$z=-1$, both included among the loop's discrete points $z_q$ when $N$ is even, so
the loop region here is exactly the diamond $|f_0|+|f_1|<1$: on this family
the finite loop condition, the bi-infinite condition, and the
coefficient-sum bound of \citet{xuan2011} all coincide. This tightness is
special to two-tap kernels; Section~\ref{subsec:num-conservatism}
quantifies how quickly it fails for longer kernels. For reference, $K_1=(0.5,0.3)$ is stable under both boundaries, and its variance paths in panel~(b) are virtually indistinguishable. Kernel $K_2=(0.25,-0.8)$
lies in the gap: open-chain stable, loop unstable, with the unstable loop
mode at $\theta_q=\pi$---the adjacent-contrast (bunching) pattern. For this current-and-leading-bus kernel family, the gap between the diamond and the strip quantifies the larger asymptotic stability region of the open chain. However, asymptotic stability does not ensure small transient variance. For a fixed kernel in this gap, applying Proposition~\ref{prop:transient-envelope} to the last bus gives a lower bound on its open-chain variance before boundary effects arrive; this bound grows geometrically with fleet size, and Panel~(b) illustrates the transient variance buildup.

Panel~(b) tracks the schedule-deviation variance of the last bus under both
boundaries, computed by iterating the exact covariance recursion
$\boldsymbol\Sigma_{s+1}=\mathbf A\boldsymbol\Sigma_s\mathbf A^{\!\top}+\sigma^2\mathbf I$
with $\mathbf A=f_0\mathbf I+f_1\mathbf S$ for the two shifts
$\mathbf S=\mathbf L$ and $\mathbf S=\mathbf P$. Expanding
$\mathbf A^{s}=\sum_{k=0}^{s}\binom{s}{k}f_0^{\,s-k}f_1^{\,k}\mathbf S^{k}$,
the two systems share every term with $k<N$, because $\mathbf L^{k}$ and
$\mathbf P^{k}$ act identically on the last bus there; the first term at
which they can differ is $k=N$, where $\mathbf L^{N}=\mathbf 0$ while
$\mathbf P^{N}=\mathbf I$. That term can appear only once $s\ge N$.
The resulting variance difference first appears at $s=N+1=13$ and becomes visible by $s\approx17$ and decisive after
$s\approx25$---confirming the delayed influence of boundaries described in
Section~\ref{sec:finite-boundary-discussion}. The open chain, with every
eigenvalue equal to $f_0=0.25$, settles at a finite plateau of
$9.3\sigma^2$, consistent with the steady-state bound of
Corollary~\ref{cor:open-output-bounds}. For reference, the same $f_0$ with
$f_1=0$ would settle at $\sigma^2/(1-f_0^2)\approx1.07\sigma^2$. The large variance under $K_2$ illustrates why open-chain stability alone does not ensure good performance. This kernel is used here to diagnose boundary effects, rather than as a proposed operational design. The loop instead diverges geometrically at rate
$|F(-1)|^2\approx1.10$ per control point in the $\theta_q=\pi$ mode, i.e., as a
growing early--late alternation between adjacent buses.
Although the loop here uses same-lap feedback, Appendix~\ref{app:seam-delay} shows that it remains unstable under $K_2$ even with cross-lap feedback.

The same control kernel is therefore stable on the open chain but unstable on the loop, with the last-bus variance paths separating only after disturbances have had time to propagate around the fleet. While the examples above use leader-only kernels on the open chain, recall that the two-sided example in Section~\ref{subsec:loop-stability-xuan} shows that boundary effects persist when both systems also use following-bus feedback.

\subsubsection{Conservatism of the Coefficient-Sum Test}\label{subsec:num-conservatism}

The coefficient-sum test $\sum_p|f_p|<1$ of \citet{xuan2011} is sufficient for loop stability but can exclude stable controllers. It coincides with the exact stability condition for two-tap kernels, as shown in Figure~\ref{fig:num-boundary}(a). Figure~\ref{fig:num-shadow} shows how the gap grows as more feedback coefficients are allowed.

To quantify this conservatism, let $r_m$ be the maximum of $\sum_{p=0}^m|f_p|/\sup_\theta|F(e^{-\mathrm{i}\theta})|$ over nonzero kernels supported on $0,\ldots,m$. The denominator is the largest gain over all frequencies and bounds the finite loop modal maximum; $r_m$ measures the largest overestimate from the coefficient sum. We have $r_1=1$; for three-tap kernels, $r_2=\sqrt2$, attained by $f\propto(1,-2,-1)$. The longer kernels $f=(4,6,-3,2)$ and $f=(1,2,2,-2,1)$ give the exact lower bounds $r_3\ge5/3\approx1.67$ and $r_4\ge2$, respectively (derivations omitted), while Cauchy--Schwarz gives the general upper bound $r_m\le\sqrt{m+1}$.

Figure~\ref{fig:num-shadow}(a) uses $f=[0.90/(2\sqrt2)](1,-2,-1)\approx(0.318,-0.636,-0.318)$, marked by a star in the figure. This kernel retains the worst-case ratio ($\sqrt2$) among three-tap kernels. The coefficient-sum test rejects this kernel ($\sum_p|f_p|=0.9\sqrt2\approx1.27$), yet $\max_q|F(z_q)|=0.90$, giving a $10\%$ loop stability margin.

\begin{figure}[tp]
\centering
\includegraphics[width=\textwidth]{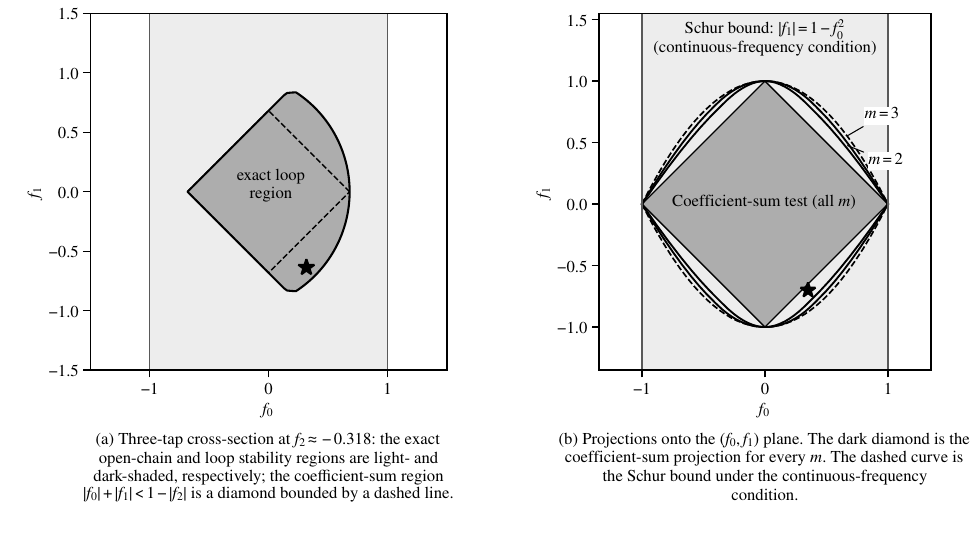}
\caption{Conservatism of the coefficient-sum test ($N=12$).\label{fig:num-shadow}}

\end{figure}

Figure~\ref{fig:num-shadow}(b) shows the admissible pairs $(f_0,f_1)$: each pair permits at least one choice of the remaining coefficients $f_2,\ldots,f_m$ that makes the loop stable. The coefficient-sum test gives the same diamond $|f_0|+|f_1|<1$ for every $m$, whereas the exact projection expands with $m$. At $f_0=0.5$, the upper limit on $|f_1|$ rises from $0.50$ ($m=1$) to $0.65$ ($m=2$) and $0.70$ ($m=3$); see the two solid curves outside the diamond. The starred pair $(0.35,-0.7)$ becomes stable when $f_2=-0.35$ is added, reducing the maximum modal gain from $1.05$ to $0.99$. Thus, additional feedback coefficients can stabilize combinations excluded by the coefficient-sum test.

The dashed curve is the Schur bound $|f_1|\le1-f_0^2$ under the stronger condition $|F(e^{-\mathrm{i}\theta})|\le1$ for every $\theta$ \citep{caratheodory1911,schur1917}. This bounds $|f_1|$ for all kernels satisfying that condition, regardless of kernel length.

\subsection{Comparison of Optimal Frontiers and Existing Control Schemes}\label{subsec:num-frontiers}

Figure~\ref{fig:num-frontiers} compares (i) best-found full cyclic designs obtained numerically with cross-lap corrections (see Appendices~\ref{app:physical-metrics}--\ref{app:physical-bounds}); (ii) the analytical design of Proposition~\ref{prop:loop-optimal-design}; (iii) Pareto-efficient designs of the named control kernels SC, FHC, BHC, and TWHC under the same-lap convention; and (iv) a near-optimal three-tap design. Panel~(a) imposes a schedule-variance budget, while panel~(b) imposes a headway-variance budget. Full analytical designs use $\times$ markers; the three-tap approximation uses squares.

\begin{figure}[htbp]
\centering
\includegraphics[width=\textwidth]{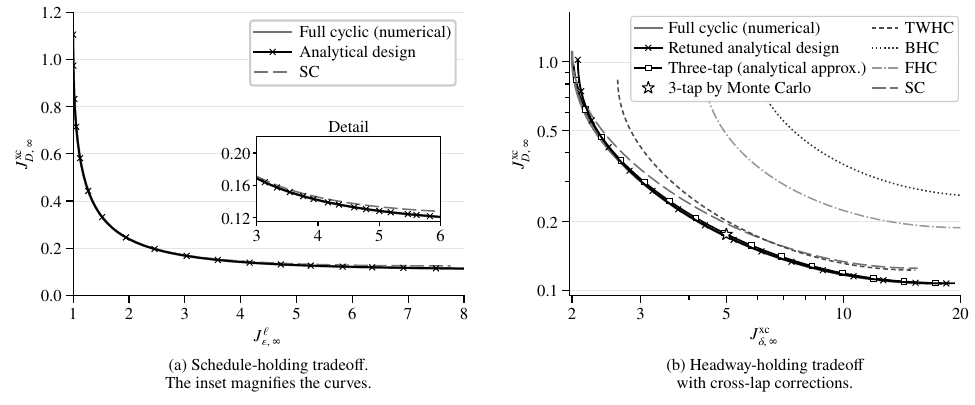}
\caption{Pareto designs and named control laws on a loop ($\beta=0.05$, $N=S=12$; variances normalized by $\sigma^2$).\label{fig:num-frontiers}}

\end{figure}

We use the superscript $\mathrm{xc}$ for outputs and metrics with cross-lap corrections, as defined in Appendix~\ref{app:physical-metrics}. Panel~(a) minimizes $J_{D,\infty}^{\mathrm{xc}}$ under $J_{\varepsilon,\infty}^{\ell}\le R_\varepsilon$, with all modes strictly stable. The full cyclic design uses a numerical search over the Lagrange multiplier, with each modal value optimized from multiple starting points. For both panels, denser multistart checks of the complex modal solves yielded objective value spreads below $10^{-12}$. The analytical comparator retains Proposition~\ref{prop:loop-optimal-design}'s closed-form construction with $\eta_\delta=0$ and its one-dimensional search over $\eta_\varepsilon$. The holding cost is reevaluated using cross-lap corrections. SC (long dashes) is optimized within its one-parameter family under the same budget, while pure headway rules are absent because their schedule variance diverges (Corollary~\ref{cor:headway-based-schedule-divergence}).

For schedule budgets from $1.05$ to $8$, the analytical cost is within $0.10\%$ of the full cyclic numerical cost. SC's excess cost is $0.26\%$ at $R_\varepsilon=1.5$ and $4.00\%$ at $R_\varepsilon=5$, consistent with the near-optimality observed numerically by \citet{xuan2011} and the convergence of the schedule-adhering optimum to SC as \(\beta\to0\) (Section~\ref{subsubsec:loop-schedule-adhering}). However, the small gap of SC grows with demand: at budget $R_\varepsilon=5$, SC's excess cost increases to $10.76\%$ at $\beta=0.10$, while the analytical design remains within \(0.18\%\) of the full cyclic numerical cost. This reveals the advantage of the analytical schedule-adhering design for high-demand bus routes.

Panel~(b) compares three design families using the exact headway and holding metrics of Appendices~\ref{app:physical-metrics}--\ref{app:physical-neutral}. The full cyclic design uses the same numerical procedure as in panel~(a), now under a headway-variance budget, allowing $\lambda_0\in(-1,1]$ while keeping the nonzero modes strictly stable. The named headway-based laws retain their neutral common mode. The retuned analytical design preserves the closed-form modal construction of Proposition~\ref{prop:loop-optimal-design}, with $\eta_\varepsilon=0$ and $\lambda_0=1$. It is retuned to meet the exact headway-variance budget with cross-lap corrections via a bisection search of $\eta_\delta$. For example, at a headway-variance budget $R_\delta=5$, the unretuned design satisfies $J_{\delta,\infty}^{\ell}=5$ but gives $J_{\delta,\infty}^{\mathrm{xc}}=5.13710$; retuning changes $\eta_\delta$ from $0.02645$ to $0.02857$, giving $J_{\delta,\infty}^{\mathrm{xc}}=5$. Both plotted coordinates include the corrections.

For headway-variance budgets from $3$ to $16$, the retuned analytical design is only $0.11$--$0.35\%$ above the full cyclic numerical cost. The difference rises to $16.5\%$ at budget $2.1$: the analytical family's tight endpoint is $J_{\delta,\infty}^{\mathrm{xc}}=2+(S-2)/N^2=2.0694$, whereas allowing the common-mode multiplier $\lambda_0$ to vary, rather than fixing it at $1$, permits tighter budgets. Varying the circuit length to $S=1,12,24$ at budget $5$ gives cost differences of $0.011\%$, $0.142\%$, and $0.275\%$, respectively. These small gaps provide numerical evidence that the analytical design, with scalar retuning, remains an effective approximation to the full cyclic numerical frontier under cross-lap corrections across moderate or loose budgets.

The three-tap approximation retains only the current-bus and immediate-neighbor coefficients $f_{-1}^*(0,\eta_\delta)$, $f_0^*(0,\eta_\delta)$, and $f_1^*(0,\eta_\delta)$ of Proposition~\ref{prop:loop-optimal-design}'s analytical kernel, setting all other coefficients to zero. As with the retuned analytical design, it requires only closed-form evaluation and a one-dimensional search over $\eta_\delta$ to meet the corrected headway-variance budget, with stability checked after truncation. Across the sampled budgets from $3$ to $16$, its holding cost is within $2.7\%$ of the full cyclic numerical frontier and below each feasible named law; the gap rises to $8.7\%$ at budget $2.05$. All plotted three-tap designs are schedule-stable.

Among the named laws, SC outperforms BHC and FHC and is surpassed by TWHC (short dashes) only at loose budgets, consistent with previous findings \citep{xuan2011,gu2025}. Nevertheless, what is new here is that SC's gap to the full cyclic numerical frontier remains substantial except near the tight-budget limit. At $R_\delta=5$, its holding cost is $14.0\%$ above that frontier and $11.3\%$ above the three-tap approximation. TWHC's deterioration as budgets tighten reflects the larger feedback gains $\alpha$ needed to meet the headway constraint (Appendix~\ref{app:twhc}); BHC incurs a further holding-cost penalty from its demand-dependent damping requirement (Appendix~\ref{app:bhc}).

The benefits of the full cyclic and three-tap designs persist over wider parameter ranges. We vary one parameter at a time from $(R_\delta,\beta,N)=(5,0.05,12)$, sampling $R_\delta=3$--$16$, $\beta=0.02$--$0.10$, and $N=8$--$24$, with $S=12$. Using cross-lap-corrected metrics, SC's holding cost is $11.7$--$17.3\%$ above the full cyclic numerical frontier, whereas the retuned analytical and three-tap designs remain within $0.35\%$ and $3.5\%$, respectively.

Together, these comparisons show that both the retuned full analytical kernel and its three-tap approximation outperform all four established controls considered here at the tested moderate and loose headway-variance budgets. This demonstrates the model's value for both benchmarking existing controls and deriving improved designs through closed-form expressions and scalar tuning.

\subsection{Sensitivity to Independent Errors in Same-Lap Inputs}\label{subsec:num-robustness}

This subsection provides a supplementary sensitivity analysis of the loop controls when their required future same-lap inputs are perturbed by independent additive errors. We focus on the three-tap approximation studied in Section~\ref{subsec:num-frontiers} and evaluate headway and holding variances with the cross-lap corrections of Appendix~\ref{app:physical-metrics}. Synthetic predictions have also been used to examine the sensitivity of bus holding policies to prediction errors \citep{berrebi2018a}.

The three-tap kernel requires two types of prediction. Buses $1,\ldots,N-1$ use their following bus's future deviation, with error standard deviation denoted by $\rho_F\cdot\sigma$, while bus~1 also uses bus~$N$'s future same-lap deviation in the leading-bus kernel term, with error standard deviation $\rho_L\cdot\sigma$ (typically $\rho_L>\rho_F$). Bus~$N$'s same-lap following-bus input is the already observed $\varepsilon_{1,s}$, rather than its physical follower's next-lap deviation $\varepsilon_{1,s+S}$, so it requires no prediction. We represent errors in these unobserved inputs by prescribed zero-mean additive perturbations. The errors are independent across buses, control points, and prediction types, and independent of the current and past state vectors and cruising disturbances. Their covariance contribution is

\begingroup

\begin{equation}
\boldsymbol\Omega=\sigma^2\diag(\tau_1,\ldots,\tau_N),\qquad
\tau_n=\begin{cases}
f_{-1}^2\rho_F^2+f_1^2\rho_L^2, & n=1,\\
f_{-1}^2\rho_F^2, & n=2,\ldots,N-1,\\
0, & n=N.
\end{cases}
\label{eq:input-error-intensity}
\end{equation}
\endgroup
In particular, SC has $\boldsymbol\Omega=\mathbf0$. Since $\boldsymbol\Omega$ is generally noncirculant, the cross-lap-corrected output variances cannot in general be obtained by multiplying their prediction-error-free values by a common factor.

For a strictly stable kernel, let $\boldsymbol\Sigma$ be the stationary state covariance under cruising disturbances and prediction errors and define $\boldsymbol\Gamma$ as the current--preceding-lap covariance:
{
\begin{equation}
\boldsymbol\Sigma
=\mathbf A_\ell\boldsymbol\Sigma\mathbf A_\ell^\top
+\sigma^2\mathbf I+\boldsymbol\Omega,
\qquad
\boldsymbol\Gamma=\mathbf A_\ell^S\boldsymbol\Sigma.
\label{eq:input-error-covariance}
\end{equation}
}
Recall that $\mathbf e_n\in\mathbb R^N$ is the $n$th standard basis vector. With $\mathbf G_\ell=\mathbf I-\mathbf P$ and $\mathbf M_\ell=\mathbf A_\ell-(1+\beta)\mathbf I+\beta\mathbf P$, the corrected steady-state metrics are

\begingroup

\begin{equation}
\begin{aligned}
\widetilde J_{\delta,\infty}^{\mathrm{xc}}
&=\frac{1}{N\sigma^2}\Bigl[
\tr(\mathbf G_\ell\boldsymbol\Sigma\mathbf G_\ell^\top)
+2\mathbf e_1^\top(\boldsymbol\Sigma-\boldsymbol\Gamma)\mathbf e_N
\Bigr],\\[2pt]
\widetilde J_{D,\infty}^{\mathrm{xc}}
&=\frac{1}{N\sigma^2}\Bigl[
\tr(\mathbf M_\ell\boldsymbol\Sigma\mathbf M_\ell^\top)+\tr(\boldsymbol\Omega)
+2\beta\mathbf e_1^\top\mathbf M_\ell(\boldsymbol\Gamma-\boldsymbol\Sigma)\mathbf e_N
+2\beta^2\mathbf e_N^\top(\boldsymbol\Sigma-\boldsymbol\Gamma)\mathbf e_N
\Bigr].
\end{aligned}
\label{eq:input-error-metrics}
\end{equation}
\endgroup
Under this exogenous additive-noise model, finite error variances change the noise covariance but not $\mathbf A_\ell$, so they do not create a new deterministic stability loss for a fixed strictly stable kernel.

The parameters $\rho_F$ and $\rho_L$ measure residual forecast error, not the raw accumulation of cruising disturbances. If $k$ independent future control-point disturbances, each with variance $\sigma^2$, simply accumulated without intervening control or predictive information, their standard deviation would be $\sqrt{k}\sigma$. This gives only a \textit{loose} no-prediction benchmark for the error scale factors, $\sqrt{k}$, if the predicted bus is $k$ control points behind the bus to be held. In practice, future holding actions at those control points can attenuate accumulated deviations of the predicted bus, and real-time travel-time prediction can exploit traffic conditions, current bus states and speeds, signal timing, and historical time-of-day patterns. Both effects may reduce $\rho_F$ and $\rho_L$ below this benchmark. In addition, the following-bus prediction horizon on a loop is about $S/N$ control points on average. When $S<N$ (e.g., when only a handful of stops are control points), the short horizon may allow even the no-prediction benchmark to be less than 1. Although the seam horizon is much longer, its error enters only bus~1's holding decision and may have a smaller aggregate effect than follower-prediction errors. Accordingly, we use $(\rho_F,\rho_L)=(0.5,1)$ and $(0.5,2)$ as moderate illustrative cases and $(1,3)$ as a deliberately severe stress case in the following tests, without claiming empirical calibration.

We use headway regularity as the sole reliability constraint. Table~\ref{tab:prediction-error-sensitivity} reports two evaluations. First, each prediction-error-free three-tap design is applied without changing its kernel (unchanged design), and the percentage increases in headway and holding variances, $\mathcal E_\delta$ and $\mathcal E_D$, measure the performance changes caused by prediction errors. Second, under each prediction-error specification, only the one-dimensional search over $\eta_\delta$ is repeated to restore the same corrected headway-variance budget (retuned design); the resulting holding-cost saving is reported relative to SC.

\begin{table}[htbp]
\caption{Sensitivity of the three-tap design to same-lap prediction errors.\label{tab:prediction-error-sensitivity}}
\centering
{\footnotesize
\setlength{\tabcolsep}{4pt}
\begin{tabular*}{\textwidth}{@{\extracolsep{\fill}}lccc@{}}
\toprule
$(\rho_F,\rho_L)$ & $\mathcal E_\delta$ (\%) & $\mathcal E_D$ (\%) & Retuned saving vs. SC (\%)\\
\midrule
$(0.5,1)$ & $0.40/0.16/0.09$ & $1.69/1.08/0.74$ & $7.06/9.02/9.89$\\
$(0.5,2)$ & $0.64/0.22/0.10$ & $2.58/1.39/0.81$ & $5.90/8.70/9.82$\\
$(1,3)$ & $2.01/0.75/0.36$ & $8.24/4.84/3.07$ & $-1.56/5.14/7.68$\\
\bottomrule
\end{tabular*}}
\par\smallskip
{\footnotesize Notes. $N=S=12$, $\beta=0.05$; entries are ordered by $R_\delta=3/5/8$. $\mathcal E_\delta$ and $\mathcal E_D$ refer to the unchanged design; savings compare the retuned design with SC at the same corrected headway-variance budget.\par}

\end{table}

Sensitivity increases as the headway-variance budget tightens. For the two moderate cases, even at $R_\delta=3$ the increases in headway and holding variances under the unchanged design are at most $0.64\%$ and $2.58\%$, respectively. The corresponding fleet-level headway standard deviation, which scales with the square root of the headway-variance metric, increases by at most $0.32\%$. At the default $R_\delta=5$, even the stress case $(1,3)$ increases headway variance by only $0.75\%$ (standard deviation by $0.37\%$) and holding variance by $4.84\%$. Under that stress case and the tighter $R_\delta=3$, however, the variance increases reach $2.01\%$ and $8.24\%$, and one-dimensional retuning leaves the three-tap cost $1.56\%$ above SC. Thus moderate prediction errors cause only small performance changes in the unchanged designs and preserve the retuned three-tap advantage over SC, whereas sufficiently large errors combined with a tight reliability budget can materially erode or eliminate that advantage.

Analysis of prediction errors in full cyclic kernels is beyond the present scope. The near-optimal three-tap performance in Section~\ref{subsec:num-frontiers} shows that the current-bus and nearest-neighbor taps capture most of the prediction-error-free performance gain. Where the remaining coefficients are small, their prediction errors may have a limited aggregate effect. Full cyclic designs also start with a larger prediction-error-free holding-cost margin over SC than three-tap designs, which may help them retain an advantage under prediction errors.

Also note that the one-dimensional retuning above is a simple feasibility correction, not a full error-aware reoptimization. Direct error-aware kernel optimization can match or improve on the retuned design and may reduce reliance on inputs with larger prediction errors, particularly in severe-error cases. Such an optimization would give up the one-dimensional analytical structure emphasized in this paper and is left for future work.

\section{Conclusions}\label{sec:conclusions}

This paper establishes a common framework for adaptive linear bus holding on finite open chains and loops, relating these settings to the bi-infinite system studied in the literature. Matrix and Fourier methods yield necessary and sufficient stability conditions for leader-only open-chain kernels and general cyclic loop kernels, exact variance formulas on loops with computable covariance bounds on open chains, and closed-form Pareto-optimal designs over the full cyclic family. The exact conditions admit kernels that the coefficient-sum test rejects, and the gap widens with kernel length. Cross-lap output corrections connect same-lap analytical designs to physical headway and holding performance, with conditions for feasibility and bounds on excess holding cost.

The results make boundary conditions and the operating horizon explicit design considerations. With current- and leading-bus feedback, open-chain stability depends only on the current-bus coefficient, whereas the loop boundary brings the entire kernel into the stability condition; stability can therefore differ between the two boundaries in either direction, and a stable open chain can still accumulate substantial variance before boundary truncation takes effect. Thus, asymptotic stability alone does not ensure satisfactory operating performance.

Modal analysis connects control coefficients to the operating mechanisms they act on. It distinguishes fleet-wide schedule drift, which headway-based feedback cannot anchor, from relative patterns such as alternating short--long headways that strongly affect headway regularity. The same modal gains and output weights explain the behavior of existing rules, including BHC and TWHC, and determine how optimal designs allocate control effort. The analysis also shows why the schedule-adhering optimum approaches SC as demand intensity decreases, whereas the headway-equalizing optimum leaves the common schedule mode neutral and damps relative deviations most strongly where headway weights are largest. This brings control coefficients, observable operating patterns, and design choices into one analytical framework.

The analytical frontiers quantify the minimum holding variance (as a proxy for holding cost) for prescribed schedule or headway variance limits. At the tested moderate and loose headway-variance budgets, the retuned analytical and three-tap designs outperform all four established controls considered. Their holding costs remain within $0.35\%$ and $3.5\%$, respectively, of the full cyclic numerical frontier across these budgets and other parameter variations. Using only the current bus and its immediate neighbors, the three-tap construction captures most of the full kernel's benefit with closed-form coefficient expressions and a one-dimensional search. In the supplementary sensitivity tests, scalar retuning preserves holding-variance savings over SC under low to moderate input errors, while the largest-error setting eliminates those savings at the tightest tested budget.

The main limitations are the homogeneous linear operating model and the use of future same-lap information in the loop analysis. The supplementary sensitivity study considers prescribed independent additive input errors; prediction biases, error correlations, and residuals generated by a causal forecasting model remain outside its scope. The holding-variance objective also leaves fleet and timetable decisions outside the design problem.

A natural extension is to allow demand intensity, disturbance variance, and control kernels to vary across control points, retaining the structure of interactions along the bus sequence that supports modal analysis. The analytical framework could then guide the selection of control points and the design of location-specific kernels for heterogeneous systems, while accounting for imperfect compensation due to demand-estimation errors. Further extensions to nonlinear holding laws that truncate negative holds at zero (allowing individual slacks to be optimized rather than set by a three-sigma rule; see \citet{gu2025}) and to heterogeneity across buses would require methods beyond the present analytical framework.

Another priority is richer prediction-error modeling and analysis, followed by error-aware kernel design. Extending the modal, covariance, and augmented-state formulations to correlated errors and, where appropriate, cross-lap feedback could reveal how forecast quality should shape control weights and information requirements.

A further planning extension is to jointly choose fleet size, scheduled slack, and the holding kernel. On a fixed-fleet loop, extra slack lengthens scheduled headways, so tighter headway regulation may reduce variability-related waiting while increasing baseline waiting through longer scheduled headways. Embedding the stability constraints and reliability--holding frontiers in passenger-cost optimization would clarify when stronger control, additional vehicles, or timetable adjustments deliver the greatest service benefit.

\section*{Acknowledgments}
The work described in this paper was partially supported by a grant from the Research Grants Council of the Hong Kong Special Administrative Region, China (PolyU P0051967/RGC 15237624), by the Otto Poon Charitable Foundation through an Interdisciplinary Research Fund in Smart Cities (Project No. P0058095), and by the National Natural Science Foundation of China (Grant Nos. 72201214, 72531008, and 72571220).

\clearpage
\appendix

\renewcommand{\theHsection}{appendix.\Alph{section}}
\renewcommand{\theHsubsection}{\theHsection.\arabic{subsection}}
\renewcommand{\theHsubsubsection}{\theHsubsection.\arabic{subsubsection}}

\section{Open-Chain Analysis: Proofs and Supplementary Details}\label{app:open-chain}

\subsection{Proof of Proposition~\ref{prop:open-cov}}\label{app:open-cov}

By Proposition~\ref{prop:open-spectrum}, $|f_0|<1$ implies $\mathbf{A}_{\mathrm{o}}$ is Schur stable. Iterating the covariance recursion gives
\[
\boldsymbol{\Sigma}_{s}
=\mathbf{A}_{\mathrm{o}}^{s}\boldsymbol{\Sigma}_{0}(\mathbf{A}_{\mathrm{o}}^{\top})^{s}
+\sum_{j=0}^{s-1}\mathbf{A}_{\mathrm{o}}^{j}\mathbf{Q}(\mathbf{A}_{\mathrm{o}}^{\top})^{j}.
\]
The first term vanishes as $s\to\infty$, and the series converges absolutely in any matrix norm because $\mathbf{A}_{\mathrm{o}}$ is Schur stable. The limit satisfies the discrete Lyapunov equation. Uniqueness follows from the standard discrete Lyapunov theorem for Schur-stable matrices.

\subsection{Proof of Corollary~\ref{cor:open-output-bounds}}\label{app:open-output-bounds}

For zero initial deviation, iterating the covariance recursion and using the positive semidefiniteness of each summand gives
\begingroup
\[
\boldsymbol\Sigma_s
=\sum_{j=0}^{s-1}\mathbf A_{\mathrm o}^{j}\mathbf Q(\mathbf A_{\mathrm o}^{\top})^{j}
\preceq\sum_{j=0}^{\infty}\mathbf A_{\mathrm o}^{j}\mathbf Q(\mathbf A_{\mathrm o}^{\top})^{j}
=\boldsymbol\Sigma_\infty.
\]
\endgroup
Here $\preceq$ denotes the Loewner order: $\mathbf{X}\preceq\mathbf{Y}$ means that $\mathbf{Y}-\mathbf{X}$ is positive semidefinite. Congruence preserves this order, so $\mathbf{C}\boldsymbol{\Sigma}_{s}\mathbf{C}^{\top}\preceq\mathbf{C}\boldsymbol{\Sigma}_{\infty}\mathbf{C}^{\top}$ for any matrix $\mathbf{C}$.

Schedule and headway deviations correspond to $\varepsilon_{n,s}=\mathbf{e}_n^{\top}\boldsymbol{\varepsilon}_{s}$ and $\delta_{n,s}=\mathbf{g}_n^{\top}\boldsymbol{\varepsilon}_{s}$, respectively. Holding times satisfy $D_{n,s}=d_s+\mathbf{m}_n^{\top}\boldsymbol{\varepsilon}_{s}$. The limiting schedule- and headway-deviation variances and the square root of the limiting holding-time variance give the equalities in \eqref{eq:open-schedule-var}--\eqref{eq:open-holding-sd}. The finite-horizon comparison follows from the congruence ordering above.

It remains to justify the computable norm bound. For each summand in \eqref{eq:open-sigma-infty},
\[
\mathbf{A}_{\mathrm{o}}^{j}\mathbf{Q}(\mathbf{A}_{\mathrm{o}}^{\top})^{j}
\preceq
\lambda_{\max}\!\left(\mathbf{A}_{\mathrm{o}}^{j}\mathbf{Q}(\mathbf{A}_{\mathrm{o}}^{\top})^{j}\right)\mathbf{I}
\preceq
\lVert\mathbf{A}_{\mathrm{o}}^{j}\rVert_2^2\lVert\mathbf{Q}\rVert_2\mathbf{I}.
\]
Here $\lambda_{\max}(\cdot)$ denotes the largest eigenvalue. The first inequality uses the spectral bound for positive semidefinite matrices, and the second uses submultiplicativity of the spectral norm. Summing over $j$ gives
\[
\boldsymbol{\Sigma}_{\infty}
\preceq
\left(
\sum_{j=0}^{\infty}
\lVert\mathbf{A}_{\mathrm{o}}^{j}\rVert_2^2\lVert\mathbf{Q}\rVert_2
\right)\mathbf{I}.
\]
Finally, by \eqref{eq:open-binomial} and the triangle inequality,
\[
\lVert\mathbf{A}_{\mathrm{o}}^{j}\rVert_2
\le
\sum_{k=0}^{\min\{j,\nu-1\}}
\binom{j}{k}|f_0|^{j-k}\lVert\mathbf{R}\rVert_2^{k}.
\]
Combining these two bounds yields the stated definition of $\mathcal{B}_{\mathrm{o}}$. Applying $\mathbf{a}^{\top}\boldsymbol{\Sigma}_{\infty}\mathbf{a}\le\lVert\mathbf{a}\rVert_2^2\mathcal{B}_{\mathrm{o}}$ with $\mathbf{a}=\mathbf{e}_n,\mathbf{g}_n,\mathbf{m}_n$ and $\lVert\mathbf{g}_n\rVert_2^2=2$ gives the inequalities in \eqref{eq:open-schedule-var}--\eqref{eq:open-headway-var} and, after taking square roots, \eqref{eq:open-holding-sd}.

\subsection{Nilpotency index of an open chain}\label{app:supp-nilpotency-index}\label{app:supplementary-details}

The nilpotency index $\nu$ of $\mathbf R$ can be read directly from the first nonzero shift coefficient. If $\ell_*:=\min\{\ell\in\{1,\ldots,\min(m,N-1)\}: f_\ell\ne0\}$ exists, then $\nu=\lceil N/\ell_*\rceil$, where $\lceil x\rceil$ indicates the smallest integer not less than $x$. If no such $\ell_*$ exists, then $\mathbf R=\mathbf 0$ and we take $\nu=1$. In the common case $f_1\ne0$, this gives $\nu=N$.

\subsection{Proof of Corollary~\ref{cor:open-schedule-headway-norm}}\label{app:open-schedule-headway-norm}

The matrix $\mathbf{B}_{\mathrm{o}}:=\mathbf{G}_{\mathrm{o}}^{\top}\mathbf{G}_{\mathrm{o}}$ has diagonal entries $(1,2,\ldots,2,1)$, adjacent off-diagonal entries $-1$, and all other entries zero. Its eigenvalues are
\[
\lambda_k=2-2\cos\!\left(\frac{k\pi}{N}\right),
\qquad k=0,\ldots,N-1.
\]
Thus $\mathbf{0}\preceq\mathbf{B}_{\mathrm{o}}\preceq[2+2\cos(\pi/N)]\mathbf I$. Since $\boldsymbol{\Sigma}_{s}\succeq0$, cyclic invariance of the trace gives $V_{\delta,s}^{\mathrm{o}}=\tr(\mathbf{B}_{\mathrm{o}}\boldsymbol{\Sigma}_{s})/(N-1)$, and hence
\[
0\le V_{\delta,s}^{\mathrm{o}}
\le\frac{N[2+2\cos(\pi/N)]}{N-1}V_{\varepsilon,s}^{\mathrm{o}}
=\gamma_N V_{\varepsilon,s}^{\mathrm{o}},
\]
which proves \eqref{eq:open-schedule-headway-bound}. The steady-state version follows by taking the limit $s\to\infty$ whenever $\boldsymbol{\Sigma}_{s}$ converges.

\subsection{Proof of Proposition~\ref{prop:simple-point-metrics}}\label{app:simple-point-metrics}

Under SC, \(\mathbf{A}_{\mathrm{o}}=f_0\mathbf{I}\), so with zero initial deviation the SC recursion gives $\varepsilon_{n,s}=\sum_{k=0}^{s-1}f_0^k v_{n,s-k}$, rendering $\varepsilon_{n,s}$ and $\varepsilon_{n',s}$ independent if $n\ne n'$.
Independence over control points and \(\Var(v_{n,s})=\sigma^2\) imply
\[
\Var(\varepsilon_{n,s})
=
\sigma^2\sum_{k=0}^{s-1}f_0^{2k}
=
\sigma^2\frac{1-f_0^{2s}}{1-f_0^2}.
\]
This proves \eqref{eq:simple-Jepsilon}.

Under the independent homoskedastic disturbance assumption, \(\boldsymbol{\Sigma}_{s}\) is a scalar multiple of \(\mathbf I\): $\boldsymbol{\Sigma}_{s}=\sigma^2 J_{\varepsilon,s}(f_0)\mathbf I$. For each headway, $\delta_{n,s}=\varepsilon_{n,s}-\varepsilon_{n-1,s}$, so independence across buses gives $\Var(\delta_{n,s})=2\Var(\varepsilon_{n,s})$. Summing these $N-1$ variances and dividing by $(N-1)\sigma^2$ yields $J_{\delta,s}(f_0)=2J_{\varepsilon,s}(f_0)$, proving \eqref{eq:simple-Jdelta}.

Finally, substituting \(\mathbf{A}_{\mathrm{o}}=f_0\mathbf{I}\) into the holding output matrix gives, for an interior bus \(n\ge2\),
\[
D_{n,s}-d_s
=
-(1+\beta-f_0)\varepsilon_{n,s}
+
\beta\varepsilon_{n-1,s}.
\]
The deterministic slack \(d_s\) does not affect variance. Independence across buses gives
\[
\Var(D_{n,s})
=
\left[(1+\beta-f_0)^2+\beta^2\right]\Var(\varepsilon_{n,s}).
\]
Combining this holding-variance expression with the schedule variance above gives \eqref{eq:simple-JD}.

\subsection{Proof of Corollary~\ref{cor:simple-asymptotic-opt}}\label{app:simple-asymptotic-opt}

Taking \(s\to\infty\) in Proposition~\ref{prop:simple-point-metrics} gives
\begingroup
\[
J_{\varepsilon,\infty}(f_0)=\frac{1}{1-f_0^2},\quad
J_{\delta,\infty}(f_0)=\frac{2}{1-f_0^2},\quad
J_{D,\infty}(f_0;\beta)=\frac{(1+\beta-f_0)^2+\beta^2}{1-f_0^2}.
\]
\endgroup
The first two quantities are strictly increasing on $[0,1)$, so the schedule and headway constraints imply $f_0\le f_{\max,\varepsilon}^{\infty}=\sqrt{1-1/\bar J_{\varepsilon}}$ and $f_0\le f_{\max,\delta}^{\infty}=\sqrt{1-2/\bar J_{\delta}}$, respectively. Thus the asymptotic feasible set is the interval $0\le f_0\le\min\{f_{\max,\varepsilon}^{\infty},f_{\max,\delta}^{\infty}\}$.

It remains to minimize the unconstrained asymptotic holding-cost proxy
\[
\psi(f_0)
:=
J_{D,\infty}(f_0;\beta)
=
\frac{(1+\beta-f_0)^2+\beta^2}{1-f_0^2}.
\]
Differentiating gives
\[
\psi'(f_0)
=
-\frac{
2\left[
(1+\beta)f_0^2
-2(\beta^2+\beta+1)f_0
+(1+\beta)
\right]
}{(1-f_0^2)^2}.
\]
For $\beta>0$, the unique critical point in $(0,1)$ is
\[
f_0^*(\beta)
=
\frac{\beta^2+\beta+1-\beta\sqrt{\beta^2+2\beta+2}}{1+\beta}.
\]
The derivative is negative on $(0,f_0^*)$ and positive on $(f_0^*,1)$, so $\psi(f_0)$ decreases up to $f_0^*$ and increases thereafter. Therefore, minimizing $\psi$ over the feasible interval gives $f_0^{\mathrm{opt},\infty}=\min\{f_0^*(\beta),f_{\max,\varepsilon}^{\infty},f_{\max,\delta}^{\infty}\}$.

{
\subsection{Proof of Proposition~\ref{prop:open-two-sided}}\label{app:open-two-sided}

Given $\mathbf{A}_{\mathrm{o}}^{\pm}=\left[\begin{smallmatrix}
f_0&f_{-1}&&0\\
f_1&f_0&\ddots&\\
&\ddots&\ddots&f_{-1}\\
0&&f_1&f_0
\end{smallmatrix}\right]_{N\times N}$. For its leading $j\times j$ block, define $D_j(\lambda)=\left|\begin{smallmatrix}
\lambda-f_0&-f_{-1}&&0\\
-f_1&\lambda-f_0&\ddots&\\
&\ddots&\ddots&-f_{-1}\\
0&&-f_1&\lambda-f_0
\end{smallmatrix}\right|_{j\times j}$. Cofactor expansion along the last row and then the last column of the off-diagonal minor gives $D_0(\lambda)=1$, $D_1(\lambda)=\lambda-f_0$, and $D_j(\lambda)=(\lambda-f_0)D_{j-1}(\lambda)-cD_{j-2}(\lambda)$ for $j=2,\ldots,N$, where $c=f_1f_{-1}$. If $c=0$, this recurrence gives $D_N(\lambda)=(\lambda-f_0)^N$, so all eigenvalues equal $f_0$.

If $c\ne0$, take $\xi=\sqrt{c}$, with $\xi=\mathrm{i}\sqrt{|c|}$ for $c<0$. The recurrence yields
\[
D_j(\lambda)=\xi^j U_j\!\left(\frac{\lambda-f_0}{2\xi}\right),
\]
where $U_j$ is the Chebyshev polynomial of the second kind, satisfying $U_0(x)=1$, $U_1(x)=2x$, and $U_j(x)=2xU_{j-1}(x)-U_{j-2}(x)$. The $N$ roots of $U_N$ are $\cos(k\pi/(N+1))$, $k=1,\ldots,N$, which proves \eqref{eq:open-two-sided-spectrum}.

For $c>0$, the eigenvalues are real and their extreme values are $f_0\pm2\sqrt{c}\,\chi_N$. Hence their largest modulus is $|f_0|+2\sqrt{c}\,\chi_N$. For $c<0$, each eigenvalue has real part $f_0$ and squared modulus $f_0^2+4|c|\cos^2(k\pi/(N+1))$, whose maximum is $f_0^2+4|c|\chi_N^2$. Requiring every eigenvalue to have modulus strictly below one, together with the $c=0$ case, gives \eqref{eq:open-two-sided-stability}.
}

\section{Loop Analysis: Proofs and Supplementary Details}\label{app:loop-analysis}

Headway and holding output formulas in this appendix concern the same-lap outputs $\bar{\boldsymbol\delta}_s$ and $\bar{\mathbf D}_s$ defined in Section~\ref{subsec:loop-performance-metrics}.

\subsection{Proof of Proposition~\ref{prop:loop-modal-decomposition}}\label{app:loop-modal-decomposition}

The matrix $\mathbf{P}$ shifts the entries of any vector one position downward in the bus-index ordering, wrapping the last entry to the top. Applying this shift to $\mathbf{u}_q$ gives $(\mathbf{P}\mathbf{u}_q)_n=(\mathbf{u}_q)_{n-1}$, with the index understood periodically. Hence $(\mathbf{P}\mathbf{u}_q)_n=e^{-\mathrm{i}\theta_q}(\mathbf{u}_q)_n=z_q(\mathbf{u}_q)_n$, and each Fourier vector $\mathbf{u}_q$ is an eigenvector of $\mathbf{P}$ with eigenvalue $z_q$. As the Fourier vectors form an orthonormal basis, $\mathbf{U}$ is unitary and diagonalizes $\mathbf{P}$: $\mathbf{U}^{*}\mathbf{P}\mathbf{U}=\diag(z_0,z_1,\ldots,z_{N-1})$.
Since $\mathbf{A}_{\ell}=F(\mathbf{P})$ is a polynomial in $\mathbf{P}$, the same basis diagonalizes $\mathbf{A}_{\ell}$:
\[
\mathbf{U}^{*}\mathbf{A}_{\ell}\mathbf{U}
=\sum_{p=0}^{m}f_p(\mathbf{U}^{*}\mathbf{P}\mathbf{U})^p
=\diag\bigl(F(z_q)\bigr)_{q=0}^{N-1}.
\]
Multiplying the state equation by $\mathbf{U}^{*}$ gives \eqref{eq:modal-dynamics}.

\subsection{Proof of Proposition~\ref{prop:loop-performance-metrics}}\label{app:loop-performance-metrics}

From \eqref{eq:modal-dynamics}, each mode satisfies
\[
\Var(\widehat{\varepsilon}_{q,s})
=
\sigma^2\sum_{j=0}^{s-1}|\lambda_q|^{2j}
=
\sigma^2\frac{1-|\lambda_q|^{2s}}{1-|\lambda_q|^2},
\]
with the continuity interpretation \(\sigma^2 s\) when \(|\lambda_q|=1\). Because \(\mathbf{U}\) is unitary, the trace of the covariance is unchanged by the DFT, which gives \eqref{eq:loop-schedule-finite}. If \eqref{eq:loop-stability} holds, then taking \(s\to\infty\) gives \eqref{eq:loop-schedule-ss}.

The headway vector is $\bar{\boldsymbol\delta}_s=\mathbf G_\ell\boldsymbol\varepsilon_s$. Its Fourier coordinates satisfy $\widehat{\bar\delta}_{q,s}=(1-z_q)\widehat\varepsilon_{q,s}$, giving \eqref{eq:loop-headway-finite} and, under strict stability, \eqref{eq:loop-headway-ss}.

The same-lap holding output satisfies $\bar{\mathbf D}_s-d_s\mathbf1=\mathbf M_\ell\boldsymbol\varepsilon_s$, with $\mathbf M_\ell$ defined in Section~\ref{subsec:loop-performance-metrics}. Its $q$-th Fourier multiplier is $\mu_q$, giving \eqref{eq:loop-holding-finite} and \eqref{eq:loop-holding-ss}.

\subsection{Proof of Proposition~\ref{prop:headway-based-zero-mode}}\label{app:headway-based-zero-mode}

Equating the two expressions $\bar{\mathbf D}_s-d_s\mathbf1=\mathbf M_\ell\boldsymbol\varepsilon_s$ and $\bar{\mathbf D}_s-d_s\mathbf1=K(\mathbf P)\mathbf G_\ell\boldsymbol\varepsilon_s$ gives $\mathbf A_\ell=(1+\beta)\mathbf I-\beta\mathbf P+K(\mathbf P)\mathbf G_\ell$. On a Fourier vector satisfying $\mathbf P\mathbf u=z\mathbf u$, this matrix has multiplier $(1+\beta)-\beta z+K(z)(1-z)$, proving \eqref{eq:headway-based-symbol}.

For the zero-frequency component, \(F_{\mathrm h}(z_0)=F_{\mathrm h}(1)=1\). Since its update multiplier is one, the loop transition matrix has an eigenvalue on the unit circle and cannot be Schur stable.

\subsection{Proof of Corollary~\ref{cor:headway-based-schedule-divergence}}\label{app:headway-based-schedule-divergence}

Proposition~\ref{prop:headway-based-zero-mode} gives \(\lambda_0=1\). The \(q=0\) term in \eqref{eq:loop-schedule-finite} is therefore \(s/N\), which gives \eqref{eq:headway-based-schedule-growth} and the divergence of \(J_{\varepsilon,s}^{\ell}\). For the same-lap headway metric, the $q=0$ term is multiplied by $h_0=0$, so it drops out of \eqref{eq:loop-headway-finite}. If all nonzero modes are stable, the remaining terms converge to the stated headway limit.

\subsection{BHC with Following-Bus Information}\label{app:bhc}\label{app:headway-based-examples}

{
These two subsections apply the headway-based analysis of Section~\ref{subsubsec:loop-headway-based-control} to the BHC and TWHC kernels defined in Section~\ref{subsec:local-model}. We use the same-lap outputs of Section~\ref{subsec:loop-performance-metrics} and assume exact following-bus information. For compactness, we retain the signed-offset notation $z^{-1}$, which equals $z^{N-1}$ at the loop's Fourier nodes.

For BHC, $F_{\mathrm{BH}}(z)=\alpha z^{-1}+(1+\beta-\alpha)-\beta z$. Since $F_{\mathrm{BH}}(1)=1$, the common schedule mode is neutral and its variance grows under persistent disturbances (Corollary~\ref{cor:headway-based-schedule-divergence}). For each nonzero mode, the headway weight $h_q=|1-z_q|^2$ gives the exact squared modal gain
\[
|\lambda_q^{\mathrm{BH}}|^2
=1+h_q\bigl[(\alpha+\beta)^2-(\alpha-\beta)-\alpha\beta h_q\bigr].
\]
Because $\alpha\beta>0$, all nonzero modes are stable exactly when
\[
\alpha-\beta>(\alpha+\beta)^2-\alpha\beta h_{\min},
\qquad h_{\min}=4\sin^2(\pi/N).
\]
For arbitrarily large fleets, the condition becomes $\alpha\in[\alpha_-,\alpha_+]$, with $\alpha_\pm=(1-2\beta\pm\sqrt{1-8\beta})/2$ and $0<\beta\le1/8$. At $\beta=1/8$, the endpoints coincide at $\alpha=3/8$. Gains below $\alpha_-$ cannot offset demand amplification, whereas gains above $\alpha_+$ overcorrect; either destabilizes sufficiently low-frequency modes as the fleet grows. The stability interval narrows as demand intensity $\beta$ increases, making BHC more sensitive to gain tuning. This sensitivity may partly explain the relatively poor performance reported for BHC in the literature.

Here $\mu_q^{\mathrm{BH}}=\alpha(z_q^{-1}-1)$, so each mode's holding-variance contribution is $\alpha^2$ times its headway-variance contribution. Weak low-frequency damping near the interval endpoints therefore raises both variances, while increasing $\alpha$ also raises this holding-cost factor. This combination helps explain BHC's high holding costs in Section~\ref{subsec:num-frontiers}.
}

\subsection{TWHC with Following-Bus Information}\label{app:twhc}

{
For TWHC, $F_{\mathrm{TW}}(z)=\alpha z^{-1}+(1-2\alpha)+\alpha z$, so $\lambda_q^{\mathrm{TW}}=1-\alpha h_q$. The common mode again has $\lambda_0=1$. For $0<\alpha<1/2$, every nonzero mode satisfies $0<h_q\le4$ and hence $|\lambda_q^{\mathrm{TW}}|<1$. Thus headway variance converges while schedule variance diverges. This characterization is consistent with the Fourier analysis of a related two-way-looking speed-control law by \citet{daganzo2011}.

The adjacent-contrast modes near $\theta_q=\pi$ have the largest headway weights and receive the largest corrections $\alpha h_q$. For $0<\alpha\le1/4$, these modes are also damped most strongly, aligning control with the headway metric. However, damping is strongest at $\alpha h_q=1$; beyond this point, the multiplier becomes negative and its magnitude increases. Increasing the feedback gain $\alpha$ therefore does not always improve headway regularity.

The holding-output multiplier is $\mu_q^{\mathrm{TW}}=\alpha z_q^{-1}-(\beta+2\alpha)+(\beta+\alpha)z_q$, giving
\[
|\mu_q^{\mathrm{TW}}|^2
=h_q\bigl[\beta^2+\alpha(\alpha+\beta)h_q\bigr].
\]
Each nonzero mode therefore contributes
\[
\frac1N\frac{|\mu_q^{\mathrm{TW}}|^2}{1-|\lambda_q^{\mathrm{TW}}|^2}
=\frac{\beta^2+\alpha(\alpha+\beta)h_q}{N\alpha(2-\alpha h_q)}
\]
to the normalized fleet-average holding-time variance. At low frequencies this contribution approaches $\beta^2/(2N\alpha)$, which also gives its leading growth as $\alpha\downarrow0$ for any fixed nonzero mode with $\beta>0$. Small $\alpha$ weakens headway restoration but retains demand compensation ($\mu_q^{\mathrm{TW}}\to\beta(z_q-1)$), so accumulated disturbances can produce large holding variance.

For even $N$, the alternating mode has $h_q=4$ and contributes $(\beta+2\alpha)^2/[2N\alpha(1-2\alpha)]$, which diverges as $\alpha\uparrow1/2$. Here $\lambda_{N/2}^{\mathrm{TW}}=1-4\alpha\to-1$, so the alternating mode becomes weakly damped as $\alpha$ approaches its upper stability limit. See Section~\ref{subsec:num-frontiers} for a more detailed numerical comparison of BHC and TWHC.
}

\subsection{Derivation of the Optimal Loop Control and Its Limiting Cases}\label{app:loop-optimal-design}

We prove Proposition~\ref{prop:loop-optimal-design} and derive the special cases and endpoints in Sections~\ref{subsubsec:loop-schedule-adhering} and~\ref{subsubsec:loop-headway-equalizing}, using the notation defined in the proposition. Since $G_q^2=1+\beta(1+\beta)h_q$, we have $G_q\ge1$, with strict inequality for nonzero modes when $\beta>0$.

The modal values \(\lambda_q=F_f(z_q)\) give an equivalent parametrization of a full cyclic kernel. The inverse DFT is one-to-one, and real coefficients correspond exactly to \(\lambda_{N-q}=\overline{\lambda_q}\), with \(\lambda_0\) real and, for even \(N\), \(\lambda_{N/2}\) real. For fixed \(|\lambda_q|=r\), the holding numerator \(|\lambda_q-g_q|^2=r^2+G_q^2-2\operatorname{Re}(\lambda_q\overline{g_q})\) is minimized by phase alignment. Thus \(\lambda_q=r_qg_q/G_q\), and the modal holding cost becomes \(\phi_{G_q}(r_q)=(r_q-G_q)^2/(1-r_q^2)\).

This reduction also establishes global optimality. Taking the second derivative shows that \(\phi_G(r)\) is strictly convex on \(0\le r<1\), and the reliability terms \(1/(1-r^2)\) and \(h_q/(1-r^2)\) are convex. Budgets strictly above the noise floors admit a strictly feasible point, satisfying Slater's condition. A finite schedule budget bounds all amplitudes away from one, so a unique optimum exists and the Karush--Kuhn--Tucker (KKT) conditions are necessary and sufficient. Symmetry of \(G_q\) and \(h_q\) (\(G_{N-q}=G_q, h_{N-q}=h_q\)) makes the minimizing amplitudes identical in conjugate mode pairs, preserving the real-kernel constraint.

Up to a positive scaling and constants independent of the modes, the Lagrangian separates as
\[
\sum_{q=0}^{N-1}
\frac{|\lambda_q-g_q|^2+W_q}{1-|\lambda_q|^2}.
\]
After phase alignment, each mode minimizes $[(r-G_q)^2+W_q]/(1-r^2)$ over $0\le r<1$. Its first-order condition is $G_qr^2-(1+G_q^2+W_q)r+G_q=0$, whose admissible root is \eqref{eq:modal-rstar}. This root lies strictly inside the unit disk when $G_q>1$ or $W_q>0$, and phase alignment gives \eqref{eq:modal-lambdastar}. Since $g_{N-q}=\overline{g_q}$ and $h_{N-q}=h_q$, the optimal modal values satisfy conjugate symmetry; the inverse DFT therefore gives the real coefficients in \eqref{eq:modal-fstar}.

For the zero mode, $G_0=1$, $h_0=0$, and the unpenalized holding cost is $(1-r_0)/(1+r_0)$, which decreases strictly with $r_0$. Thus a finite schedule budget requires $\eta_\varepsilon>0$, and $0<\lambda_0^*<1$. If the schedule constraint is omitted while full Schur stability is retained, this mode has only an infimum as $r_0\uparrow1$, not an attained minimum.

With only a headway constraint, the zero mode contributes no headway variance because $h_0=0$. For $-1<\lambda_0<1$, its holding-variance contribution is $(1-\lambda_0)/[N(1+\lambda_0)]$, strictly decreasing in $\lambda_0$. At $\lambda_0=1$, the holding output vanishes because $\mu_0=0$, attaining zero cost. Since a full cyclic kernel allows $\lambda_0$ to vary independently of the other modes, the unique optimal common-mode factor is $\lambda_0^*=1$. We therefore optimize the remaining modes $q=1,\ldots,N-1$ with $\eta_\varepsilon=0$. A finite headway-variance budget bounds those amplitudes away from one because \(h_q>0\). Strict convexity and Slater feasibility for \(R_{\delta}>2\) again give a unique optimum. The same phase and amplitude formulas apply to these modes, and \eqref{eq:modal-fstar} includes this optimal zero mode. To verify the frequency ordering of the damping, write the amplitude equation as \(r_q+r_q^{-1}=[2+(\beta(1+\beta)+\eta_{\delta})h_q]/\sqrt{1+\beta(1+\beta)h_q}\). Its right-hand side increases with \(h_q\), whereas \(r+r^{-1}\) decreases on \(0<r<1\); hence larger headway weights give smaller optimal amplitudes.

The tight endpoints follow directly from \eqref{eq:modal-rstar}. For schedule-adhering design, \(\eta_{\delta}=0\) and \(\eta_{\varepsilon}\to\infty\) make every amplitude vanish. Thus \(F_f^*\equiv0\), every \(f_p^*\to0\), and the noise-floor metrics are \(J_{\varepsilon,\infty}^{\ell}\to1\) and \(J_{\delta,\infty}^{\ell}\to2\). The holding cost tends to \(N^{-1}\sum_qG_q^2=1+2\beta+2\beta^2\). For headway-equalizing design, \(\eta_{\delta}\to\infty\) makes only the nonzero modes vanish, while \(\lambda_0^*=1\). The inverse transform gives \(f_p^*=1/N\), the fleet-mean projection; \(J_{\delta,\infty}^{\ell}\to2\) and \(J_{D,\infty}^{\ell}\to1+2\beta+2\beta^2-1/N\), while the absolute schedule variance remains unbounded. The optimal controls at the noise floors are attainable; in \eqref{eq:modal-rstar}, these boundary solutions are obtained by letting the corresponding multipliers tend to infinity. The same limiting interpretation applies when either budget reaches its floor in the two-constraint problem.

At either loose endpoint, the nonzero modes have \(W_q\to0\), so \(r_q^*\to1/G_q\) and
\[
\lambda_q^*\to\frac{g_q}{|g_q|^2}
=\frac{1}{(1+\beta)-\beta z_q^{-1}}.
\]
For schedule-adhering design, \(\lambda_0^*\to1\) only as the finite schedule budget is relaxed without bound; for headway-equalizing design it equals one throughout. The resulting inverse filter has the noncyclic expansion
\[
F^*(z)=\frac{1}{(1+\beta)-\beta z^{-1}}
=\frac{1}{1+\beta}\sum_{j=0}^{\infty}
\left(\frac{\beta}{1+\beta}\right)^jz^{-j}.
\]
The first unfolded weights are \(1/(1+\beta)\), \(\beta/(1+\beta)^2\), and \(\beta^2/(1+\beta)^3\). On a finite loop, terms with congruent offsets must be summed. With \(a=\beta/(1+\beta)<1\), the exact cyclic coefficients are
\[
f_{(-j)\bmod N}^*
=\frac{a^j}{(1+\beta)(1-a^N)},
\qquad j=0,\ldots,N-1.
\]
Thus the negative-offset terms encode geometrically weighted following-bus information; simply relabeling the unfolded weights modulo \(N\) would omit the repeated cycles.

Finally, at this inverse-filter endpoint each nonzero mode contributes \(G_q^2-1=\beta(1+\beta)h_q\) to the holding-cost sum and \(G_q^2/[\beta(1+\beta)]\) to the headway sum. Since \(\sum_qh_q=2N\), the endpoint metrics in Table~\ref{tab:loop-endpoints} are \(J_{D,\infty}^{\ell}=2\beta(1+\beta)\) and \(J_{\delta,\infty}^{\ell}=2+(N-1)/[N\beta(1+\beta)]\). For headway-variance budgets at or above this latter value, the inverse filter is already feasible and optimal with \(\eta_{\delta}=0\); tighter budgets require a positive multiplier. These calculations establish both special cases and all four endpoints without a separate optimization argument.

{
\section{Cross-lap Feedback and Performance Corrections}\label{app:physical-loop}

Appendix~\ref{app:seam-delay} examines the stability of cross-lap feedback laws. Appendices~\ref{app:physical-metrics}--\ref{app:physical-bounds} quantify how cross-lap corrections to headway and holding outputs affect performance and optimal design, retaining same-lap kernel inputs. Throughout, $N\ge2$, $S\ge1$ is the number of control points per circuit, $\beta>0$, and the disturbance assumptions of Proposition~\ref{prop:loop-performance-metrics} apply.

\subsection{Cross-Lap Feedback}\label{app:seam-delay}

We model a cross-lap feedback law using observed preceding-lap inputs and test a two-tap kernel on whether the open-chain/loop stability contrast survives this change. The cross-lap law replaces each wrapping leader input by its observed previous-lap value. For example, the $f_1$ term at bus~1 uses $\varepsilon_{N,s-S}$ instead of $\varepsilon_{N,s}$; the two-tap laws coincide at all other buses. The demand-compensation term uses the observed leader under either feedback law. For physical leading-bus offsets $0\le m\le N-1$, this substitution gives

\begingroup
\begin{equation}
\boldsymbol\varepsilon_{s+1}=(\mathbf A_\ell-\mathbf W)\boldsymbol\varepsilon_s
+\mathbf W\boldsymbol\varepsilon_{s-S}+\mathbf v_{s+1},\quad
\mathbf W=\sum_{p=1}^{m}f_p(\mathbf P^p-\mathbf L^p).
\label{eq:seam-delay}
\end{equation}
\endgroup
Here $\mathbf P^p-\mathbf L^p$ retains only the $p$-step bus-index shifts that cross the seam; these inputs are evaluated at $s-S$ rather than $s$ in \eqref{eq:seam-delay}. Thus $\mathbf W$ is supported within the first $m$ rows and last $m$ columns. This construction uses physical leader offsets before modulo rewriting.

To retain the history needed at successive steps, stack the current state and the preceding $S$ states into $(\boldsymbol\varepsilon_s^\top,\ldots,\boldsymbol\varepsilon_{s-S}^\top)^\top$. This augmented state contains $S+1$ blocks of $N$ components each, so its dimension is $N(S+1)$. The resulting first-order matrix recursion admits numerical spectral and covariance calculations, but for general kernels the seam-specific delay removes the circulant structure underlying the closed-form performance and design analysis. The main text therefore focuses on same-lap feedback.

For the forward-looking two-tap kernel $F(z)=f_0+f_1z$, $f_1\ne0$, one has $\mathbf A_\ell-\mathbf W=f_0\mathbf I+f_1\mathbf L$ and $\mathbf W=f_1\mathbf e_1\mathbf e_N^\top$, where $\mathbf e_n$ has a one in position $n$ and zeros elsewhere.

As in Proposition~\ref{prop:loop-stability}, stability is determined by the eigenvalues of the disturbance-free recursion. Seek a mode $\boldsymbol\varepsilon_s=\zeta^s\boldsymbol\phi$, where $\zeta\in\mathbb C\setminus\{0\}$ is its one-step multiplier and $\boldsymbol\phi\in\mathbb C^N\setminus\{\mathbf0\}$ specifies the relative amplitudes and phases across buses. Every component of this mode is multiplied by $\zeta$ at each step; hence it decays if $|\zeta|<1$ and grows if $|\zeta|>1$. Substitution gives $(\zeta-f_0)\phi_n=f_1\phi_{n-1}$ for $n\ge2$ and $(\zeta-f_0)\phi_1=f_1\zeta^{-S}\phi_N$. Iterating the first relation yields $(\zeta-f_0)^{N-1}\phi_N=f_1^{N-1}\phi_1$; combining it with the seam relation gives the equation for the nonzero augmented-state eigenvalues:
\begin{equation}
\zeta^S(\zeta-f_0)^N=f_1^N.
\label{eq:app-seam-char}
\end{equation}
Under the conditions below, a kernel stable on the open chain remains unstable on the loop under either feedback convention. Suppose $0<f_0<1$, $|f_1|>1-f_0$, and $f_1^N>0$. The open-chain spectral radius is $f_0<1$, whereas the same-lap loop spectral radius is $f_0+|f_1|>1$. For cross-lap feedback, $x^S(x-f_0)^N$ is strictly increasing for $x>1$, starting below $f_1^N$ at $x=1$ and tending to infinity; thus \eqref{eq:app-seam-char} has a unique real root $x^*>1$. For any root with $|\zeta|>1$, the reverse triangle inequality gives $|\zeta|^S(|\zeta|-f_0)^N\le|f_1|^N$, so $|\zeta|\le x^*$. Thus $x^*$ is the spectral radius of the augmented-state transition matrix and the one-step growth factor of its fastest-growing mode. With the kernel and $N$ fixed, $x^*$ decreases with $S$, and $1<x^*<f_0+|f_1|$ for $S>0$. Thus, for this parameter class, \textbf{cross-lap feedback slows the loop instability but does not remove the open-chain/loop contrast}.

For the numerical case in Section~\ref{subsec:num-boundary} with $K_2=(0.25,-0.8)$ and $N=S=12$, the open chain is stable with spectral radius $0.25$, the same-lap loop is unstable with spectral radius $1.05$, and the loop under cross-lap feedback remains unstable with $x^*\approx1.028$. Under the stated white-noise forcing, the two unstable loops have asymptotic variance growth factors of approximately $1.103$ and $1.057$ per passage, respectively. This example confirms that the loop instability is not an artifact of the same-lap feedback assumption.

For the three-tap examples in Section~\ref{subsec:loop-stability-xuan}, physical neighboring-bus inputs require both $\varepsilon_{N,s-S}$ in bus~1's leader term and $\varepsilon_{1,s+S}$ in bus~$N$'s follower term, with the latter predicted exactly. For $N=S=3$, reindex the state as $\mathbf y_s=(\varepsilon_{1,s+2},\varepsilon_{2,s+1},\varepsilon_{3,s})^\top$. The disturbance-free equations become
\[
(\mathbf I-f_{-1}\mathbf P^\top)\mathbf y_{s+1}
=f_0\mathbf y_s+f_1\mathbf P\mathbf y_{s-1}.
\]
For either kernel, the matrix on the left is invertible, and the resulting six-dimensional state transition has eigenvalues given by the roots of
\[
(1-f_{-1}z_q^{-1})\zeta^2-f_0\zeta-f_1z_q=0,
\qquad q=0,1,2.
\]
The largest root moduli are approximately $1.102879$ for $(f_0,f_1,f_{-1})=(0.5,0.3,0.3)$ and $0.944996$ for $(0.4,-0.5,-0.5)$, giving the two cross-lap spectral radii reported in the main text.

\subsection{Headway and Holding Outputs with Cross-Lap Corrections}\label{app:physical-metrics}

Appendices~\ref{app:physical-metrics}--\ref{app:physical-bounds} retain same-lap feedback and assume exact predictions of all required but unobserved same-lap inputs, including those at the cyclic boundary. We correct only the headway output and demand-compensation term using the actual immediate leader, leaving the kernel inputs unchanged; cross-lap feedback is treated separately in Appendix~\ref{app:seam-delay}.

We use the superscript $\mathrm{xc}$ for outputs and metrics with cross-lap corrections. At bus~1, the demand-compensation term of the holding law uses the preceding-lap leader deviation $\varepsilon_{N,s-S}$, giving
\[
D_{1,s}^{\mathrm{xc}}-d_s=-(1+\beta)\varepsilon_{1,s}
+\beta\varepsilon_{N,s-S}+\mathbf e_1^\top\mathbf A_\ell\boldsymbol\varepsilon_s.
\]
The compensation cancels demand at every bus, preserving the same-lap state dynamics \eqref{eq:loop-dynamics}. Let $\omega_s=\varepsilon_{N,s-S}-\varepsilon_{N,s}$. Using the same-lap output matrices $\mathbf G_\ell$ and $\mathbf M_\ell$ from Section~\ref{subsec:loop-performance-metrics}, the outputs with cross-lap corrections are
\begingroup
\begin{equation}
\boldsymbol\delta_s^{\mathrm{xc}}=\mathbf G_\ell\boldsymbol\varepsilon_s-\mathbf e_1\omega_s,\qquad
\mathbf D_s^{\mathrm{xc}}-d_s\mathbf1=\mathbf M_\ell\boldsymbol\varepsilon_s+\beta\mathbf e_1\omega_s.
\label{eq:physical-outputs}
\end{equation}
\endgroup
Only the first component of each output vector differs from the corresponding same-lap output, regardless of kernel support; its covariance with other deviations depends on the full kernel. Write $J_{\delta,s}^{\mathrm{xc}}$ and $J_{D,s}^{\mathrm{xc}}$ for the fleet-average variances of these outputs divided by $\sigma^2$, and retain $J_{\delta,s}^{\ell}$ and $J_{D,s}^{\ell}$ for the same-lap metrics.

The output identities \eqref{eq:physical-outputs} apply without a stability assumption. For $s\ge S$, finite-horizon variances with cross-lap corrections follow from them using $\boldsymbol\Sigma_s=\Var(\boldsymbol\varepsilon_s)$, $\boldsymbol\Sigma_{s-S}$, and $\operatorname{Cov}(\boldsymbol\varepsilon_s,\boldsymbol\varepsilon_{s-S})=\mathbf A_\ell^S\boldsymbol\Sigma_{s-S}$. During the first circuit ($0\le s<S$), we set the required preceding-circuit deviations $\varepsilon_{N,s-S}$ to zero.

\begin{proposition}[Exact cross-lap variance corrections]\label{prop:physical-corrections}
Assume strictly stable dynamics, $\max_q|\lambda_q|<1$, and put $v_q=(1-|\lambda_q|^2)^{-1}$ and $\Delta J_X=J_{X,\infty}^{\mathrm{xc}}-J_{X,\infty}^{\ell}$, $X\in\{\delta,D\}$. Then
\begin{align}
\Delta J_\delta&=\frac2{N^2}\sum_qv_q\operatorname{Re}\{z_q^{-1}(1-\lambda_q^S)\},\notag\\
\Delta J_D&=\frac{2\beta}{N^2}\sum_qv_q\operatorname{Re}\{z_q^{-1}(1+\beta-\lambda_q)(1-\lambda_q^S)\}.
\label{eq:physical-corrections}
\end{align}
Both corrections vanish for SC with $F(z)=f_0$, $|f_0|<1$.
\end{proposition}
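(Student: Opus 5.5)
The plan is to exploit the fact that, under strict stability, the steady-state same-lap state covariance is circulant, so every entry needed for the correction has a one-line Fourier expression. By Proposition~\ref{prop:loop-modal-decomposition} and the modal variance computation in the proof of Proposition~\ref{prop:loop-performance-metrics}, the stationary covariance is
\[
\boldsymbol\Sigma=\sigma^2\mathbf U\diag(v_0,\ldots,v_{N-1})\mathbf U^{*}.
\]
Since the same-lap dynamics are unchanged and the disturbances are uncorrelated across control points, the lagged covariance is $\operatorname{Cov}(\boldsymbol\varepsilon_s,\boldsymbol\varepsilon_{s-S})=\mathbf A_\ell^S\boldsymbol\Sigma=:\boldsymbol\Gamma$. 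This equals $\sigma^2\mathbf U\diag(\lambda_q^Sv_q)\mathbf U^{*}$. Reading off entries with $(\mathbf u_q)_n=e^{\mathrm i(n-1)\theta_q}/\sqrt N$ gives
\[
\Sigma_{jk}=\frac{\sigma^2}{N}\sum_q v_q e^{\mathrm i(j-k)\theta_q},\qquad
\Gamma_{jk}=\frac{\sigma^2}{N}\sum_q \lambda_q^S v_q e^{\mathrm i(j-k)\theta_q}.
\]
In particular, $e^{\mathrm i(1-N)\theta_q}=e^{\mathrm i\theta_q}=z_q^{-1}$ in the $(1,N)$ entries. I would also record that conjugate symmetry $\lambda_{N-q}=\overline{\lambda_q}$, $v_{N-q}=v_q$ makes these sums real, so real parts can be inserted freely.

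Next, use \eqref{eq:physical-outputs}. Only the first components change: $\delta_1^{\mathrm{xc}}=\bar\delta_1-\omega$ and $D_1^{\mathrm{xc}}-d_s=(\bar D_1-d_s)+\beta\omega$. Hence
\[
N\sigma^2\Delta J_\delta=\Var(\omega)-2\operatorname{Cov}(\bar\delta_1,\omega),\qquad
N\sigma^2\Delta J_D=2\beta\operatorname{Cov}(\bar D_1,\omega)+\beta^2\Var(\omega).
\]
With $\omega=\varepsilon_{N,s-S}-\varepsilon_{N,s}$ and stationarity, $\Var(\omega)=2(\Sigma_{NN}-\Gamma_{NN})$. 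For the headway cross term, $\bar\delta_1=\varepsilon_1-\varepsilon_N$ gives $\operatorname{Cov}(\bar\delta_1,\omega)=\Gamma_{1N}-\Sigma_{1N}-\Gamma_{NN}+\Sigma_{NN}$. The diagonal terms cancel, leaving $N\sigma^2\Delta J_\delta=2(\Sigma_{1N}-\Gamma_{1N})$, which matches the form $2\mathbf e_1^\top(\boldsymbol\Sigma-\boldsymbol\Gamma)\mathbf e_N$ used later in \eqref{eq:input-error-metrics}. Substituting the Fourier entries yields $\frac{2}{N^2}\sum_q v_q z_q^{-1}(1-\lambda_q^S)$, whose real part is the stated formula.

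For holding, write $\bar D_1-d_s=\mathbf e_1^\top\mathbf M_\ell\boldsymbol\varepsilon_s$. Then $\operatorname{Cov}(\bar D_1,\omega)=\mathbf e_1^\top\mathbf M_\ell(\boldsymbol\Gamma-\boldsymbol\Sigma)\mathbf e_N$, which is again diagonalized by $\mathbf U$ with multiplier $\mu_q(\lambda_q^S-1)v_q$ before the $z_q^{-1}$ phase. Adding $\beta^2\Var(\omega)=\frac{2\beta^2\sigma^2}{N}\sum_q v_q(1-\lambda_q^S)$ and using $\mu_q=\lambda_q-(1+\beta)+\beta z_q$ should collapse everything to a single sum. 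The main obstacle, and the step I would check most carefully, is showing that $-\mu_q z_q^{-1}+\beta=z_q^{-1}(1+\beta-\lambda_q)$: the $\beta z_q$ term of $\mu_q$ times $z_q^{-1}$ must cancel against the $\beta^2\Var(\omega)$ contribution. I would also verify the sign and conjugation conventions, that is, whether the $(1,N)$ entry carries $z_q^{-1}$ or $z_q$; by the conjugate symmetry this only matters up to taking real parts.

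Finally, for SC, $\lambda_q=f_0$ for all $q$ and $v_q$ is constant. Both corrections then become a constant multiple of $\operatorname{Re}\sum_q z_q^{-1}$, which vanishes because $\sum_q z_q^{-1}=0$ for $N\ge2$.
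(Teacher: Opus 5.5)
Your proposal is correct and is essentially the paper's proof: both reduce the correction to the $(1,N)$ entries of $\boldsymbol\Sigma_\infty$ and $\mathbf A_\ell^S\boldsymbol\Sigma_\infty$ and then diagonalize by $\mathbf U$. Your explicit cancellation of the $\beta z_q$ term is the unpacked form of the paper's compact factor, since $\mathbf e_1^\top[(1+\beta)\mathbf I-\mathbf A_\ell]=-\mathbf e_1^\top\mathbf M_\ell+\beta\mathbf e_N^\top$.

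One side remark should be dropped. The choice between $z_q$ and $z_q^{-1}$ does not "only matter up to real parts," because $\mathbf A_\ell^S\boldsymbol\Sigma_\infty$ is not symmetric ($\Gamma_{1N}\ne\Gamma_{N1}$ in general). Your computation $e^{\mathrm i(1-N)\theta_q}=z_q^{-1}$ already settles this correctly, so the proof stands.
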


\begin{proof}
Let $\boldsymbol\Sigma_\infty=\lim_{s\to\infty}\Var(\boldsymbol\varepsilon_s)$ be the stationary covariance of the loop state. In steady state, $\operatorname{Cov}(\boldsymbol\varepsilon_s,\boldsymbol\varepsilon_{s-S})=\mathbf A_\ell^S\boldsymbol\Sigma_\infty$. Current and preceding-lap deviations of bus~$N$ have equal variances, so their substitution changes only the cross-covariance terms:
\[
\begin{aligned}
\Delta J_\delta&=\frac2{N\sigma^2}\mathbf e_1^\top(\mathbf I-\mathbf A_\ell^S)\boldsymbol\Sigma_\infty\mathbf e_N,\\
\Delta J_D&=\frac{2\beta}{N\sigma^2}\mathbf e_1^\top[(1+\beta)\mathbf I-\mathbf A_\ell]
(\mathbf I-\mathbf A_\ell^S)\boldsymbol\Sigma_\infty\mathbf e_N.
\end{aligned}
\]
Fourier diagonalization gives \eqref{eq:physical-corrections}; under SC, $\lambda_q=f_0$ and $\sum_qz_q^{-1}=0$, so $\Delta J_\delta=\Delta J_D=0$.
\end{proof}

The corrections can have either sign. Their $\lambda_q^S$ terms preserve exact modal evaluation but prevent direct reuse of the phase-alignment and quadratic-root formulas derived in Appendix~\ref{app:loop-optimal-design} for the optimum in Section~\ref{subsec:loop-modal-optimization}.

\subsection{Cross-Lap Corrections with a Neutral Common Mode \texorpdfstring{($\lambda_0=1$)}{(lambda0=1)}}\label{app:physical-neutral}

Proposition~\ref{prop:physical-corrections} in Appendix~\ref{app:physical-metrics} assumes that every mode is strictly stable. We now consider $\lambda_0=1$ with all nonzero modes stable, as in headway-based control. The schedule-deviation variance then diverges under persistent disturbances, but headway and holding variances still converge with cross-lap corrections.

Write $\bar\varepsilon_s=N^{-1}\sum_{n=1}^N\varepsilon_{n,s}$ for the fleet-average schedule deviation. Its increments are independent, each with variance $\sigma^2/N$. For $s\ge S$, the difference $\bar\varepsilon_s-\bar\varepsilon_{s-S}$ contains only the latest $S$ increments and has variance $S\sigma^2/N$, independent of the number of circuits completed. Same-lap adjacent differences cancel this common offset. At the seam, however, its contributions to bus~$1$'s headway and holding outputs with cross-lap corrections are $\bar\varepsilon_s-\bar\varepsilon_{s-S}$ and $\beta(\bar\varepsilon_{s-S}-\bar\varepsilon_s)$, respectively. For this mode, all other buses have zero output contributions. Averaging the output variances over $N$ buses and dividing by $\sigma^2$ therefore gives
\[
\frac{S}{N^2},\qquad\frac{\beta^2S}{N^2},
\]
respectively. These are fixed variance contributions, not amounts accumulated each circuit.

The limiting variances with cross-lap corrections are obtained by adding these constants to the corresponding nonzero-mode contributions: retain only $q=1,\ldots,N-1$ in both the formulas of Proposition~\ref{prop:loop-performance-metrics} and the corrections in \eqref{eq:physical-corrections}. Under the stated disturbance assumptions, different modes are uncorrelated, so their variance contributions add.

Within the class $\lambda_0=1$, the holding objective is shifted by $\beta^2S/N^2$ without changing its minimizer, while the headway-variance budget available to relative modes is reduced by $S/N^2$. The nonzero-mode corrections depend on the control coefficients and must still enter optimization with cross-lap corrections.

\subsection{Bounds and Design Guarantees after Cross-Lap Corrections}\label{app:physical-bounds}\label{app:physical-design}

Proposition~\ref{prop:physical-uniform-bounds} bounds the performance errors due to cross-lap corrections; Proposition~\ref{prop:physical-design-transfer} then guarantees feasibility and bounds the excess holding cost of Proposition~\ref{prop:loop-optimal-design}'s closed-form design, computed with a tightened headway-variance budget, relative to the optimum with cross-lap corrections.

\begin{proposition}[Bounds under a schedule budget]\label{prop:physical-uniform-bounds}
For any strictly stable cyclic kernel satisfying $J_{\varepsilon,\infty}^{\ell}\le R<\infty$, the following bounds hold uniformly in $S$:
\begin{equation}
\begin{aligned}
|\Delta J_\delta|&\le\epsilon_\delta:=\frac{2}{N}\bigl[R-1+\sqrt{R(R-1)}\bigr],\\
|\Delta J_D|&\le\epsilon_D:=\beta(2+\beta)\epsilon_\delta.
\end{aligned}
\label{eq:physical-uniform-bounds}
\end{equation}
\end{proposition}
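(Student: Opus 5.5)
The plan is to reduce both corrections to cross-covariances of the stationary loop state and then bound each cross-covariance by Cauchy--Schwarz, using the identity $\boldsymbol\Sigma_\infty=\mathbf A_\ell\boldsymbol\Sigma_\infty\mathbf A_\ell^\top+\sigma^2\mathbf I$. I will start from the matrix expressions in the proof of Proposition~\ref{prop:physical-corrections}:
\[
\Delta J_\delta=\frac{2}{N\sigma^2}\bigl(X_0-X_S\bigr),\qquad X_k:=\mathbf e_1^\top\mathbf A_\ell^k\boldsymbol\Sigma_\infty\mathbf e_N .
\]
Because $\mathbf A_\ell$ is circulant, $\boldsymbol\Sigma_\infty$ is circulant. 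Every diagonal entry of $\boldsymbol\Sigma_\infty$ therefore equals $\sigma^2J^{\ell}_{\varepsilon,\infty}\le\sigma^2R$. From the Lyapunov identity, every diagonal entry of $\mathbf A_\ell\boldsymbol\Sigma_\infty\mathbf A_\ell^\top$ equals $\sigma^2(J^{\ell}_{\varepsilon,\infty}-1)\le\sigma^2(R-1)$. The same holds for $\mathbf A_\ell^k\boldsymbol\Sigma_\infty(\mathbf A_\ell^\top)^k$ for every $k\ge1$, since it is dominated by $\mathbf A_\ell\boldsymbol\Sigma_\infty\mathbf A_\ell^\top$ in the Loewner order.

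For $X_0$, I write the off-diagonal entry as $X_0=\mathbf e_1^\top\mathbf A_\ell\boldsymbol\Sigma_\infty\mathbf A_\ell^\top\mathbf e_N$; the identity term contributes nothing because $N\ge2$. This is a covariance of two quantities, each with variance at most $\sigma^2(R-1)$, so $|X_0|\le\sigma^2(R-1)$.

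For $X_S$ with $S\ge1$, I interpret it as $\operatorname{Cov}(\mathbf e_1^\top\mathbf A_\ell^S\boldsymbol\varepsilon,\ \mathbf e_N^\top\boldsymbol\varepsilon)$ under the stationary law. Cauchy--Schwarz then gives $|X_S|\le\sigma^2\sqrt{(R-1)R}$. Adding the two bounds yields $|\Delta J_\delta|\le\frac2N[R-1+\sqrt{R(R-1)}]=\epsilon_\delta$. Nothing here depends on $S$, which gives the claimed uniformity.

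For $\Delta J_D$, I expand
\[
\Delta J_D=\frac{2\beta}{N\sigma^2}\Bigl[(1+\beta)(X_0-X_S)-(X_1-X_{S+1})\Bigr].
\]
The first bracketed piece is already controlled by $(1+\beta)\cdot\tfrac N2\epsilon_\delta\sigma^2$. It therefore suffices to show $|X_1-X_{S+1}|\le\tfrac N2\epsilon_\delta\sigma^2=\sigma^2[R-1+\sqrt{R(R-1)}]$, since the total factor $(1+\beta)+1=2+\beta$ then gives exactly $\epsilon_D=\beta(2+\beta)\epsilon_\delta$.

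To handle $X_{S+1}$, I apply Cauchy--Schwarz as for $X_S$, which gives $|X_{S+1}|\le\sigma^2\sqrt{R(R-1)}$. The term $X_1$ requires an $(R-1)$-type bound. I will substitute the Lyapunov identity into $\mathbf e_1^\top\mathbf A_\ell\boldsymbol\Sigma_\infty\mathbf e_N$, writing it as $\mathbf e_1^\top\mathbf A_\ell\mathbf e_N\sigma^2+\mathbf e_1^\top\mathbf A_\ell^2\boldsymbol\Sigma_\infty\mathbf A_\ell^\top\mathbf e_N$. Alternatively, I can interpret it as a covariance between $\varepsilon_{1,s+1}$ minus its fresh noise and $\varepsilon_{N,s}$, and try to peel off one step so that both factors are "post-transition" quantities with variance at most $\sigma^2(R-1)$.

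\textbf{The main obstacle} is obtaining this sharp $(R-1)$ constant for $X_1$, because the direct Cauchy--Schwarz estimate only gives $\sqrt{R(R-1)}$. If the peeling step fails, my first fallback is the modal route. I would write $X_k=\frac{\sigma^2}{N}\sum_q v_q\lambda_q^k z_q^{-1}$ and bound $|X_1-X_{S+1}|$ by grouping the terms as $\lambda_q(1-\lambda_q^S)$ and using $|\lambda_q|^2v_q=v_q-1$ together with $\frac1N\sum_q v_q\le R$. The aim is to reproduce the same $R-1+\sqrt{R(R-1)}$ structure, with one factor $|\lambda_q|$ absorbed into the $v_q-1$ weight and the other term handled by Cauchy--Schwarz over the modes.
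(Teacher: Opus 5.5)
There is a genuine gap in the holding-cost part. Your reduction of $\Delta J_\delta$ and $\Delta J_D$ to $X_0,X_1,X_S,X_{S+1}$ is right. So are your bounds $|X_0|\le\sigma^2(R-1)$ and $|X_S|\le\sigma^2\sqrt{R(R-1)}$, and hence the headway bound; these are the covariance-language versions of the paper's $\mathbf C-\mathbf I\succeq0$ argument and its modal Cauchy--Schwarz bound for lag $k\ge1$.

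The problem is that your plan needs $|X_1|\le\sigma^2(R-1)$, and that inequality is false, so neither peeling nor any other argument can deliver it. Take $F(z)=\lambda z$ with $0<\lambda<1$ and any $N\ge2$. Then $\mathbf A_\ell=\lambda\mathbf P$ and $\boldsymbol\Sigma_\infty=\sigma^2(1-\lambda^2)^{-1}\mathbf I$, so the tightest admissible budget is $R=J^{\ell}_{\varepsilon,\infty}=(1-\lambda^2)^{-1}$. This gives $X_1=\sigma^2\lambda(\mathbf P)_{1N}/(1-\lambda^2)=\sigma^2\lambda/(1-\lambda^2)=\sigma^2\sqrt{R(R-1)}$, which strictly exceeds $\sigma^2(R-1)=\sigma^2\lambda^2/(1-\lambda^2)$. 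Your peeling identity shows why: it leaves the term $\sigma^2(\mathbf A_\ell)_{1N}$, which $R-1$ does not control. So the Cauchy--Schwarz constant you found unsatisfactory for $X_1$ is in fact sharp.

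The fix is to swap the allocation, which is what the paper does. Its four bounds are $(1+\beta)a$, $b$, $(1+\beta)b$, $a$ with $a=R-1$ and $b=\sqrt{R(R-1)}$; the last uses $S+1\ge2$.

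\begin{itemize}
\item Bound $X_1$ by $\sigma^2\sqrt{R(R-1)}$, exactly as you bounded $X_S$.
\item Bound $X_{S+1}$ by $\sigma^2(R-1)$. Since $\mathbf A_\ell$ and $\boldsymbol\Sigma_\infty$ are commuting circulants, $X_{S+1}=\mathbf e_1^\top\mathbf A_\ell^S\boldsymbol\Sigma_\infty\mathbf A_\ell\mathbf e_N=\operatorname{Cov}(\mathbf e_1^\top\mathbf A_\ell^S\boldsymbol\varepsilon,\ \mathbf e_N^\top\mathbf A_\ell^\top\boldsymbol\varepsilon)$. Both variances are at most $\sigma^2(R-1)$, because $\mathbf A_\ell^\top\boldsymbol\Sigma_\infty\mathbf A_\ell=\mathbf A_\ell\boldsymbol\Sigma_\infty\mathbf A_\ell^\top=\boldsymbol\Sigma_\infty-\sigma^2\mathbf I$, together with your Loewner domination for the first factor. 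Modally, this is $|\lambda_q|^{S+1}v_q\le|\lambda_q|^2v_q=v_q-1$.
\end{itemize}

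Your modal fallback works only if organized this way. Reaching the $v_q-1$ weight needs two factors of $|\lambda_q|$, and only the $\lambda_q^{S+1}$ term supplies them. The lone $\lambda_q$ term must instead go through $|\lambda_q|v_q=\sqrt{v_q(v_q-1)}$ and Cauchy--Schwarz over the modes.

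With this swap, $|X_1-X_{S+1}|\le\sigma^2\bigl[\sqrt{R(R-1)}+R-1\bigr]$. Your final assembly into $\beta(2+\beta)\epsilon_\delta$ then goes through unchanged.
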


\begingroup
\begin{proof}
Set $\mathbf C=\boldsymbol\Sigma_\infty/\sigma^2$ and $v_q=(1-|\lambda_q|^2)^{-1}$. The Fourier decomposition gives $\mathbf C=\mathbf U\diag(v_q)\mathbf U^*$, whose diagonal entries all equal $\bar v:=N^{-1}\sum_qv_q=J_{\varepsilon,\infty}^{\ell}\le R$. Since $v_q\ge1$, the matrix $\mathbf C-\mathbf I$ is positive semidefinite. For $n\ne n'$, the positive-semidefinite covariance bound gives $|C_{nn'}|^2\le(C_{nn}-1)(C_{n'n'}-1)=(\bar v-1)^2$, hence $|C_{nn'}|\le R-1$.

Next, $\mathbf A_\ell^k\mathbf C$ is the normalized lag-$k$ covariance matrix. Using $|U_{nq}|=1/\sqrt N$, its entries satisfy the triangle-inequality bound
\[
| (\mathbf A_\ell^k\mathbf C)_{nn'} |
=\left|\sum_q U_{nq}\lambda_q^kv_q\overline{U_{n'q}}\right|
\le\frac1N\sum_q|\lambda_q|^kv_q.
\]
For $k\ge1$, stability gives $|\lambda_q|^k\le|\lambda_q|$, while $|\lambda_q|v_q=\sqrt{v_q(v_q-1)}$. Cauchy--Schwarz therefore yields
\[
\frac1N\sum_q|\lambda_q|^kv_q
\le\frac1N\sum_q\sqrt{v_q(v_q-1)}
\le\sqrt{\left(\frac1N\sum_qv_q\right)\left(\frac1N\sum_q(v_q-1)\right)}
\le\sqrt{R(R-1)}.
\]
For $k\ge2$, the sharper inequality $|\lambda_q|^k\le|\lambda_q|^2$ gives $N^{-1}\sum_q|\lambda_q|^kv_q\le N^{-1}\sum_q(v_q-1)=\bar v-1\le R-1$.

Apply these bounds to the $(1,N)$ entries in the correction formulas of Appendix~\ref{app:physical-metrics}. Write $a=R-1$ and $b=\sqrt{R(R-1)}$. Since $S\ge1$, the headway correction satisfies $|\Delta J_\delta|\le2(a+b)/N=\epsilon_\delta$. For holding, expand
\[
[(1+\beta)\mathbf I-\mathbf A_\ell](\mathbf I-\mathbf A_\ell^S)\mathbf C
=(1+\beta)\mathbf C-\mathbf A_\ell\mathbf C
-(1+\beta)\mathbf A_\ell^S\mathbf C+\mathbf A_\ell^{S+1}\mathbf C.
\]
The four $(1,N)$ entries on the right are bounded in modulus by $(1+\beta)a$, $b$, $(1+\beta)b$, and $a$, respectively; the last uses $S+1\ge2$. Thus
\[
|\Delta J_D|\le\frac{2\beta}{N}\bigl[(1+\beta)a+b+(1+\beta)b+a\bigr]
=\beta(2+\beta)\epsilon_\delta.
\]
These bounds do not depend on $S$ or the individual kernel coefficients.
\end{proof}
\endgroup

Thus a common schedule budget bounds the fleet-average discrepancy by $O(N^{-1})$, independently of kernel support. Note that this is not a uniform claim without such a budget.

Let $\mathcal D$ be a class of cyclic kernels on which the uniform error bounds $|J_{\delta,\infty}^{\mathrm{xc}}-J_{\delta,\infty}^{\ell}|\le\epsilon_\delta$ and $|J_{D,\infty}^{\mathrm{xc}}-J_{D,\infty}^{\ell}|\le\epsilon_D$ hold. The class $\mathcal D$ contains the reference optimizer $\widehat F$ defined below and the comparison kernels $F$, which satisfy the original budgets after cross-lap corrections.

\begin{proposition}[Feasibility and an additive design guarantee]\label{prop:physical-design-transfer}
Let $\widehat F\in\mathcal D$ solve the reference problem with $J_{\varepsilon,\infty}^{\ell}\le R_\varepsilon$ and $J_{\delta,\infty}^{\ell}\le R_\delta-\epsilon_\delta$, omitting absent constraints. Assume feasibility and nonnegative multipliers satisfying complementary slackness at $\widehat F$ and the global inequality
\begingroup
\[
J_{D,\infty}^{\ell}(F)+\eta_\varepsilon[J_{\varepsilon,\infty}^{\ell}(F)-R_\varepsilon]
+\eta_\delta[J_{\delta,\infty}^{\ell}(F)-R_\delta+\epsilon_\delta]
\ge J_{D,\infty}^{\ell}(\widehat F),\quad F\in\mathcal D.
\]
\endgroup
Then $\widehat F$ satisfies the original budgets after accounting for cross-lap corrections. Let $V^{\mathrm{xc}}$ be the infimum of $J_{D,\infty}^{\mathrm{xc}}$ over kernels in $\mathcal D$ satisfying those original budgets. Then
\begin{equation}
0\le J_{D,\infty}^{\mathrm{xc}}(\widehat F)-V^{\mathrm{xc}}
\le2\epsilon_D+2\eta_\delta\epsilon_\delta.
\label{eq:physical-design-gap}
\end{equation}
Schedule-only design has $\eta_\delta=0$, requires no tightening, and has gap at most $2\epsilon_D$.
\end{proposition}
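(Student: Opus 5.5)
The argument is a standard perturbation-of-constraints one: transfer feasibility through the uniform error bounds, then sandwich the optimal corrected cost between same-lap quantities using the assumed global Lagrangian inequality. Throughout, I use only that $\widehat F$ and every comparison kernel lie in $\mathcal D$. On $\mathcal D$, the same-lap and corrected headway and holding metrics differ by at most $\epsilon_\delta$ and $\epsilon_D$. The schedule metric is unaffected by cross-lap corrections, since Remark~\ref{rem:physical-outputs} shows that only the headway and holding outputs change.

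\emph{Feasibility.} By assumption, $\widehat F$ is feasible for the reference problem, so $J_{\delta,\infty}^{\ell}(\widehat F)\le R_\delta-\epsilon_\delta$. Adding the uniform error bound gives $J_{\delta,\infty}^{\mathrm{xc}}(\widehat F)\le J_{\delta,\infty}^{\ell}(\widehat F)+\epsilon_\delta\le R_\delta$. The schedule constraint $J_{\varepsilon,\infty}^{\ell}(\widehat F)\le R_\varepsilon$ carries over unchanged. Hence $\widehat F$ is admissible in the definition of $V^{\mathrm{xc}}$, and the left inequality $0\le J_{D,\infty}^{\mathrm{xc}}(\widehat F)-V^{\mathrm{xc}}$ follows immediately.

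\emph{Upper bound.} Take any $F\in\mathcal D$ that meets the original corrected budgets. Then $J_{\delta,\infty}^{\ell}(F)\le J_{\delta,\infty}^{\mathrm{xc}}(F)+\epsilon_\delta\le R_\delta+\epsilon_\delta$, so the reference headway constraint is violated by at most $2\epsilon_\delta$:
\[
J_{\delta,\infty}^{\ell}(F)-R_\delta+\epsilon_\delta\le 2\epsilon_\delta .
\]
The schedule term satisfies $\eta_\varepsilon[J_{\varepsilon,\infty}^{\ell}(F)-R_\varepsilon]\le0$. Substituting both facts into the assumed global inequality yields
\[
J_{D,\infty}^{\ell}(\widehat F)\le J_{D,\infty}^{\ell}(F)+2\eta_\delta\epsilon_\delta .
\]
Converting each side to corrected costs, $J_{D,\infty}^{\mathrm{xc}}(\widehat F)\le J_{D,\infty}^{\ell}(\widehat F)+\epsilon_D$ and $J_{D,\infty}^{\ell}(F)\le J_{D,\infty}^{\mathrm{xc}}(F)+\epsilon_D$. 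Chaining these gives
\[
J_{D,\infty}^{\mathrm{xc}}(\widehat F)\le J_{D,\infty}^{\mathrm{xc}}(F)+2\epsilon_D+2\eta_\delta\epsilon_\delta
\]
for every admissible $F$. Taking the infimum over such $F$ gives \eqref{eq:physical-design-gap}. If no admissible $F$ existed this would be vacuous, but $\widehat F$ itself is admissible, so $V^{\mathrm{xc}}$ is finite.

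\emph{Schedule-only case.} With the headway constraint absent, $\eta_\delta=0$ and no tightening is needed. The argument above then runs with the headway term omitted, and the gap is at most $2\epsilon_D$.

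The only delicate point is the constraint-violation step. The slack available is $2\epsilon_\delta$ rather than $\epsilon_\delta$, because the reference budget was tightened by $\epsilon_\delta$ while the comparison kernel may exceed its same-lap budget by another $\epsilon_\delta$. Also, complementary slackness at $\widehat F$ is what makes the right-hand side of the global inequality the bare cost $J_{D,\infty}^{\ell}(\widehat F)$.
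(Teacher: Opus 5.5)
Your proposal is correct and follows the paper's proof step for step. Feasibility of $\widehat F$ comes from the tightened headway budget plus the uniform error bound; any comparison kernel feasible after cross-lap corrections violates the tightened budget by at most $2\epsilon_\delta$; this goes into the assumed global inequality with $\eta_\varepsilon[\cdot]\le0$; and one application of the holding error bound to each design precedes taking the infimum, with the schedule-only case obtained by setting $\eta_\delta=0$.
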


\begingroup
\begin{proof}
The tightened budget and the headway error bound ensure that $\widehat F$ satisfies all imposed original budgets.

Now take any $F\in\mathcal D$ feasible under those budgets with cross-lap corrections. Its reference headway metric satisfies $J_{\delta,\infty}^{\ell}(F)\le J_{\delta,\infty}^{\mathrm{xc}}(F)+\epsilon_\delta\le R_\delta+\epsilon_\delta$, so it exceeds the tightened budget by at most $2\epsilon_\delta$. When imposed, its schedule-constraint term is nonpositive. Using nonnegative multipliers in the assumed global inequality therefore gives $J_{D,\infty}^{\ell}(\widehat F)\le J_{D,\infty}^{\ell}(F)+2\eta_\delta\epsilon_\delta$. Applying the holding error bound once to each design yields
\[
J_{D,\infty}^{\mathrm{xc}}(\widehat F)
\le J_{D,\infty}^{\ell}(\widehat F)+\epsilon_D
\le J_{D,\infty}^{\mathrm{xc}}(F)+2\epsilon_D+2\eta_\delta\epsilon_\delta.
\]
Taking the infimum over all such feasible $F$ proves the upper bound in \eqref{eq:physical-design-gap}; feasibility of $\widehat F$ gives the lower bound. If the headway constraint is absent, set $\eta_\delta=0$ and omit its budget tightening.
\end{proof}

The global inequality required by Proposition~\ref{prop:physical-design-transfer} is guaranteed by Proposition~\ref{prop:loop-optimal-design} for the full cyclic design. Let $\mathcal L(F)$ denote its left-hand side with fixed multipliers. Under a finite schedule budget,
\[
\mathcal L(F)=\frac1N\sum_q\frac{|\lambda_q-g_q|^2+W_q}{1-|\lambda_q|^2}
-\eta_\varepsilon R_\varepsilon-\eta_\delta(R_\delta-\epsilon_\delta),
\qquad W_q=\eta_\varepsilon+\eta_\delta h_q.
\]
Equations~\eqref{eq:modal-rstar}--\eqref{eq:modal-lambdastar} minimize each summand (Appendix~\ref{app:loop-optimal-design}); complementary slackness then gives $\mathcal L(F)\ge\mathcal L(\widehat F)=J_{D,\infty}^{\ell}(\widehat F)$. For headway-only design, apply this argument to nonzero modes and minimize the common-mode holding term at $\lambda_0=1$.

Proposition~\ref{prop:physical-design-transfer} allows the closed-form design of Proposition~\ref{prop:loop-optimal-design} to be used with a tightened headway-variance budget, guaranteeing feasibility and the optimality-gap bound \eqref{eq:physical-design-gap} after cross-lap corrections. This avoids a more complex optimization that may lack a closed-form solution. Under a finite schedule budget, take $\mathcal D$ as all strictly stable cyclic kernels with $J_{\varepsilon,\infty}^{\ell}\le R_\varepsilon$. Obtain the uniform errors from Proposition~\ref{prop:physical-uniform-bounds} with $R=R_\varepsilon$ and use $R_\delta-\epsilon_\delta$ in Proposition~\ref{prop:loop-optimal-design}, provided its assumptions hold. The optimum with cross-lap corrections is attained because $v_q\le NR_\varepsilon$ makes the feasible set compact and the cost is continuous.
\endgroup

It remains to obtain error bounds for headway-only design, where schedule variance may diverge and Proposition~\ref{prop:physical-uniform-bounds} does not apply. Bounds on headway and holding variances suffice to extend the guarantee in Proposition~\ref{prop:physical-design-transfer} to this case. Consider kernels with stable nonzero modes and $\lambda_0\in(-1,1]$, including the neutral common mode in Appendix~\ref{app:physical-neutral}. If $J_{\delta,\infty}^{\ell}\le B_\delta$ and $J_{D,\infty}^{\ell}\le B_D$ throughout the comparison domain, valid uniform errors are
\begingroup
\begin{equation}
B_\omega:=S+S^2(\sqrt{B_D}+\beta\sqrt{B_\delta})^2,\quad
\epsilon_\delta:=\frac{B_\omega+2\sqrt{B_\delta B_\omega}}N,\quad
\epsilon_D:=\frac{\beta^2B_\omega+2\beta\sqrt{B_DB_\omega}}N.
\label{eq:physical-cost-sublevel-bounds}
\end{equation}

To derive these errors, first bound the variance of the seam difference $\omega_s$. The limit of $\Var(\omega_s)/\sigma^2$ is the modal average of
\[
\frac{2(1-\operatorname{Re}\lambda^S)}{1-|\lambda|^2}
=\sum_{k=0}^{S-1}|\lambda|^{2k}
+\frac{|1-\lambda^S|^2}{1-|\lambda|^2},
\]
with common-mode value $S$ when $\lambda_0=1$. The first term is at most $S$. For the second, $|1-\lambda^S|\le S|1-\lambda|$ and $1-\lambda_q=-\mu_q+\beta(z_q-1)$ give, by the weighted triangle inequality,
\[
\frac1N\sum_q\frac{|1-\lambda_q|^2}{1-|\lambda_q|^2}
\le\left(\sqrt{J_{D,\infty}^{\ell}}+\beta\sqrt{J_{\delta,\infty}^{\ell}}\right)^2
\le(\sqrt{B_D}+\beta\sqrt{B_\delta})^2,
\]
where the neutral common-mode term is zero by continuity. Hence $\lim_{s\to\infty}\Var(\omega_s)/\sigma^2\le B_\omega$. In \eqref{eq:physical-outputs}, the added $\omega_s$ term changes each variance through its own variance and twice its covariance with the reference output. Cauchy--Schwarz bounds the covariance using $B_\delta$ or $B_D$; division by $N$ then gives \eqref{eq:physical-cost-sublevel-bounds}, since only bus~$1$ changes.
\endgroup

These caps must also cover the designs needed to determine the optimum with cross-lap corrections. Choose a feasible design with holding cost $\Gamma$ under those corrections. Designs costing more than $\Gamma$ cannot improve on it, so restricting comparison to costs at most $\Gamma$ leaves the infimum unchanged. With $a=(N-1)/N$, the $N-1$ unchanged bus outputs imply $J_{\delta,\infty}^{\mathrm{xc}}\ge aJ_{\delta,\infty}^{\ell}$ and $J_{D,\infty}^{\mathrm{xc}}\ge aJ_{D,\infty}^{\ell}$. Thus every remaining feasible design satisfies $J_{\delta,\infty}^{\ell}\le R_\delta/a$ and $J_{D,\infty}^{\ell}\le\Gamma/a$.

Choose $B_\delta\ge R_\delta/a$ and $B_D\ge\Gamma/a$, and verify that the reference optimizer $\widehat F$ also satisfies these caps. Equation~\eqref{eq:physical-cost-sublevel-bounds} then supplies the uniform errors required by Proposition~\ref{prop:physical-design-transfer}. Provided the tightened reference problem satisfies Proposition~\ref{prop:loop-optimal-design}'s assumptions, its closed-form design remains feasible after cross-lap corrections and has the gap bound \eqref{eq:physical-design-gap}, even when schedule variance diverges. The bounds are conservative, so they need not yield a useful guarantee at every parameter setting. For $N=12$, $\beta=0.05$, $S=1$, and $R_\delta=5$, using caps $B_\delta=60/11$ and $B_D=0.25$ ensures the coverage of the tightened optimizer and all competitive designs: $\epsilon_\delta=0.5724$ and $\epsilon_D=0.005183$ yield $J_{\delta,\infty}^{\mathrm{xc}}=4.433$ and an additive gap bound of $0.05317$. With cross-lap corrections included, the tightened analytical design's holding cost exceeds that of the best-found full cyclic numerical design under the original budget by $0.01778$. Section~\ref{subsec:num-frontiers} evaluates less conservative retuning using exact corrections, with smaller observed cost gaps at moderate and loose budgets.

}

\section{Finite-Horizon Boundary Effects}\label{app:finite-horizon}

\subsection{Proof of Proposition~\ref{prop:transient-envelope}}\label{app:transient-envelope}

Under open-chain truncation, $(\mathbf A_{\mathrm o})_{ab}=f_{a-b}$, taking $f_p=0$ outside $-r,\ldots,m$; on the loop, $\mathbf A_\ell=\sum_{p=-r}^{m}f_p\mathbf P^p$. In a $j$-step path contributing to row $n$, each shift lies between $-r$ and $m$, so every intermediate source index lies between $n-mj$ and $n+rj$. If $mj\le n-1$ and $rj\le N-n$, all such paths stay within buses $1,\ldots,N$: truncation removes none and periodic indexing adds no wrap-around paths. Consequently, $\mathbf e_n^\top\mathbf A_{\mathrm o}^j=\mathbf e_n^\top\mathbf A_\ell^j$ for $j\le s_n^\dagger-1$, where $\mathbf e_n$ is the $n$th coordinate vector.

The row $\mathbf e_n^\top\mathbf A^j$ gives the weights on disturbances $j$ passages earlier. Under the initial-state and disturbance assumptions of Proposition~\ref{prop:transient-envelope}, either boundary therefore gives
\[
\Var(\varepsilon_{n,s})=\sigma^2\sum_{j=0}^{s-1}\bigl\lVert\mathbf e_n^\top\mathbf A^j\bigr\rVert_2^2,
\qquad \mathbf A\in\{\mathbf A_{\mathrm o},\mathbf A_\ell\}.
\]
For $s\le s_n^\dagger$, the row identity makes each summand equal under the two boundaries. Circulant symmetry makes the loop variance equal across buses, proving \eqref{eq:coincidence-window}.

Schur stability gives a finite open-chain variance limit, no smaller than the variance at $s_n^\dagger$ because the summands above are nonnegative. Using \eqref{eq:coincidence-window} and \eqref{eq:loop-schedule-finite},
\begingroup
{
\[
\lim_{s\to\infty}\Var(\varepsilon_{n,s})
\ge\Var(\varepsilon_{n,s_n^\dagger})
=\frac{\sigma^2}{N}\sum_{q=0}^{N-1}\sum_{j=0}^{s_n^\dagger-1}|F(z_q)|^{2j}
\ge\frac{\sigma^2}{N}\sum_{j=0}^{s_n^\dagger-1}M^{2j}
=\frac{\sigma^2}{N}\frac{M^{2s_n^\dagger}-1}{M^2-1}.
\]
}
\endgroup
The last inequality keeps a mode attaining $M$; $M>1$ gives the geometric sum. For a fixed kernel with $\sup_{|z|=1}|F(z)|>1$, continuity gives $M\ge1+\eta$ for some fixed $\eta>0$ on sufficiently fine Fourier grids. Hence the bound grows geometrically with $N$ when $s_n^\dagger$ grows proportionally with $N$ and the open chains remain stable.

\end{document}